\keywords{Pattern matching, Term indexing, Tree automata}
\colorlet{lightgray}{gray!50}
\let\oldReturn\Return
\renewcommand{\Return}{\State\oldReturn}
\renewcommand{\gets}{:=}
\begin{document}

\title{Adaptive Non-linear Pattern Matching Automata}

\author[R.~Erkens]{Rick Erkens}
\address{Eindhoven University of Technology, De Groene Loper 5, 5612 AE, Eindhoven, The Netherlands}
\email{r.j.a.erkens@tue.nl}

\author[M.~Laveaux]{Maurice Laveaux}
\email{m.laveaux@tue.nl}

\begin{abstract}
  Efficient pattern matching is fundamental for practical term rewrite engines.
  By preprocessing the given patterns into a finite deterministic automaton the
  matching patterns can be decided in a single traversal of the relevant parts
  of the input term.  Most automaton-based techniques are restricted to linear
  patterns, where each variable occurs at most once, and require an additional
  post-processing step to check so-called variable consistency. However, we can
  show that interleaving the variable consistency and pattern matching phases 
  can reduce the number of required steps to find all matches. Therefore, we take 
  the existing adaptive pattern matching automata as introduced by Sekar et al 
  and extend these with consistency checks. We prove that the resulting deterministic
  pattern matching automaton is correct, and show several examples where some reduction can be achieved.
\end{abstract}

\maketitle

\section{Introduction}\label{sec:introduction}
Term rewriting is a universal model of computation that is used in various applications, for example to evaluate
equalities or simplify expressions in model checking and theorem proving.  In its simplest form, a binary relation on
\emph{terms}, which is described by the \emph{term rewrite system}, defines the available reduction steps.
Term rewriting is then the process of repeatedly applying these reduction steps when applicable.  The fundamental step
in finding which reduction steps are applicable is \emph{pattern matching}.

There are two variants for the pattern matching problem. \emph{Root pattern matching} can be described as follows:
given a term $t$ and a set of patterns, determine the subset of patterns such that these are (syntactically) equal to
$t$ under a suitable substitution for their variables.  The other variant, called \emph{complete pattern matching},
determines the matching patterns for all subterms of $t$.
Root pattern matching is often sufficient for term rewriting, because applying reduction steps can make matches found for subterms obsolete.
A root pattern matching algorithm can also be used to naively solve the complete pattern matching problem by applying it to every subterm.

As the matching patterns need to be decided at each reduction step, various \emph{term indexing}
techniques~\cite{Sekar01:indexing} have been proposed to determine matching patterns efficiently.
An adaptive pattern matching automaton~\cite{SekarRR95:adaptive}, abbreviated as APMA
(plural: APMAs), is a tree-like data structure that is constructed from a set of patterns.
By using such an automaton one can decide the matching patterns by only examining each
function symbol of the input term at most once.
Moreover, it allows for \emph{adaptive} strategies, \ie, matching strategies
that are fixed before construction such as the left-to-right traversal of the automata in~\cite{Graf91:left_to_right}.
The size of an APMA is worst-case exponential in the number of patterns.
However, in practice its size is typically smaller and this construction
step is beneficial when many terms have to be matched against a fixed pattern set.

The APMA approach works for sets of linear patterns,
that is, in every pattern every variable occurs at most once.
As mentioned in other literature~\cite{Graf91:left_to_right, SekarRR95:adaptive} 
the non-linear matching problem can be solved by first preprocessing the patterns,
then solving the linear matching problem and lastly checking so-called \emph{variable consistency}.
We can show that performing matching and consistency checking separately does not
minimise the amount of steps required to find all matches.
Therefore, we extend the existing APMAs to perform consistency checking as part of the matching process.
Our extension preserves the adaptive traversal of~\cite{SekarRR95:adaptive}
and allows information about the matching step to influence the consistency checking, and the other way around.
The influence of consistency checking on the matching step is only beneficial
in a setting where checking (syntactic) term equality,
which is necessary for consistency checking, can be performed in constant time.
This is typically the case in systems where (sub)terms are maximally shared,
which besides constant time equality checks also has the advantage of a compact representation of terms.

We introduce the notion of \emph{consistency automata}, abbreviated as CA (plural: CAs),
to perform the variable consistency check efficiently for a set of patterns.
The practical use of these automata is based on similar observations as the pattern matching automata.
Namely, there may be overlapping consistency constraints for multiple patterns in a set.
We prove the correctness for these consistency automata and provide an analysis of its time
and space complexity.
We prove that the consistency automaton approach yields a correct consistency checking algorithm for non-linear
patterns.
Finally, we introduce \emph{adaptive non-linear pattern matching automata} (ANPMAs),
a combination of adaptive pattern matching automata and consistency automata.
ANPMAs use information from both match and consistency checks to allow the removal of redundant steps.
We show that ANPMAs yield a correct matching algorithm for non-linear patterns.
To this end we also give a correctness proof for the APMA approach from
\cite{SekarRR95:adaptive}, which was not given in the original work.

\subsection{Structure of the paper}
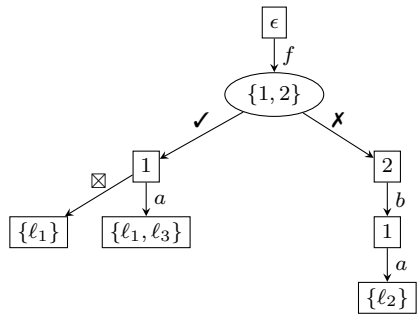
\begin{figure}[b]
\centering
\begin{tikzpicture}[description/.style={fill=white,inner sep=2pt}]
\scriptsize
\matrix (m) [matrix of math nodes, row sep=1.5em,
column sep=1.5em, text height=1.5ex, text depth=0.25ex]
{  
                              &                      & |[draw,rectangle]| \epsilon  & \\
                              &                      & |[draw,ellipse]| \{1,2\}     & \\  
                              & |[draw,rectangle]| 1 &                              & |[draw,rectangle]| 2 \\  
|[draw,rectangle]| \{\ell_1\} & |[draw,rectangle]| \{\ell_1,\ell_3\} &              & |[draw,rectangle]| 1 \\ 
                              &                      &                              & |[draw,rectangle]| \{\ell_2\}           \\   
};

\path[-stealth]
  (m-1-3) edge node[right] {$f$} (m-2-3)  
  (m-2-3) edge node[above] {$\cmark$} (m-3-2)  
  (m-2-3) edge node[above] {$\xmark$} (m-3-4)  
  
  (m-3-2) edge node[above] {$\dcneq$} (m-4-1)  
  (m-3-2) edge node[right] {$a$} (m-4-2)  
  (m-3-4) edge node[right] {$b$} (m-4-4)  
  (m-4-4) edge node[right] {$a$} (m-5-4)  
;
\end{tikzpicture}
\caption{An example ANPMA.}
\label{fig:example_anpma}
\end{figure}
In Figure~\ref{fig:example_anpma} there is a simple example ANPMA
for the pattern set $\{\ell_1:f(x,x), \ell_2:f(a,b), \ell_3:f(a,a)\}$.
It has edges labelled with function symbols (coming from states labelled with positions)
and edges labelled with $\cmark$ or $\xmark$.
In Section~\ref{sec:anpma} we define ANPMAs formally.
Since we will treat a correctness proof, we first discuss both elements of ANPMAs.

\begin{itemize}
\item
In Section~\ref{sec:apma} we recall APMAs from \cite{SekarRR95:adaptive}.
We focus on formalities and give a correctness proof of this method which was not in the original work.
Correctness intuitively means that the automaton that is constructed for a set of linear patterns,
is suitable to efficiently decide the matching problem.
We also show that there are no redundancies in APMA evaluation.
That is, there is no state such that the same outgoing transition is taken,
no matter what the input term is.

\item
In Section~\ref{sec:ca} we define CAs,
again with a focus on formalities and correctness proofs.
Given a set of patterns, a set of consistency partitions can be computed.
The automaton that is constructed for this set of partitions,
is suitable to efficiently decide the variable consistency problem.
Although some redundancies can be removed,
it is still difficult to remove all redundancies from CAs.
\end{itemize}

These two sections provide a proper foundation for the formal details of ANPMAs in Section~\ref{sec:anpma},
since ANPMAs are automata that can have an interleaving between APMA states (with outgoing function symbol transitions)
and CA states (with outgoing $\cmark$- and $\xmark$-transitions).
We give the construction algorithm, a correctness proof based on the proofs of the previous section,
and show that ANPMAs are at least as efficient as first applying APMAs and then consistency checking,
or the other way around.

It is difficult to obtain an ANPMA without any redundancies.
We give some examples of pattern sets that are benefitted by consistency checks,
but where it is difficult to define a general construction procedure that can make use of these benefits.
Lastly we state the optimisation problem for ANPMAs such that it can be picked up for future work.

\subsection{Related Work}\label{sec:related}

We compare this work with other term indexing techniques.
Most techniques use tree-like data structures
with deterministic~\cite{Cardelli84, Graf91:left_to_right, SekarRR95:adaptive,
Weerdenburg07:match_trees} or non-deterministic~\cite{FessantM01:optimizing, Christian93:flatterms,
McCune1992:discrimination_trees, Voronkov95:code_trees, Graf1995:substitution_trees} evaluation.
In this setting a deterministic evaluation guarantees that all positions in the input term
are inspected at most once.
Non-deterministic approaches typically have smaller automata, but the same position might be inspected
multiple times for input terms as a result of backtracking.

Most of the mentioned techniques do not support matching non-linear patterns directly.
\emph{Discrimination trees}~\cite{McCune1992:discrimination_trees}, \emph{substitution trees}~\cite{Graf1995:substitution_trees} can be extended with on-the-fly consistency checks for matching non-linear patterns.
However, their evaluation strategy is restricted to left-to-right evaluation and variable consistency must be checked whenever a variable which has already been bound occurs
again at the position that is currently inspected in the term.
Our approach of introducing a state to check term equality is also present in \emph{match trees}~\cite{Weerdenburg07:match_trees}, \emph{code trees}~\cite{Voronkov95:code_trees} and the decision trees used in Dedukti~\cite{HondetB20}.
The main advantage of ANPMAs is that consistency checks are allowed to occur at any point in the automaton,
the evaluation strategy is not limited to a fixed strategy and there are fewer redundant checks,
which makes the matching time shorter.

\section{Preliminaries}\label{sec:preliminaries}
In this section the preliminaries of first-order terms and the root pattern matching problem are defined.
We denote the \emph{disjoint union} of two sets $A$ and $B$ by $A \uplus B$.
Given two sets $A$ and $B$ we use $A \to B$, $A \rightharpoonup B$ and
$A\hookrightarrow B$ to denote the sets of total, partial and total injective
functions from $A$ to $B$ respectively.  We assume that a partial
function yields a special symbol $\bot$ for elements in its domain for which it
is undefined.
Given a function $f:A\to B$ we use $f[a\mapsto b]$
to denote the mapping that satisfies $f(x)=f[a\mapsto b](x)$ if $x\neq a$ and $f(x)=b$ if $x=a$.

Let $\bF=\biguplus_{i\in\nats}\bF_i$ be a \emph{ranked} alphabet.
We say that $f \in \bF_i$ is a \emph{function symbol} with arity, written $\ar(f)$, equal to $i$.
Let $\Sigma = \vars \uplus \bF$ be a \emph{signature} where $\vars$ is a set of variables. 
The set of terms over $\Sigma$, denoted by $\terms_\Sigma$, is defined as the smallest set such that
$\vars \subseteq \terms_{\Sigma}$ and whenever $t_1, \dots, t_n \in \terms_{\Sigma}$ and
$f\in\bF_n$, then also $f(t_1, \dots, t_n) \in \terms_{\Sigma}$.
We typically use the symbols $x, y$ for variables, symbols $a, b$ for function symbols of arity zero (constants), $f, g, h$ for function symbols of other arities and $t, u$ for terms.
The \emph{head} of a term, written as $\rootsym$,
is defined as $\rootsym(x) = x$ for a variable $x$ and
$\rootsym(f(t_1, \dots, t_n)) = f$ for a term $f(t_1, \dots, t_n)$.
We use $\varsof(t)$ to denote the set of variables that occur in term $t$.
A term for which $\varsof(t) = \emptyset$ is called a \emph{ground term}.
A \emph{pattern} is a term of the form $f(t_1, \dots, t_n)$.
A pattern is \emph{linear} iff every variable occurs at most once in it.

We define the (syntactical) equality relation ${=}\subseteq \terms^2$ as the smallest relation such that
$x = x$ for all $x \in \vars$, and $f(t_1, \ldots, t_n) = f(t'_1, \ldots,
t'_n)$ if and only if $t_i = t'_i$ for all $1 \leq i \leq n$.
Furthermore, the
equality relation modulo variables $=_\dc\,\subseteq \terms^2$ is the smallest relation such
that $x =_\dc y$ for all $x, y \in \vars$, and $f(t_1, \ldots, t_n) =_\dc
f(t'_1, \ldots, t'_n)$ if and only if $t_i =_\dc t'_i$ for all $1 \leq i \leq
n$.  Both $=$ and $=_\dc$ satisfy reflexivity, symmetry and transitivity and
thus are equivalence relations, and we can observe that
${=}\subseteq{=_\dc}$.

A \emph{substitution} $\sigma$ is a total function from variables to terms.
The application of a substitution $\sigma$ to a term $t$, denoted by $t^\sigma$, is the term where variables of $t$ have been replaced by the term assigned by the substitution.
This can be inductively defined as $x^\sigma = \sigma(x)$ and $f(t_1, \ldots, t_n)^\sigma = f(t_1^\sigma, \ldots, t_n^\sigma)$.
We say that term $u$ \emph{matches} $t$, denoted by $t \below u$, iff there is a substitution $\sigma$ such that $t^\sigma = u$.
Terms $t$ and $u$ \emph{unify} iff there is a substitution $\sigma$
such that $t^\sigma = u^\sigma$.

We define the set of \emph{positions} $\positions$ as the set of finite sequences over natural numbers where
the \emph{root} position, denoted by $\emptypos$, is the identity element and concatenation, denoted by dot, is an
associative operator.  Given a term $t$ we define $t[\emptypos] = t$ and if $t[p] = f(t_1, \ldots, t_n)$ then $t[p.i]$
for $1 \leq i \leq n$ is equal to $t_i$.  Note that $t[p]$ may not be defined, \eg, $f(x,y)[3]$ and $f(x,y)[1.1]$
are not defined.
A position $p$ is \emph{higher} than $q$, denoted by $p \posleq q$, iff there is position $r \in \nats^*$ such that
$p.r=q$.
Position $p$ is \emph{strictly higher} than $q$, denoted by $p \posle q$, whenever $p \posleq q$ and $p \neq q$.
We say that a term $t[q]$ is a \emph{subterm} of $t[p]$ if $p \posle q$ and $t[q]$ is defined.
The replacement of the subterm at position $p$ by term $u$ in term $t$ is denoted by $t[p/u]$,
which is defined as $t[\emptypos / u] = u$ and
$f(t_1, \dots, t_n)[(i.p) / u] =f(t_1, \dots, t_i[p/u], \dots, t_n)$.
The \emph{fringe} of a term $t$, denoted by $\cF(t)$, is the set of all positions
at which a variable occurs, given by $\cF(t) = \{p\in\positions \mid t[p]\in\vars\}$.

We also define a restricted signature for terms with a one-to-one correspondence between variables and positions.
First, we define $\pvars$ as the set of \emph{position variables} $\{\dc_p \mid p\in\positions\}$.
Consider the signature $\pSigma = \bF \uplus \pvars$.
We say that a term $t \in \terms_{\pSigma}$ is \emph{position annotated} iff for all $p \in \cF(t)$ we have that
$t[p] = \dc_p$.
For example, the terms $\dc_\emptypos$ and $f(\dc_1,g(\dc_{2.1}))$ are position annotated,
whereas the terms $f(x)$ and $f(\dc_{1.1})$ are not.
Position annotated patterns are linear as each variable can occur at most once.

A \emph{matching function} decides for a given term and a set of patterns the exact subset
of these patterns that match the given term.

\begin{defi}\label{def:primitivematchingalgorithm}
  Let $\cL \subseteq \terms_{\Sigma} $ be a set of patterns.
  A function $match_{\cL} : \terms_{\Sigma} \rightarrow \pset{\terms_{\Sigma}}$ is
  a \emph{matching function for $\cL$} iff for all terms $t$ we have
  $match_{\cL}(t) = \{\ell \in \cL \mid \exists \sigma: \ell^\sigma = t\}$.
  If $\cL$ is a set of linear patterns then $match_{\cL}$ is a \emph{linear matching function}.
\end{defi}

\section{Adaptive Pattern Matching Automata}\label{sec:apma}
For a single linear pattern to match a given term it is necessary that every function symbol of the pattern
occurs at the same position in the given term.
This can be decided by a single traversal of the input pattern.

\begin{prop}\label{prop:altmatching}
Let $\ell$ be linear pattern and $t$ any term.
We have that $\ell \below t$ if and only if for all positions $p$: if $\hd(\ell[p])\in\bF$
then $\rootsym(\ell[p])=\rootsym(t[p])$.
\end{prop}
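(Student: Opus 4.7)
The plan is to prove both directions by structural induction, using the fact that substitution only touches variable leaves and preserves the entire function-symbol skeleton of $\ell$.

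For the forward direction, assume $\ell \below t$, so fix a witness substitution $\sigma$ with $\ell^\sigma = t$. I would prove by induction on the length of $p$ that whenever $\ell[p]$ is defined with $\hd(\ell[p]) \in \bF$, we have $\rootsym(\ell[p]) = \rootsym(t[p])$. The key observation, which follows immediately from the inductive definition of substitution application, is that $(\ell^\sigma)[p] = (\ell[p])^\sigma$ when $\ell[p]$ is defined, and moreover if $\ell[p] = f(\ell_1,\dots,\ell_n)$ then $(\ell[p])^\sigma = f(\ell_1^\sigma,\dots,\ell_n^\sigma)$, which still has head symbol $f$. Since $t[p] = (\ell^\sigma)[p]$, this yields $\rootsym(t[p]) = f = \rootsym(\ell[p])$.

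For the backward direction, assume the position-wise head condition and construct an explicit witness $\sigma$. Here linearity of $\ell$ is essential: for each variable $x \in \varsof(\ell)$ there is a unique position $p_x$ with $\ell[p_x] = x$, so I can set $\sigma(x) \gets t[p_x]$ (and pick arbitrary values for variables not occurring in $\ell$). I would then show $\ell^\sigma = t$ by induction on the structure of $\ell$: if $\ell = x$ then $\ell^\sigma = t[\emptypos] = t$ by construction; if $\ell = f(\ell_1,\dots,\ell_n)$ then the hypothesis applied at $p = \emptypos$ gives $\rootsym(t) = f$, so $t = f(t_1,\dots,t_n)$ for some $t_i$, and applying the induction hypothesis to each $\ell_i$ against $t_i$ (the position-wise condition carries over by concatenating positions) yields $\ell_i^\sigma = t_i$, hence $\ell^\sigma = t$.

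The main obstacle I anticipate is a bookkeeping issue rather than a conceptual one: I need to verify that $t[p]$ is always defined whenever $\ell[p]$ is defined (so that the equations $\rootsym(t[p]) = \rootsym(\ell[p])$ and the construction $\sigma(x) \gets t[p_x]$ make sense). This follows by a separate easy induction using the head-symbol condition, because matching arities at every internal position forces $t$ to have at least the same skeleton as $\ell$. Once this is in place, the two inductions above close cleanly and the equivalence follows.
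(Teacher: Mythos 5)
Your argument is correct. Note that the paper itself states this proposition without proof, treating it as evident, so there is no authorial proof to compare against; your write-up is the standard argument one would supply. Both directions are sound: the forward direction rests on the identity $(\ell^\sigma)[p]=(\ell[p])^\sigma$ for positions defined in $\ell$, and the backward direction correctly isolates where linearity is needed, namely that each variable $x$ has a unique position $p_x$ so that $\sigma(x):=t[p_x]$ is well defined without any consistency obligation. The one point to tighten in a full write-up is the one you already flagged: the induction for the backward direction should be stated in the generalised form ``for every position $p$ at which $\ell[p]$ is defined, $t[p]$ is defined and $(\ell[p])^\sigma=t[p]$,'' since $\ell$ itself is a pattern (hence not a variable) but its subterms may be variables; with that phrasing the variable base case and the definedness bookkeeping both close as you describe.
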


For linear patterns a naive matching algorithm follows directly from this proposition:
to find all matches for term $t$ one can check the requirement stated on positions in the proposition for every pattern separately.
However, for a \emph{set} of linear patterns we can observe that
whenever a specific position of the given term is inspected, a decision can be made for all patterns at the same time.
Exploiting these kind of observations to yield an efficient decision procedure
is the purpose of so-called \emph{term indexing techniques}~\cite{Sekar01:indexing}.
Sekar et al.~\cite{SekarRR95:adaptive} describe the construction of a so-called
\emph{adaptive pattern matching automaton}, abbreviated as APMA.
Given a set of \emph{linear} patterns $\cL$ an APMA can be constructed that can be
used to decide for every term $t \in \terms_\Sigma$ which patterns of $\cL$ are matches for $t$.
The main advantage of an APMA over other indexing techniques and the naive approach is that
every position of any input term is inspected at most once.

As an introduction to the techniques that are developed in later sections,
we recall the evaluation and construction procedures of APMAs.
The presentation that we use is slightly more formal compared to the presentation by Sekar et al.
The extra formalities provide a more convenient foundation for our extensions.
Moreover we present a correctness proof that did not appear in the original work,
which also mainly serves as a stepping stone to the correctness proofs for our extensions.

APMAs are state machines in which every state is a \emph{match} state, which is labelled with a position,
or \emph{final} state, which is labelled with a set of patterns.
Match states indicate that the term under evaluation is being inspected at the labelled position.
Final states indicate that a set of matching patterns is found.
The transitions are labelled by function symbols or an additional \emph{fresh} symbol
$\dcneq\,\notin \bF$. 
Let $\bF_{\dcneq} = \bF \uplus \{\dcneq\}$.

\begin{defi}
  An APMA is a tuple $(\states,\delta, \statelabel, s_0)$
  where:
  \begin{itemize}
    \item $\states=\statesfsym\uplus\statesfin$ is a finite set of states consisting of
      a set of \emph{match states} $\statesfsym$ and a set of \emph{final states} $\statesfin$;
    
    \item $\delta : \statesfsym \times \bF_{\dcneq} \rightharpoonup \states$ is a partial transition function;
    
    \item $\statelabel=\statelabelfsym\uplus\statelabelfin$ is a state labelling function with
    $\statelabelfsym:\statesfsym\rightarrow\positions$ and
    $\statelabelfin:\statesfin\rightarrow\pset{\terms_{\Sigma}}$; and
  
    \item $s_0 \in \statesfsym$ is the initial state.
  \end{itemize}
  We only consider APMAs that have a tree structure that is rooted in $s_0$.
  That is, $\delta$ is an injective partial mapping and there is no pair $(s,f)$ with $\delta(s,f)=s_0$.
\end{defi}

We illustrate the evaluation of an APMA by means of an example.
Consider the patterns $f(a, b, x), f(c, b, x) $ and $f(c, b, c)$ with $a, b, c \in \bF_0$, $f \in \bF_3$ and $x \in \vars$.
Figure~\ref{fig:example_APMA} shows an APMA that can be used to decide which of these patterns match for any given term.
Every state with outgoing transitions is a match state, and the other states at the bottom are the final states.
The match states are labelled with the position that is inspected during the evaluation of that state.

\begin{figure}
  \centering
  \begin{tikzpicture}[description/.style={fill=white,inner sep=2pt}]
\footnotesize
\matrix (m) [matrix of math nodes, row sep=2em,
column sep=0.75em, text height=1.5ex, text depth=0.25ex]
{
 & |[draw,rectangle]| \epsilon & & \\
 & |[draw,rectangle]| 2 & & \\
 & |[draw,rectangle]| 1 & & \\
|[draw,rectangle]| 3 & & |[draw,rectangle]| 3 & \\
|[draw,rectangle]| \{f(a, b, x)\} & |[draw,rectangle]|  \{f(c, b, x), f(c, b, c)\} & &
|[draw,rectangle]|  \{f(c, b, x)\} \\
};

\path[-stealth]
  (m-1-2) edge node[right] {$f$} (m-2-2)
  (m-2-2) edge node[right] {$b$} (m-3-2)
  (m-3-2) edge node[above left] {$a$} (m-4-1)
  (m-3-2) edge node[above right] {$c$} (m-4-3)
  (m-4-1) edge node[right] {$\dcneq$} (m-5-1)
  (m-4-3) edge node[above] {$c$} (m-5-2)
  (m-4-3) edge node[above right] {$\dcneq$} (m-5-4)
;
\end{tikzpicture}
  \caption{An APMA constructed from the patterns $f(a, b, x), f(c, b, x) $ and $f(c, b, c)$ with $a, b, c \in \bF_0$, $f \in \bF_3$ and $x \in \vars$.}
  \label{fig:example_APMA}
\end{figure}
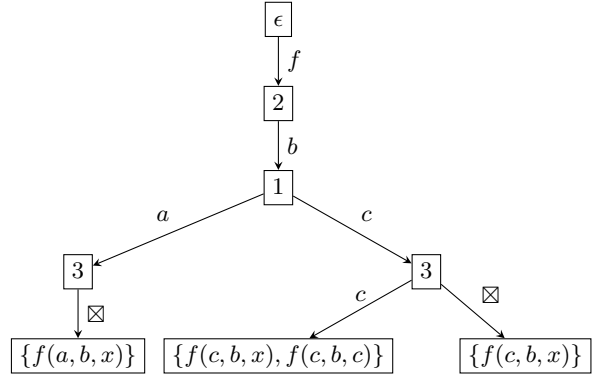

The evaluation of an APMA for a given term is defined in the function $\match$ defined in Algorithm~\ref{alg:apma:evaluation}.
Upon reaching a final state $s\in\statesfin$ the evaluation yields the set of terms $\statelabelfin(s)$, because these patterns match by construction.
In a match state $s\in\statesfsym$ the head symbol $\hd(t[\statelabelfsym(s)])$ is examined.
If there is an outgoing transition labelled with the examined head symbol then evaluation continues with the reached state; otherwise the $\dcneq$-transition is taken.
Whenever there is also no outgoing $\dcneq$-transition then there is no match by construction and the evaluation returns
the empty set as a result.

\begin{algorithm}
  \caption{Given a state $s$ of the APMA $M = (\states,\delta,\statelabel,s_0)$ and a term $t$, this algorithm computes the pattern matches for $t$ by evaluating $M$ on $t$.}\label{alg:apma:evaluation}
  \begin{equation*}
 \match(M, t, s) = 
  \begin{cases} 
    \statelabelfin(s)                  &\text{ if } s \in \statesfin \\
    \match(M, t, \delta(s,f))      &\text{ if } s \in \statesfsym \land \delta(s,f) \neq \bot  \\
    \match(M, t, \delta(s,\dcneq)) &\text{ if } s \in \statesfsym \land \delta(s,\dcneq) \neq \bot \land \delta(s,f) = \bot\\
    \emptyset                        &\text{ if } s \in \statesfsym \land \delta(s, \dcneq) = \delta(s,f) = \bot
    \end{cases}
 \end{equation*}
  {\hfill where $f = \rootsym(t[\statelabelfsym(s)])$}
\end{algorithm}

Consider the APMA $M$ of Figure~\ref{fig:example_APMA}
and let the initial state $s_0$ be the topmost state in the figure.
We have $\match(M,f(a,b,a),s_0) = \match(M, f(a,b,a), \delta(s_0, f)) = \ldots = \{f(a,b,x)\}$.
Similarly, we derive that $\match(M,f(b,b,b),s_0) = \emptyset$ and that
evaluating term $f(c,b,b)$ yields the pattern set $\{f(c,b,x)\}$.

The construction procedure for APMAs is defined in Algorithm~\ref{alg:constructapma}.
We use `\gets' to denote assignments to variables,
and we use $M[S\gets S']$ to denote that the element $S$ of the tuple $M$
gets updated to the value $S'$.
For example, $M\gets M[\statesfin\gets \statesfin\cup\{s\}]$ means that
$s$ is added to the set of final states in the APMA $M$.

Intuitively, the function $\constructapma$ constructs the APMA from the root state to final states
based on the set of patterns $\cL$ that could still result
in a match and a selection function $\select : \pset{\bP} \rightarrow \bP$.
For convenience we assume that the patterns in $\cL$ are position-annotated and as such these patterns are also linear.
In later sections we drop this assumption and treat arbitrary (non-linear) patterns.
The algorithm is initially called with the initial state $s_0$,
after which every recursive call corresponds to a state deeper in the tree.
The parameter $\select$ is a function that determines in each recursive call
which position from $\work$ becomes the label for the current state.
Based on the selected position, the current state and the pattern set,
outgoing transitions are created to fresh states where the construction continues recursively.

The parameter $\pref$ denotes the \emph{prefix} for a state $s$.
The function symbols in $\pref$ represent which function symbols have been matched so far
and the variables in $\pref$ represent which positions have not been inspected yet.
The special symbol $\dcneq$ is used to indicate that none of the patterns have the function symbol of the given term at that position.
For example, the prefix $f(\dc_1,\dc_2,\dc_3)$ represents that $f$ occurs at position $\epsilon$ of the input term and the variables at positions 1, 2 and 3 encode that these positions have not been inspected yet, or equivalently that these subterms are unknown.
The prefix can be reconstructed by following the transitions from $s_0$ to $s$.

Each recursive call starts by removing all the patterns from $\cL$ that do not unify with $\pref$.
Any match for the removed patterns cannot reach this state of the subautomaton that
is currently being constructed.
Therefore, the removed patterns do not have to be considered for the remainder of the construction.
If $\pref$ has the symbol $\dcneq$ at position $p$ then none of the
patterns in $\cL$ that have a non-variable subterm at position $p$ can unify with
the prefix any more, because $\dcneq$ does not occur in the patterns.
If there are no variables in $\pref$ then there is nothing to be inspected anymore.
This is the termination condition for the construction; the current state $s$ will be labelled with
the patterns that unify with $\pref$.
Otherwise, the work that still has to be done, \ie, the set of positions that still have to be inspected,
is the fringe of $\pref$, denoted by $\cF(\pref)$.

\begin{algorithm}
  \caption{Given a finite set of patterns $\cL$, this algorithm constructs an APMA for $\cL$.
    Initially, it is called with
    $M=(\emptyset,\emptyset,\emptyset,s_0)$,
    the initial state $s=s_0$ and the prefix $\pref=\dc_\epsilon$.
  }\label{alg:constructapma}
  \begin{algorithmic}[1]
  \Procedure{\constructapma}{$\cL, \select, M, s, \pref$}
    \State{$\cL' \gets \{\ell\in \cL \mid \text{$\ell$ unifies with $\pref$}\}$}
    \State $\work \gets \cF(\pref)$
      \label{line:constructapmalimit}
    \If{$\work = \emptyset$}
      \State $M \gets M[\statesfin := (\statesfin\cup\{s\}), \statelabelfin := \statelabelfin[s \mapsto \cL']]
          $\label{line:constructapmaterminate}
    \Else
      \State $\pos \gets \select(\work)$
      \State $M \gets M[\statesfsym:=(\statesfsym\cup\{s\}),\statelabelfsym := \statelabelfsym[s\mapsto\pos]]$
      \State $F \gets \{ f \in \bF \mid \exists \ell \in \cL': \rootsym(\ell[\pos]) = f \}$
      \For{$f \in F$}
        \State $M \gets M[\delta:=\delta[(s,f) \mapsto s']]$
            where $s'$ is a fresh unbranded state w.r.t. $M$
        \State $M \gets \constructapma(\cL, \select, M, s', \pref[\pos / f(\dc_{\pos.1},\dots,\dc_{\pos.\ar(f)}) ])$
      \EndFor
      \If{$\exists \ell\in \cL': \exists\pos'\posleq\pos:\rootsym(\ell[\pos'])\in\vars$}
        \State $M \gets M[\delta:=\delta[(s,\dcneq) \mapsto s']]$
          where $s'$ is a fresh unbranded state w.r.t. $M$
        \State $M \gets \constructapma(\cL, \select, M, s', \pref[\pos / \dcneq] )$
      \EndIf
    \EndIf
    \Return $M$
  \EndProcedure
  \end{algorithmic}
\end{algorithm}

\subsection{Proof of Correctness}
We prove that this construction yields an APMA that is suitable to solve the
matching problem for non-empty finite sets of linear patterns.
Meaning that the evaluation of the constructed APMA for any term $t$ is a linear matching function as defined in Definition~\ref{def:primitivematchingalgorithm}.

We make use of the following auxilliary definitions.
A \emph{path} to $s_n$ is a sequence of state and function symbol pairs
$(s_0,f_0),\dots,(s_{n-1},f_{n-1})\in\statesfsym\times\bF_{\dcneq}$
such that
$\delta(s_i,f_i)=s_{i+1}$ for all $i<n$.
Because $\delta$ is required to be an injective partial mapping there is a unique path to $s$ for every state
$s$, which we denote by $\pathof(s)$.
A match state $s$ is \emph{top-down} iff $\statelabel(s)=\emptypos$ or there is a pair $(s_i,f_i)$ in $\pathof(s)$
with $\statelabel(s_i).j=\statelabel(s)$ for some $1\leq j\leq \ar(f_i)$.
State $s$ is \emph{canonical} iff there are no two states in $\pathof(s)$ that are labelled with the same position.
Finally we say that an APMA is \emph{well-formed} iff all match states are top-down and canonical.

Well-formed APMAs allow us to inductively reconstruct the prefix of a state $s$
as it was created in the construction algorithm.
We allow slight overloading of the notation and denote the prefix of state $s$
by $\pref(s)$.
It is constructed inductively for well-formed APMAs by $\pref(s_0)=\dc_{\emptypos}$ and if
$\delta(s_i,f)=s_{i+1}$ then
$\pref(s_{i+1})=\pref(s_i)[\statelabel(s_i)/f(\dc_{\statelabel(s_i).1},\dots,\dc_{\statelabel(s_i).\ar(f)})]$.
Similarly, we denote the patterns of state $s$ by
$\cL(s)=\{\ell\in\cL\mid\text{$\ell$ unifies with $\pref(s)$}\}$ for all states.
Lastly we use an arbitrary function $\select: \pset{\positions} \rightarrow \positions$
such that for all sets of positions $\work$ we have $\select(\work) \in \work$.

\begin{lem}\label{lem:properconstruction}
  For all finite, non-empty sets of patterns $\cL$ we have that the procedure $\constructapma(\cL, \select, (\emptyset,\emptyset,\emptyset,s_0), s_0, \dc_\epsilon)$ terminates and yields a well-formed APMA $M=(\states,\delta,\statelabel,s_0)$.
\end{lem}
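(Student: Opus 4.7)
The plan is to establish termination and well-formedness simultaneously by induction on the recursion, using a well-founded measure on the prefix $\pref$.

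For termination, I would introduce a lexicographic measure
$\mu(\pref) = (\mu_1(\pref), \mu_2(\pref))$,
where $\mu_1(\pref) = \sum_{\ell \in \cL'} |\{p : \ell[p] \in \bF \text{ and } \pref[p] \notin \bF\}|$ counts the unresolved function-symbol positions across patterns still unifying with $\pref$, and $\mu_2(\pref) = |\cF(\pref)|$. A case analysis on the two kinds of recursive call shows $\mu$ strictly decreases. In the $f$-branch, $f$ is chosen only when some $\ell \in \cL'$ has $\rootsym(\ell[\pos]) = f$; replacing $\dc_{\pos}$ by $f(\dc_{\pos.1}, \ldots, \dc_{\pos.\ar(f)})$ resolves position $\pos$ in that pattern, so $\mu_1$ drops by at least one. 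In the $\dcneq$-branch, any pattern with a function symbol at $\pos$ is eliminated from the surviving set (since $\dcneq$ is a fresh constant not occurring in patterns), so $\mu_1$ is non-increasing; furthermore $\cF(\pref)$ loses $\pos$ and gains nothing, so $\mu_2$ strictly decreases. Since the lexicographic order on $\nats \times \nats$ is well-founded, the recursion terminates.

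For well-formedness, I would maintain three invariants at every call with state $s$ and prefix $\pref$: (i) $\pathof(s)$ reconstructs $\pref$ by the inductive definition given before the lemma; (ii) every $p \in \cF(\pref)$ is either $\emptypos$ (only when $s = s_0$) or of the form $q.i$ with $1 \leq i \leq \ar(f_j)$ for some pair $(s_j, f_j)$ on $\pathof(s)$ satisfying $\statelabel(s_j) = q$; and (iii) no two states on $\pathof(s)$ share a position label. Invariant (ii) immediately yields the top-down property for each created match state (since its label is drawn from $\cF(\pref)$), and (iii) is exactly canonicity. The base case $s = s_0$, $\pref = \dc_\emptypos$ is trivial. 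For the inductive step, the selected $\pos \in \cF(\pref)$ satisfies (ii); after replacement with $f(\dc_{\pos.1}, \ldots, \dc_{\pos.\ar(f)})$, the newly added positions $\pos.i$ are of the required shape because $s$ is now labelled $\pos$ and reaches its child via $f$; the $\dcneq$-branch adds no new positions. Canonicity is preserved because $\pos$ leaves $\cF(\pref)$ and every position added by the $f$-branch has length $|\pos| + 1$, strictly greater than any previously selected label by induction on the depth of recursion.

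The remaining APMA structural conditions are straightforward: every transition is created towards a fresh state, keeping $\delta$ injective and ruling out any transition back to $s_0$; and $\statelabel$ is assigned exactly once per state, as a match label when the algorithm branches further and as a final label when $\cF(\pref) = \emptyset$. The main obstacle I anticipate is the $\dcneq$-branch of the termination argument in the corner case where no pattern in $\cL'$ has a function symbol at $\pos$ (e.g., all surviving patterns carry a variable strictly above $\pos$): there $\mu_1$ fails to decrease, and one must verify carefully that patterns surviving the $\dcneq$ step contribute identically to $\mu_1$ before and after, so that the strict decrease of $\mu_2$ really rescues the induction.
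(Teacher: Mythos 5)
Your argument is sound in its essentials, but your termination proof takes a genuinely different route from the paper's. The paper observes that every position defined in a prefix is also defined in some pattern of $\cL$, so only finitely many prefixes can ever arise, and the strict matching ordering $\strbelow$ on the successive prefixes is therefore well-founded; no counting is needed. You instead build an explicit lexicographic measure $(\mu_1,\mu_2)$, which costs you a careful case analysis (including the corner case you flag, where only a $\dcneq$-transition is emitted and $\mu_1$ stalls) but buys a more quantitative and arguably more checkable decrease argument; both are valid, and your handling of the corner case via the strict drop of $\mu_2=|\cF(\pref)|$ is correct. The well-formedness part matches the paper's reasoning: top-down follows because labels are drawn from $\cF(\pref)$ and fringe positions enter only as children of expanded positions, and canonicity follows because a selected position is replaced by an element of $\bF_{\dcneq}$ and never returns to the fringe.

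One sub-claim in your canonicity step is false as stated: newly added positions $\pos.i$ need \emph{not} have length strictly greater than every previously selected label. For the pattern $f(g(a),b)$ a legal selection order is $\emptypos$, then $1$, then $1.1$, then $2$, and $|2|=1<2=|1.1|$. Fortunately your conclusion does not depend on this monotonicity: the correct reason, which you also state, is that $\pos$ permanently leaves the fringe (its variable is overwritten by a symbol of $\bF_{\dcneq}$ and replacements only ever affect fringe positions and positions strictly below them), and the freshly added positions $\pos.1,\dots,\pos.\ar(f)$ cannot have been selected earlier, since that would require their parent $\pos$ to have been expanded, hence selected and removed from the fringe, before the current call. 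Replace the length argument by this observation and the proof goes through.
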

\begin{proof}
  Since $\cL$ is finite, the set $F$ that is computed on line 9 is also finite.
  Therefore the for loop only treats finitely many function symbols.
  It remains to show that the recursion terminates.
  The prefixes of the recursive calls are ordered
  by the strict matching ordering $\strbelow$.
  Observe that whenever $\pref[p]$ is defined,
  there must be a pattern $\ell\in\cL$ such that $\ell[p]$ is defined as well.
  There are only finitely many positions defined by the patterns of $\cL$.
  Therefore $\strbelow$ is a well-founded ordering on the recursive calls,
  which guarantees termination.
  
  Upon termination the result $M$ is indeed an APMA.
  For every function symbol in $F$ exactly one transition is created
  and at most one $\dcneq$-transition is created, so $\delta$ is a partial mapping.
  Since the target states of these transitions are fresh we have that $\delta$ is injective.
  Moreover there is no transition to $s_0$ since the algorithm is initially called with $s_0$.
  Hence $M$ is an APMA.
  
  We check that $M$ is well-formed.
  By construction we have $\statelabel(s_0)=\emptypos$ since the construction procedure is called
  with the prefix $\dc_\emptypos$.
  Let $s$ be an arbitrary non-final state and consider the stage of the construction algorithm
  $\constructapma(\cL, \select, M, s, \pref)$.
  A position label $p.i$ is only chosen if
  it occurs in the fringe of $\pref$.
  Therefore there must have been a state labelled with $p$
  where the variable $\dc_{p.i}$ was put in the prefix,
  so $s$ must be top-down.
  Lastly $s$ is canonical because once a position $p$ is chosen,
  it cannot be chosen again since the variable $\dc_p$ is replaced by an element of $\bF_{\dcneq}$ in the prefix.
  Hence $M$ meets all requirements for well-formedness.
\end{proof}

\noindent
For the remainder of the correctness proof
assume an arbitrary finite, non-empty set of position annotated patterns $\cL$
and let $M=(\states,\delta,\statelabel,s_0)$ be the APMA for $\cL$
that results from $\constructapma(\cL, \select, (\emptyset,\emptyset,\emptyset,s_0), s_0, \dc_\epsilon)$.

The following lemma states some claims about final states.
They are mostly necessities for the other lemmas.
A final state is characterised by a ground prefix and a non-empty set of patterns.

\begin{lem}\label{lem:finalstates}
  For every every final state $s$:
  (a) the set $\statelabel(s)$ is non-empty,
  (b) $\pref(s)$ is a ground term, and
  (c) for all $\ell\in\statelabel(s)$ we have $\ell\below\pref(s)$.
  Moreover (d) for every pattern $\ell\in\cL$ there is at least one final state $s$
  with $\ell\in\statelabel(s)$.
\end{lem}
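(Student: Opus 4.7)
The plan is to prove the four claims in the order (b), (c), (a), (d), so that each can rely on the preceding ones. Claims (b) and (c) follow almost directly from the construction. For (b), the procedure $\constructapma$ adds $s$ to $\statesfin$ only when its termination condition $\work = \cF(\pref) = \emptyset$ is met, so $\pref(s)$ has empty fringe and is therefore ground. For (c), by construction $\statelabelfin(s) = \{\ell \in \cL \mid \ell \text{ unifies with } \pref(s)\}$; since $\pref(s)$ is ground by (b), any unifier $\sigma$ satisfies $\ell^\sigma = \pref(s)^\sigma = \pref(s)$, which is precisely $\ell \below \pref(s)$.

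For (a) I would strengthen to an invariant and prove by induction on the tree of recursive calls to $\constructapma$: at every call $\constructapma(\cL, \select, M, s, \pref)$, the set $\cL'$ computed on line 2 is non-empty. The base case is the initial call, where $\cL' = \cL$ is non-empty by assumption. For the inductive step I check that both kinds of outgoing transitions preserve the invariant. For an $f$-transition: by the definition of $F$, some $\ell \in \cL'$ has $\rootsym(\ell[\pos]) = f$, and this $\ell$ still unifies with $\pref[\pos / f(\dc_{\pos.1},\dots,\dc_{\pos.\ar(f)})]$. For the $\dcneq$-transition: its guard guarantees some $\ell \in \cL'$ has a variable at a position $\pos' \posleq \pos$, so $\ell$ has no function symbol requirement at $\pos$ and still unifies with $\pref[\pos / \dcneq]$.

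Claim (d) is the converse-style reachability statement, which I would prove as the stronger invariant: for every reachable state $s$ and every $\ell$ that unifies with $\pref(s)$, there is a final descendant $s'$ with $\ell \in \statelabelfin(s')$. Applying this to $s_0$, where $\pref(s_0) = \dc_\epsilon$ and every $\ell \in \cL$ unifies, yields (d). The induction is well-founded by termination of the construction (Lemma~\ref{lem:properconstruction}); the base case ($s$ final) is immediate by (c). In the inductive step, let $\pos = \select(\work)$ be the selected position. I split on $\ell[\pos]$: if $\ell[\pos] = f(\dots)$ for some $f \in \bF$, then $f \in F$, so $\delta(s,f)$ exists and $\ell$ still unifies with its prefix; otherwise $\ell[\pos]$ is either a variable or undefined, and in both cases $\ell$ has a variable at some $\pos' \posleq \pos$, so the $\dcneq$-guard is satisfied and $\ell$ still unifies with $\pref[\pos/\dcneq]$. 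The main subtlety lies in this last case: I need to observe that a variable of $\ell$ at a strictly higher position both blocks the function-symbol subcase and simultaneously witnesses the existential condition that triggers the creation of the $\dcneq$-transition, so the two subcases cover all of $\ell[\pos]$ and the induction applies to the target state in each.
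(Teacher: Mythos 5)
Your proposal is correct and follows essentially the same route as the paper's proof: (b) from the termination condition, (c) from unification with a ground prefix, (a) from the invariant that the surviving pattern set is non-empty at every call, and (d) from the invariant that every pattern in $\cL(s)$ propagates along some outgoing transition. The only difference is that you spell out the inductive case analyses that the paper leaves implicit.
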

\begin{proof}
  First observe that $\cL(s)$ is non-empty for all states $s$.
  Let $s$ be a final state.
  \begin{enumerate}[a)]
  \item
  Since $\statelabel(s)=\cL(s)$ and $\cL(s)$ is non-empty
  the claim holds.
  
  \item
  The prefix $\pref(s)$ is ground for final states $s$ because the construction
  only creates final states if $\pref(s)$ has no variables.
  
  \item
  By construction we have $\statelabel(s)=\{\ell\in\cL\mid\text{$\ell$ unifies with $\pref(s)$}\}$.
  Since $\pref(s)$ is ground we have that for all $\ell\in\statelabel(s)$ that $\ell\below\pref(s)$.
  
  \item
  Let $\ell\in\cL$.
  The following invariant holds for constructed APMAs.
  For all match states $s'$, if $\ell \in \cL(s')$ then there exist an $f$
  and an $s''$ such that $\delta(s',f) = s''$ and $\ell \in \cL(s'')$.
  From the fact that $\cL(s_0)=\cL$ it then follows that every pattern
  will end up in some final state.
  \qedhere
  \end{enumerate}
\end{proof}

Lemma~\ref{lem:matchapmainvariant} is an invariant that relates the construction algorithm
to the evaluation function $\match$.
It means that whenever $t$ matches $\ell$,
then $\ell$ will invariantly be in the set of patterns that is associated with the states that are visited
by $\match$.
The proof is an induction on the length of the path to the state under consideration.
The details of this proof can be found in the appendix.

\begin{lem}\label{lem:matchapmainvariant}
  Let $t$ be an arbitrary term and let $\cL_t=\{\ell\in\cL\mid\ell\below t\}$.
  For all states $s$ such that
  $\match(M, t, s_0)=\match(M, t, s)$
  it holds that
  $\cL_t\subseteq \cL(s)$.
\end{lem}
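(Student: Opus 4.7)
The plan is to prove the lemma by induction on the length of the evaluation path $\pathof(s)$, since that path is uniquely determined by $s$ (thanks to injectivity of $\delta$) and since the hypothesis $\match(M,t,s_0)=\match(M,t,s)$ exactly says that $s$ lies on the evaluation path from $s_0$ under input $t$.

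For the base case, $s=s_0$ gives $\pref(s_0)=\dc_\emptypos$, which is a single variable, so every pattern in $\cL$ unifies with it. Hence $\cL(s_0)=\cL\supseteq\cL_t$ trivially. For the inductive step, let $s$ be on the path with $\cL_t\subseteq\cL(s)$, and let $s'$ be the successor of $s$ on this path. Then $s$ is a match state with label $p=\statelabel(s)$, and Algorithm~\ref{alg:apma:evaluation} picks one of two transitions. In the \emph{match case}, $s'=\delta(s,f)$ for $f=\rootsym(t[p])$, so $\pref(s')=\pref(s)[p/f(\dc_{p.1},\dots,\dc_{p.\ar(f)})]$; for any $\ell\in\cL_t$, the witnessing substitution $\sigma$ satisfies $\ell^\sigma=t$, so if $\ell[p]$ is defined and non-variable then $\rootsym(\ell[p])=\rootsym(t[p])=f$, while if $\ell[p]$ is a variable or is undefined, the replacement of $\dc_p$ by $f(\dc_{p.1},\dots)$ in $\pref(s)$ preserves unifiability with $\ell$. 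One can make this concrete by extending $\sigma$ to a substitution $\tau$ with $\tau(\dc_q)=t[q]$ for every position variable $\dc_q$ appearing in $\pref(s')$; by induction on the path, every such $q$ is a valid position of $t$, so $\ell^\tau=\pref(s')^\tau=t$, witnessing unification and giving $\ell\in\cL(s')$.

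In the \emph{$\dcneq$ case}, $\delta(s,f)=\bot$ and $s'=\delta(s,\dcneq)$, so $\pref(s')=\pref(s)[p/\dcneq]$. The crucial ingredient is that by the construction (line~9 of Algorithm~\ref{alg:constructapma}), $\delta(s,f)=\bot$ means no $\ell'\in\cL(s)$ has $\rootsym(\ell'[p])=f$. For $\ell\in\cL_t\subseteq\cL(s)$, this rules out the possibility that $\ell[p]$ is a function symbol term, because then $\ell\below t$ would force $\rootsym(\ell[p])=\rootsym(t[p])=f$. Hence $\ell[p]$ is either undefined or a position variable, and replacing the variable $\dc_p$ in $\pref(s)$ by $\dcneq$ preserves unifiability with $\ell$, giving $\ell\in\cL(s')$.

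The main obstacle is the unification bookkeeping in the match case: one has to argue that all the position variables occurring in $\pref(s')$ are positions where $t$ is actually defined, so that the substitution $\tau(\dc_q)=t[q]$ is coherent and realises $\pref(s')^\tau=t$. This is where the well-formedness of $M$ (Lemma~\ref{lem:properconstruction}), in particular the top-down property, is used: every position variable in $\pref(s')$ was freshly introduced by a preceding match transition along $\pathof(s')$, and each such transition was taken because $t$ had the corresponding function symbol at the required position, so the relevant subpositions of $t$ exist.
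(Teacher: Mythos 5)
Your proof is correct and follows essentially the same route as the paper's: induction on the length of $\pathof(s)$, with a case split on whether the step taken by $\match$ is an $f$-transition or the $\dcneq$-transition, showing in each case that every $\ell\in\cL_t$ still unifies with the updated prefix. The only difference is cosmetic --- you exhibit the unifier $\tau(\dc_q)=t[q]$ explicitly (justified by the top-down property), where the paper argues via an abstract common instance and, in the $\dcneq$-case, spells out the subcase where $\ell[p]$ is undefined by locating the highest defined position above $p$.
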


Lemma~\ref{lem:uniquestate} claims two straightforward correctness properties.
Firstly, if no pattern matches $t$, then the evaluation function $\match$ will yield the empty set of patterns.
Secondly, if at least one pattern matches $t$, then the evaluation function will reach a final state.
The proof details are in the appendix.

\begin{lem}\label{lem:uniquestate}
  Let $t$ be an arbitrary term and let $\cL_t=\{\ell\in\cL\mid\ell\below t\}$.
  \begin{enumerate}[a)]
  \item If $\cL_t=\emptyset$ then $\match(M, t, s_0) = \emptyset$;
  
  \item If $\cL_t\neq\emptyset$ then $\match(M, t, s_0) = \match(M, t, s_f)$ for some final state $s_f$.
  \end{enumerate}
\end{lem}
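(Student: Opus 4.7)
The plan is to handle both parts via two ingredients: (i) a prefix-agreement invariant tracking how the prefix of any visited state relates to $t$, and (ii) a case analysis on the outgoing transitions at a reached state.

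First I would establish the following auxiliary claim by induction on the length of $\pathof(s)$: for every state $s$ visited by $\match(M,t,s_0)$ and every position $p$ with $\pref(s)[p]\in\bF$, we have $\rootsym(t[p])=\pref(s)[p]$. The base case $s=s_0$ is vacuous since $\pref(s_0)=\dc_{\emptypos}$ has no $\bF$-positions. For the inductive step, let $s$ be reached from $s'$ via a transition labelled $f\in\bF_{\dcneq}$ at $q=\statelabel(s')$. If $f\in\bF$, then by the second case of Algorithm~\ref{alg:apma:evaluation} this transition is taken precisely when $\rootsym(t[q])=f$, so the freshly written $f$ at position $q$ in $\pref(s)$ agrees with $t$; all other $\bF$-positions are inherited from $\pref(s')$ and the induction hypothesis applies. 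If $f=\dcneq$, no new $\bF$-position is introduced, so the invariant is trivially preserved.

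For part~(a), assume $\cL_t=\emptyset$ and, for contradiction, that the evaluation reaches a final state $s_f$. By Lemma~\ref{lem:finalstates}(b,c), $\pref(s_f)$ is ground and every $\ell\in\statelabel(s_f)$ satisfies $\ell\below\pref(s_f)$. The invariant then forces $\rootsym(t[p])=\rootsym(\ell[p])$ at every $\bF$-position $p$ of $\ell$, so Proposition~\ref{prop:altmatching} yields $\ell\below t$ and hence $\statelabel(s_f)\subseteq\cL_t=\emptyset$, contradicting Lemma~\ref{lem:finalstates}(a). Since the APMA is finite and tree-structured, the evaluation always terminates, so the only remaining possibility is the fourth case of Algorithm~\ref{alg:apma:evaluation}, which returns $\emptyset$.

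For part~(b), assume $\cL_t\neq\emptyset$ and suppose, for contradiction, that $\match$ reaches a match state $s$ with $p=\statelabel(s)$ and both $\delta(s,\rootsym(t[p]))=\bot$ and $\delta(s,\dcneq)=\bot$. By Lemma~\ref{lem:matchapmainvariant}, some $\ell\in\cL_t$ lies in $\cL(s)$. I would then case on $\ell[p]$: if $\ell[p]=g\in\bF$ then $g$ belongs to the set $F$ computed in Algorithm~\ref{alg:constructapma}, so $\delta(s,g)$ was created, and since $\ell\below t$ forces $\rootsym(t[p])=g$ this contradicts $\delta(s,\rootsym(t[p]))=\bot$; if $\ell[p]$ is a variable, or is undefined (in which case unification of $\ell$ with $\pref(s)$ forces some strictly higher position of $\ell$ to carry a variable), then the existential guard of Algorithm~\ref{alg:constructapma} that creates the $\dcneq$-transition is satisfied, contradicting $\delta(s,\dcneq)=\bot$. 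Hence the evaluation cannot stall and must terminate in a final state $s_f$. The main obstacle I expect is the prefix-agreement invariant, which relies on the precise semantics of Algorithm~\ref{alg:apma:evaluation} and a careful treatment of the $\dcneq$-transition (where the prefix changes but no new $\bF$-position appears); once it is in place, both parts follow by combining it with Lemma~\ref{lem:matchapmainvariant}, Lemma~\ref{lem:finalstates}, Proposition~\ref{prop:altmatching} and the construction guards.
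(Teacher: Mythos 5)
Your proposal is correct and follows essentially the same route as the paper: part~(a) combines Lemma~\ref{lem:finalstates} and Proposition~\ref{prop:altmatching} with the fact that every $f$-transition taken during evaluation certifies $\rootsym(t[p])=f$ (which you package as an explicit prefix-agreement invariant, while the paper invokes it pointwise for a single disagreeing position), and part~(b) uses Lemma~\ref{lem:matchapmainvariant} together with the construction guards to show an outgoing transition always exists. The only differences are presentational -- your part~(a) is the contrapositive of the paper's contradiction argument, and your case split on $\ell[p]$ being a variable versus undefined is slightly more explicit than the paper's.
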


From the invariant claimed in Lemma~\ref{lem:matchapmainvariant},
it follows that all pattern matches of $t$ are returned by the evaluation.
The following lemma additionally claims the converse:
all patterns returned by the evaluation are indeed pattern matches for $t$.
A detailed proof can be found in the appendix.

\begin{lem}\label{lem:validmatches}
  Let $t$ be an arbitrary term and let $\cL_t=\{\ell\in\cL\mid\ell\below t\}$.
  If $\match(M, t, s_0) = \match(M, t, s_f)$ for some final state $s_f$
  then $\statelabel(s_f)=\cL_t$.
\end{lem}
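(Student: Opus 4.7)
The plan is to prove the two inclusions of the claimed equality $\statelabel(s_f) = \cL_t$ separately. The inclusion $\cL_t \subseteq \statelabel(s_f)$ is immediate: by construction, $\statelabel(s_f) = \cL(s_f)$ for final states, and Lemma~\ref{lem:matchapmainvariant} together with the hypothesis $\match(M,t,s_0) = \match(M,t,s_f)$ directly yields $\cL_t \subseteq \cL(s_f)$. So the real content is the reverse inclusion $\statelabel(s_f) \subseteq \cL_t$, i.e., showing that every $\ell \in \statelabel(s_f)$ actually matches $t$. From Lemma~\ref{lem:finalstates}(c) we already know $\ell \below \pref(s_f)$, so it will suffice to show that $\pref(s_f)$ is ``consistent with'' $t$ on all positions where the prefix carries a genuine function symbol.

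The key technical step is the following auxiliary invariant, which I would prove by induction on the length of $\pathof(s)$: for every state $s$ visited during $\match(M,t,s_0)$ and every position $p$ with $\pref(s)[p] \in \bF$, the subterm $t[p]$ is defined and satisfies $\rootsym(t[p]) = \pref(s)[p]$. The base case is trivial because $\pref(s_0) = \dc_\emptypos$ contains no function symbol. For the inductive step, suppose the invariant holds at $s$ and the evaluation moves to $\delta(s,f)$. If $f \in \bF$, then by definition of $\match$ we have $f = \rootsym(t[\statelabel(s)])$, so $t[\statelabel(s)]$ is defined with the correct arity, which means $t$ is defined at every new position $\statelabel(s).1,\dots,\statelabel(s).\ar(f)$ where the updated prefix places a fresh variable; at $\statelabel(s)$ itself the new prefix symbol $f$ agrees with $\rootsym(t[\statelabel(s)])$, and all other function-symbol positions are inherited unchanged from $\pref(s)$. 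If instead the $\dcneq$-transition is taken, then $\pref$ gains the symbol $\dcneq \notin \bF$ at $\statelabel(s)$, so the invariant is vacuously preserved at this position and inherited elsewhere.

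Combining the invariant with Lemma~\ref{lem:finalstates}(c) gives the reverse inclusion. Let $\ell \in \statelabel(s_f)$, so there is $\sigma$ with $\ell^\sigma = \pref(s_f)$. For every position $p$ with $\hd(\ell[p]) \in \bF$, comparing roots in $\ell^\sigma = \pref(s_f)$ forces $\pref(s_f)[p] \in \bF$ with $\rootsym(\pref(s_f)[p]) = \rootsym(\ell[p])$. By the invariant applied to $s_f$, $t[p]$ is defined and $\rootsym(t[p]) = \pref(s_f)[p] = \rootsym(\ell[p])$. Since $\ell$ is linear (as $\cL$ consists of position-annotated patterns), Proposition~\ref{prop:altmatching} now yields $\ell \below t$, so $\ell \in \cL_t$ as required.

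The main obstacle is the bookkeeping in the inductive step of the auxiliary invariant: one must carefully track that an $f$-transition both creates a new function-symbol position (where agreement with $t$ is immediate from the definition of $\match$) and fresh variable positions (where the invariant is vacuous but $t$ must still be defined deeply enough for later steps), while a $\dcneq$-transition introduces a symbol outside $\bF$ and therefore leaves the invariant undisturbed. Once that invariant is in hand, the two directions of Proposition~\ref{prop:altmatching} --- one to read off the structural constraints that $\ell$ imposes on $\pref(s_f)$, the other to transport those constraints from $\pref(s_f)$ to $t$ --- make the remaining argument purely formal.
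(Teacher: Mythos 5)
Your proposal is correct and follows essentially the same route as the paper: the forward inclusion via Lemma~\ref{lem:matchapmainvariant}, and the reverse inclusion by arguing that every function-symbol position recorded in $\pref(s_f)$ was verified against $t$ when the corresponding transition was taken. The paper packages this last step as a proof by contradiction that traces one mismatching position back to the state where it was inspected, whereas you establish it as a forward invariant along the evaluation path (additionally noting that $t[p]$ is defined, which the paper leaves implicit); these are minor presentational variants of the same argument.
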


From these lemmas the following correctness theorem follows.

\begin{thm}\label{thm:correctnessapma}
 The function $\lambda t.\match(M, t, s_0)$ is a linear matching function for pattern set $\cL$.
\end{thm}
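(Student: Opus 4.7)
The plan is to unpack the definition of a linear matching function and then assemble the three preceding lemmas, which collectively do all the work. Specifically, by Definition~\ref{def:primitivematchingalgorithm}, it suffices to show that for every term $t$ the equality $\match(M, t, s_0) = \cL_t$ holds, where $\cL_t = \{\ell \in \cL \mid \ell \below t\}$. Since $\cL$ consists of position annotated patterns (which are linear) and Lemma~\ref{lem:properconstruction} ensures $M$ is a well-formed APMA for $\cL$, the hypotheses of all the preceding lemmas are in place.

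The proof itself should proceed by a case split on whether $\cL_t$ is empty. In the case $\cL_t = \emptyset$, Lemma~\ref{lem:uniquestate}(a) directly yields $\match(M,t,s_0) = \emptyset = \cL_t$. In the case $\cL_t \neq \emptyset$, Lemma~\ref{lem:uniquestate}(b) provides a final state $s_f$ with $\match(M,t,s_0) = \match(M,t,s_f)$. By the definition of $\match$ on final states, this value equals $\statelabel(s_f)$, and Lemma~\ref{lem:validmatches} then identifies $\statelabel(s_f)$ with $\cL_t$. Combining these cases gives $\match(M,t,s_0) = \cL_t$ for all $t$, which is exactly the matching function condition from Definition~\ref{def:primitivematchingalgorithm}.

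I do not expect any genuine obstacle here: the correctness theorem is essentially a bookkeeping step that glues together the three lemmas, each of which isolates one direction or one boundary case of the argument. The only minor care needed is to observe that $\match(M,t,s_f) = \statelabel(s_f)$ for a final state $s_f$, which is immediate from the first clause of Algorithm~\ref{alg:apma:evaluation}, and to note that since $\cL$ is non-empty and finite the construction from Lemma~\ref{lem:properconstruction} is applicable. All of the hard work, namely the soundness direction (Lemma~\ref{lem:validmatches}), the completeness direction (Lemma~\ref{lem:matchapmainvariant}, underlying the ``no match'' case of Lemma~\ref{lem:uniquestate}), and the existence of a witnessing final state, has already been carried out, so the theorem follows in a few lines.
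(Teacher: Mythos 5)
Your proposal is correct and follows essentially the same route as the paper's proof: a case split on whether $\cL_t$ is empty, invoking Lemma~\ref{lem:uniquestate} for both cases and Lemma~\ref{lem:validmatches} to identify $\statelabel(s_f)$ with $\cL_t$. The paper's argument is the same bookkeeping assembly of the three lemmas, so there is nothing to add.
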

\begin{proof}
  Let $t$ be an arbitrary term and let $\cL_t=\{\ell\in\cL\mid\ell\below t\}$.
  If $\cL_t=\emptyset$ then by Lemma~\ref{lem:uniquestate}
  we get that $\match(M, t, s_0)=\emptyset=\cL_t$ as required.
  If $\cL_t$ is non-empty then by Lemma~\ref{lem:uniquestate} we have that
  $\match(M, t, s_0) = \match(M, t, s_f)$ for some final state $s_f$.
  Then by definition of $\match$ we get
  $\match(M, t, s_f)=\statelabel(s_f)$.
  From Lemma~\ref{lem:validmatches} it follows that
  $\statelabel(s_f)=\cL_t$,
  by which we can conclude $\match(M, t, s_0)=\cL_t$.
  Hence $\lambda t.\match(M, t, s_0)$ is a linear matching function for $\cL$.
\end{proof}

\subsection{Redundancy}
Algorithm~\ref{alg:constructapma} follows a very simple kind of construction.
At every state, figure out what still needs to be observed and then choose one of the positions
from $\work$.
Sekar et al. already observed that one kind of redundancy can be completely removed from the computed set
$\work$.
If there is a position in $\work$ where no pattern in $\cL$ has a function symbol at that position, then there is nothing worthwhile to observe.
In such cases Algorithm~\ref{alg:constructapma} creates no outgoing function symbol transitions,
but it does create a $\dcneq$-transition.
However, by definition of $\matchapma$ this transition is then taken regardless of the input term.
The evaluation of this state always takes an unnecessary step.
In this case we call the state $\dcneq$-redundant.
Formally, we identify two types of redundancies.
\begin{defi}[Redundancy for match states]
Let $s$ be a match state and let $M=(S,\delta,L,s_0)$ be an APMA.
\begin{itemize}
\item Given $f\in\bF_\dcneq$, we say that $s$ is \emph{$f$-redundant} iff
for all terms $t$, whenever $\match(M,t,s_0)=\match(M,t,s)$, then
$\match(M,t,s)=\match(M,t,\delta(s,f))$.
\item We say that $s$ is \emph{dead} iff
for all terms $t$, whenever $\match(M,t,s_0)=\match(M,t,s)$, then
$\match(M,t,s)=\emptyset$.
\end{itemize}
\end{defi}

In Figure~\ref{fig:example_APMA} the leftmost state labelled by position $3$ is $\dcneq$-redundant.
The procedure $\matchapma$ will always take the $\dcneq$-transition upon reaching that state.
Avoiding this kind of redundancies reduces the number of steps needed to declare a match,
which yields a more efficient matching algorithm.
Algorithm~\ref{alg:constructapma} only allows $\dcneq$-redundancies, as described above,
so this notion of $f$-redundancy is more general than needed for this section.
In Section~\ref{sec:anpma} we will see that $f$-redundant states, with $f\in\bF$,
can occur due to interleaving with consistency checks.
To avoid redundancies, Sekar et al. included the following in their construction procedure.

\begin{lem}\label{lem:apmanonredundancy}
Consider Algorithm~\ref{alg:constructapma} where Line~\ref{line:constructapmalimit}
is replaced by
\[\work\gets \cF(\pref) \setminus \bigcap_{\ell\in\cL'} \{p\in\bP\mid \hd(\ell[p])\not\in\bF\}\,.\]
Then the correctness argument of Theorem~\ref{thm:correctnessapma} still applies,
and every state of every APMA resulting from the construction is not dead and not $f$-redundant, for every $f$.
\end{lem}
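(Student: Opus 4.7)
The plan is to re-verify the correctness chain leading to Theorem~\ref{thm:correctnessapma} with only minor adjustments, and then to establish non-deadness and the absence of $f$-redundancy. Since the modification only shrinks $\work$, Lemma~\ref{lem:properconstruction} carries over unchanged: termination is preserved because the recursion still reduces the prefix under $\strbelow$, and well-formedness is unaffected. The invariant of Lemma~\ref{lem:matchapmainvariant} continues to hold verbatim. The delicate step is Lemma~\ref{lem:validmatches}: at a final state $s_f$, $\work = \emptyset$ now means that every $p \in \cF(\pref(s_f))$ satisfies $\hd(\ell[p]) \notin \bF$ for every $\ell \in \cL(s_f)$. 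Combined with the unification condition, this forces each such $\ell$ to have function symbols exactly where $\pref(s_f)$ has them, with matching symbols, so that Proposition~\ref{prop:altmatching} yields $\ell \below t$ for every $t$ reaching $s_f$.

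For non-redundancy, the central new observation is that after the modification $F \neq \emptyset$ at every match state, since any $\pos \in \work$ must have some $\ell \in \cL'$ with $\hd(\ell[\pos]) \in \bF$. This immediately rules out $\dcneq$-redundancy: from such an $\ell$ one can build an input $t$ that navigates to $s$ along $\pref(s)$ and carries this function symbol at $\pos$, forcing the $f$-transition. The same navigation argument together with Lemma~\ref{lem:finalstates}(a) shows that every match state reaches a final state with a non-empty label, so no state is dead.

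For $f$-redundancy with $f \in \bF$, I would split on whether a $\dcneq$-transition exists at $s$. If yes, choose a pattern $\ell \in \cL'$ with a variable at or above $\pos$ (its existence is precisely the reason the $\dcneq$-transition was created) and build an input $t$ that navigates to $s$, carries a symbol outside $F$ at $\pos$, and matches $\ell$ elsewhere; then $\match(M, t, s)$ proceeds through $\delta(s, \dcneq)$ and returns a set containing $\ell$, whereas $\match(M, t, \delta(s, f))$ only ever returns patterns that have $f$ at $\pos$, a set disjoint from $\{\ell\}$. If no $\dcneq$-transition exists, every $\ell \in \cL'$ has a function symbol at $\pos$ and at every ancestor; pick a final descendant $s^*$ of $\delta(s, f)$ and construct $t$ matching $\pref(s^*)$ everywhere except at $\pos$, where a differently-headed symbol $g \neq f$ is placed. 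This $t$ reaches $s$ and yields $\emptyset$ there, whereas $\match(M, t, \delta(s, f))$ continues to $s^*$ and returns $\cL(s^*) \neq \emptyset$.

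The main obstacle is the witness construction in the last case: it requires a symbol $g \neq f$ of arity sufficient to keep the positions inspected by the $\delta(s, f)$-subtree defined in $t$, which relies on the standard implicit assumption that $\bF$ is sufficiently rich. Under this assumption the case analysis is routine; pathological alphabets (such as $\bF = \{f\}$) would require a minor qualification of the statement but lie outside the regime of interest for term indexing.
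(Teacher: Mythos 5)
Your proof follows the same route as the paper's: the same one-line repair to the correctness chain (at a final state the prefix need no longer be ground, but every surviving pattern still has a variable at or above each remaining fringe position, so $\ell\below\pref(s_f)$ survives), and the same case analysis on the outgoing transitions of a match state for non-redundancy, with the modified $\work$ ruling out the ``only a $\dcneq$-transition'' case and non-emptiness of $\cL(s)$ ruling out deadness. Two remarks. First, a small slip: in the subcase where a $\dcneq$-transition exists, $\match(M,t,\delta(s,f))$ does \emph{not} return only patterns with $f$ at $\pos$ --- final states below $\delta(s,f)$ are labelled with all patterns \emph{unifying} with their prefix, which includes your $\ell$; the two result sets still differ, but the witness for that is the pattern that caused the $f$-transition to be created, not $\ell$.

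Second, and more importantly, the caveat you raise at the end is not a routine disclaimer: it exposes a genuine gap that the paper's own proof papers over. The paper argues that in the remaining case ``the evaluation of $t'$ continues differently, so $s$ is not $f$-redundant,'' but $f$-redundancy is defined on the \emph{returned sets}, not on the paths taken, and for an impoverished signature these can coincide. Concretely, take $\cL=\{f(f(a))\}$ over $\bF=\{f,a\}$ with $\ar(f)=1$: the state $s_1$ labelled with position $1$ has a single $f$-transition to $s_2$ (labelled $1.1$) and no $\dcneq$-transition. Every term reaching $s_1$ is $f(t_1)$; if $\rootsym(t_1)=f$ the two evaluations agree by definition, and otherwise ($t_1=a$ or a variable) both $\match(M,t,s_1)$ and the forced $\match(M,t,s_2)$ return $\emptyset$, the latter because $t[1.1]$ is undefined. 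So $s_1$ is $f$-redundant by the letter of the definition, and the lemma as stated fails without your richness assumption (some $g\neq f$ of sufficient arity, or an explicit convention for evaluating $\match$ at undefined positions). Your proposal is therefore at least as rigorous as the paper's proof, and the qualification you suggest is actually needed.
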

\begin{proof}
For correctness we only point out the one reparation that needs to be done to the proof of Lemma~\ref{lem:finalstates}.
It no longer holds that $\pref(s)$ is ground for every final state $s$.
So it only remains to show that $\ell\below\pref(s)$ for all final states.
The prefix of a final state can still have variables,
but then for every position $p\in\cF(\pref(s))$ the pattern $\ell\in\cL'$ has a variable at that position or a higher
position.
This means that we still have $\ell\below\pref(s)$ for every $\ell\in L(s)$.

To show that there are no redundancies,
let $s$ be a match state of $M$ and
let $t$ be a term such that $\match(M,t,s_0)=\match(M,t,s)$.
Since $M$ is canonical, we know that $L(s) \neq L(s')$ for all states $s'$ in $\pathof(s)$.
No position is observed twice,
thus we only need to observe non-redundancy locally per state.
\begin{itemize}
\item
If $s$ has at least one outgoing $f$-transition with $f\neq\dcneq$ then by definition
$\match(M,t,s)=\match(M,t,\delta(s,f))$.
In this case we can easily construct a term $t'$ with $t\neq t'$ and
$\match(M,t',s_0)=\match(M,t',s)$ such that one of the following holds:
\begin{itemize}
\item $\match(M,t',s)=\match(M,t',\delta(s,g))$, for some $g\in\bF$ with $f\neq g$;
\item $\match(M,t',s)=\match(M,t',\delta(s,\dcneq))$, if there is a $\dcneq$-transition from $s$
\item $\match(M,t',s)=\emptyset$, if $s$ has no outgoing transition save for the $f$-transition.
\end{itemize}
In all three cases the evaluation of $t'$ continues differently, so $s$ is not $f$-redundant.

\item Otherwise, if $s$ has no outgoing $f$-transition, then it could have an outgoing $\dcneq$-transition
and no other transitions.
By the definition of $\constructapma$ this can only occur if no pattern in $\cL(s)$ has a function symbol on position $L(s)$.
However, the alternative $\constructapma$ defined in this lemma ensures that the chosen position for $L(s)$
is not a part of $\work(s)$ and therefore this case cannot occur.

\item Otherwise, if $s$ has no outgoing transitions at all, then it must be that $s$ is dead.
This case can only occur if $\cL(s)$ is empty, which is impossible.
\qedhere
\end{itemize}
\end{proof}

So in the simple case of linear terms, an APMA can be constructed that has no redundancies.
In later sections we see that avoiding redundancies for the non-linear matching algorithm is difficult.

\subsection{Strategies}
In this section we recall some of the known results on the time and space complexity of
APMAs~\cite{SekarRR95:adaptive}.
These results also carry over to the extensions that we propose later on.
The reason for providing a selection function in the construction procedure is to define a strategy for the order in which the positions of the input term are inspected.
The chosen strategy influences both the size and the matching time of the resulting automaton.
For the size of an automaton it is quite natural to count the number of states.
For example, consider the APMA for the patterns in Figure~\ref{fig:example_APMA} where position one was chosen before position two.
The resulting APMA has nine states instead of eight.

There are heuristics described for the selection function based on local strategies, which only look at the prefix and the pattern set during the construction presented in~\cite{SekarRR95:adaptive}.
For example taking the position with the highest number of alternative function symbols in the patterns to maximise the breadth of the automaton.
However, these heuristics can all be shown to be inefficient in certain cases~\cite{SekarRR95:adaptive}.
Due to the following result it is unlikely that optimal choices can be made locally.

It was already known that minimisation of the number of states is NP-complete~\cite{SekarRR95:adaptive}.
This result is based on the NP-completeness of minimising the number of states for Trie indexing structures~\cite{ComerS76:trie_index}.
Alternatively, one can consider the \emph{breadth} of the resulting automaton, which is given by the number of final states, as the measurement of size.
The reason for this is that the total number of states is at most bounded by the height of the automaton times the number of final states.
The upper bound on breadth for the optimal selection function for a set of patterns $\{\ell_0, \ldots, \ell_n\}$ is $\functionorder{\prod_{i = 0}^n \length{\ell_i}}$~\cite{SekarRR95:adaptive}.
The lower bound on breadth for any selection function is $\functionorder{\alpha^{n-1}}$, where $\alpha$ is the average number of function symbols and $n$ the number of patterns.

The matching time of an APMA can be defined in two different ways.
We can consider a notion of \emph{average matching time} based on a distribution of input terms.
However, this information is not typically available and assuming a uniform distribution seems unrealistic in practice.
Therefore, we consider the upper bound on the matching time to be determined by the final state with the highest depth for the optimal selection function, and the lower bound is the lowest depth for any selection function.
Here, the depth of a state is defined by the length of the path to the root of the automaton.
For any pattern sets the upper bound on the depth is $S$, where $S$ is the total number of non-variable positions, and the lower bound on the depth is $\Omega(S)$~\cite{SekarRR95:adaptive}.

An alternative approach is to define a notion of relative efficiency that compares two APMAs.
For a given APMA $M$ and a given term $t$ we define the matching time as the \emph{evaluation depth}, denoted by $\evaldepth(M, t)$, based on the number of recursive $\match$ calls required to reach a final state.
Then one can consider a notion of relative efficiency that compares two APMAs.

\begin{defi}
  Given two APMAs $M = (\states, \delta, \statelabel, s_0)$ and $M' = (\states', \delta', \statelabel', s_0')$ for a set of patterns $P$.
  We say that $M \preceq M'$ iff for all terms $t \in \terms_\Sigma$ it holds that $\evaldepth(M, t) \leq \evaldepth(M', t)$.
\end{defi}

An interesting observation for the selection function is that choosing so-called \emph{index} positions always yields a
more efficient APMA as in the definition above and in the number of states.
These index positions are positions where all the patterns that can still match (the set $\cL$ in the construction algorithm) have a function symbol.
As such, taking these index positions (if they are available) always yields the optimal choice.
Furthermore, avoiding redundant states can also be shown to always yield a more efficient APMA.
Aside from these observations,
two selection functions can easily yield APMAs that are incomparable with respect to $\preceq$.
A number of heuristics for selection functions can be found in \cite{maranget:decisiontrees}.

\section{Consistency Automata}\label{sec:ca}

A linear matching algorithm can be used to solve the non-linear matching problem by renaming the patterns and checking so-called \emph{variable consistency} after the matching phase~\cite{Graf91:left_to_right, Sekar01:indexing, SekarRR95:adaptive}.
As a preprocessing step a renaming procedure is applied.
It renames each pattern to a linear pattern by introducing new variables.
The variable consistency check ensures that the newly introduced variables which correspond to variables in the non-linear pattern can be assigned a single value in the matching substitution.
This can be seen as a many-to-one context where multiple introduced variables correspond to a single variable in the original pattern.
For the non-linear matching algorithm we can first use a linear matching algorithm to determine matches for the renamed patterns.
Followed by a consistency check to remove the linear patterns for which the matching substitution is not valid for the original patterns.

\subsection{Pattern Renaming}

A straightforward way to achieve the renaming would be
to introduce new variables for each position in the fringe of each pattern.  
However, for patterns $f(x, a)$ and $f(x', y')$ the variables $x$ and $x'$
could be identical such that the assignment for $x$ (or equally $x'$)
yields a substitution for both patterns. 
We can use position annotated variables for this purpose,
which are identical for the same position in different patterns,
to obtain these overlapping assignments.

For the consistency check it is necessary to keep track of equality constraints between variables that correspond to a single variable before a non-linear pattern is renamed.
For this purpose we use the notion of \emph{consistency classes}~\cite{Sekar01:indexing}.
A consistency class is a set of positions that are expected to be equal in order to yield a match.

\begin{defi}\label{def:consistency_class}
  Given a term $t$ and a consistency class $C \subseteq \positions$ we say that $t$ is \emph{consistent}  with respect to $C$ if and only if $t[p] = t[q]$ for all $p, q \in C$.
\end{defi}

A pattern can give rise to multiple consistency classes.  For instance, consider
the pattern $f(x,x,y,y,y,z)$. Based on the occurrences of variables $x, y$ and $z$
we derive the three consistency classes $\{1,2\}$, $\{3, 4, 5\}$ and $\{6\}$.
This means that for the input term $t = f(t_1, \dots, t_6)$ that $t[1] = t[2]$ and $t[3] =
t[4] = t[5]$ must hold, and finally $t[i] = t[i]$ holds trivially for all $1
\leq i \leq 6$, for this term to be consistent w.r.t. these classes.
A set of disjoint consistency classes is referred to as a \emph{consistency partition}.
The notion of term consistency w.r.t. a consistency class is extended as
follows. A term $t$ is consistent with respect to a consistency partition $P$ iff $t$
is consistent with respect to $C$ for every $C \in P$.

First, we illustrate the renaming procedure by means of an example.
Consider three patterns $f(x, x, z)$, $f(x, y, x)$ and $f(x, x, x)$.  
After renaming we obtain the following pairs of a linear pattern and the corresponding consistency partition: $(f(\dc_1, \dc_2, \dc_3) , P_1)$, $(f(\dc_1, \dc_2, \dc_3), P_2)$ and $(f(\dc_1, \dc_2, \dc_3), P_3)$; with the consistency partitions $P_1 = \{ \{1, 2\} ,\{3\} \}$, $P_2 = \{ \{1, 3\}, \{2\}\}$ and $P_3 = \{\{1,2,3\}\}$. 
The term $f(a, a, b)$ matches $f(\dc_1, \dc_2, \dc_3)$ as witnessed by the substitution $\idsubst[\dc_1 \mapsto a, \dc_2 \mapsto a, \dc_3 \mapsto b]$, but $f(a,a,b)$ is only consistent w.r.t. partition $P_1$. 
We can verify that the given term only matches pattern $f(x, x, z)$.

We define a rename function that yields a position annotated term and a consistency partition over $\cF(t)$ for any given term.

\begin{defi}
  The term rename function $\rename : \terms_\Sigma \rightarrow (\terms_{\Sigma_\positions} \times \pset{\pset{\positions}})$ is defined as
  \begin{align*}
    \rename(t) = (\rename_1(t, \epsilon), \{\{p \in \positions \mid t[p] = x\} \mid x \in \varsof(t)\})
  \end{align*}
  where $\rename_1(t, \epsilon) : (\terms_\Sigma \times \positions) \rightarrow \terms_{\Sigma_\positions}$ renames the
  variables of the given term to position annotated variables as follows.  
  \begin{align*}
    \rename_1(x, p) &= \dc_p & \text{if } x \in \vars \\
    \rename_1(f(t_1, \ldots, t_n), p) &= f(\rename_1(t_1, p.1), \ldots, \rename_1(t_n, p.n))
  \end{align*}
\end{defi}

Note that for linear patterns, the renaming results in a position annotated term with just
trivial consistency partitions that only consist of singleton consistency classes.
We show a number of characteristic properties of the $\rename$ function
which are essential for the correctness of the described two-phase non-linear matching algorithm.

\begin{lem}\label{lem:rename}
  For all terms $t \in \terms_{\Sigma}$ if $(t', P) = \rename(t)$ then:
  \begin{itemize}
    \item $t =_\dc t'$, and
        
    \item for all $p \in \cF(t)$: $t'[p] = \dc_p$, and
    
    \item for all $u \in \terms_\Sigma$ it holds that $u$ matches $t$ if and only if $u$ matches $t'$ and $u$ is consistent w.r.t. $P$.
  \end{itemize}
\end{lem}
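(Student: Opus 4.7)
The plan is to handle the three items in order. The first two follow by structural induction on $t$; the third is an equivalence that I would prove by constructing substitutions in both directions.

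For the first two items, I would strengthen the statement to work with $\rename_1(t,p)$ for arbitrary $p$, since the recursive calls use non-root positions. Specifically, I prove by induction on $t$ that for every position $p$: (i) $t =_\dc \rename_1(t,p)$, and (ii) for every $q \in \cF(t)$, $\rename_1(t,p)[q] = \dc_{p.q}$. The variable case is immediate from $\rename_1(x,p)=\dc_p$ and the definition of $=_\dc$. The function-symbol case uses the inductive hypothesis on each argument and the recursive clause of $\rename_1$; the definition of $=_\dc$ on $f$-terms then gives (i), while (ii) follows from position arithmetic. Taking $p=\emptypos$ yields the statements in the lemma, and $\cF(t)=\cF(t')$ falls out of $t=_\dc t'$.

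For the third item, let $u \in \terms_\Sigma$. For the forward direction, assume $t^\sigma = u$ for some $\sigma$. Define $\sigma'$ by $\sigma'(\dc_p) = u[p]$ for $p \in \cF(t)$ (arbitrary elsewhere). A short structural induction on $t$ shows $t'^{\sigma'} = u$, using that $t$ and $t'$ have identical non-variable skeletons (item 1) and that $t'[p] = \dc_p$ at every variable position (item 2). For consistency with respect to $P$, pick any $C \in P$, say $C = \{p \mid t[p] = x\}$ for a variable $x \in \varsof(t)$. Then for any $p,q \in C$, $u[p] = t[p]^\sigma = x^\sigma = t[q]^\sigma = u[q]$. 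For the backward direction, assume $t'^{\sigma'} = u$ and $u$ is consistent with respect to $P$. For every variable $x$ occurring in $t$, pick some $p_x \in \cF(t)$ with $t[p_x] = x$ and set $\sigma(x) = u[p_x]$; consistency with respect to the class of $x$ makes this choice immaterial (any other position in that class would give the same $u$-value). A structural induction then establishes $t^\sigma = u$: non-variable positions are handled by the skeleton agreement from item 1, and at any variable position $p$ with $t[p]=x$ we have $t^\sigma[p] = \sigma(x) = u[p_x] = u[p]$ because $p,p_x$ are in the same consistency class.

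The main obstacle is the backward direction of item 3, specifically the well-definedness of $\sigma$: it relies crucially on the consistency hypothesis, and one has to verify that the consistency classes in $P$ are exactly the $x$-preimages under $t$, so that different occurrences of the same variable in $t$ are forced to the same $u$-value. The inductions themselves are routine, but one has to keep the position bookkeeping careful, especially the identity $\rename_1(t,p)[q] = \dc_{p.q}$ which underpins the claim $t'[p] = \dc_p$ at the top level.
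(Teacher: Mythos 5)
Your proposal is correct and follows essentially the same route as the paper: structural induction for the first two items (you strengthen the inductive statement to arbitrary prefix positions, which the paper handles by an induction on the position instead, but the content is the same), and for the third item the same explicit construction of the corresponding substitution in each direction, with consistency w.r.t.\ $P$ doing exactly the work you identify in making the backward-direction substitution coherent across multiple occurrences of the same variable. No gaps.
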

\begin{proof}
  We can show by induction on $t$ that $t =_\dc \rename_1(t, \epsilon)$ to prove the first statement.
  For the second statement let $p \in \cF(t)$.
  First, we can show that $t'[p] = \rename_1(t[p], p)$ by induction on position $p$.
  From $t[p] \in \vars$ it follows that $t'[p]$ is equal to $\dc_p$.
  
  For the last property let $P$ be equal to $\{\{p \in \positions \mid t[p] = x\} \mid x \in \varsof(t)\}$
  and let $u$ be an arbitrary term.
  Assume that $u$ is consistent w.r.t. $P$ and $u$ matches $t'$.
  The latter means that there is a substitution $\sigma$ such that $t'^\sigma = u$.
  It follows that for all positions $p \in \cF(t')$ that $\sigma(t'[p]) = u[p]$.
  As $u$ is consistent w.r.t. $P$ it means that for all $x \in \vars$ and $p, q \in \positions$ that if $t[p] = t[q] = x$ then $u[p] = u[q]$.
  Therefore, we can construct the substitution $\rho$ such that for all $p \in \cF(t)$ we assign $u[p]$ to $t[p]$, where the latter is some variable in $\varsof(t)$.
  The observation of consistency above lets us conclude that there is only one such substitution $\rho$.
  From $t =_\dc t'$ it follows that $t^\rho = t'^\sigma$ and as such $t^\rho = u$, which means that $u$ matches $t$.
  
  Otherwise, if $u$ matches $t$ then there is a substitution $\sigma$ such that $t^\sigma = u$.
  Let $\rho$ be the substitution such that for all positions $p \in \cF(t)$ we assign $\sigma(t[p])$ (which is equal to $u[p]$) to $\dc_p$.
  As $t'$ is linear it follows that each $\dc_p$ is assigned once and thus $\rho(\dc_p) = \sigma(t[p])$ by definition.  
  Again, from $t =_\dc t'$ it follows that $t'^\rho = t^\sigma$ and as such $u$ matches $t'$.
  Finally, for all positions $p$ and $q$ such that $t[p] = t[q] = x$ for variable $x \in \vars$ it follows that
  $u[p] = u[q] = \sigma(x)$.
  We can thus conclude that $u$ is consistent w.r.t. $P$.
\end{proof}

For the variable consistency check a straightforward implementation follows directly from Definition~\ref{def:consistency_class}.
Let $P = \{C_1, \ldots, C_n\}$ be a consistency partition.
For each consistency class $C_i$, for $1 \leq i \leq n$, there are $\length{C_i} - 1$ comparisons to perform, after which the consistency of a term w.r.t. $C_i$ is determined.
This can be extended to partitions by performing such a check for every consistency class in the given partition.
We use the function $\isconsistent(t, P)$ to denote this naive algorithm.
For a set of partitions $\{P_1, \ldots, P_m\}$ the (naive) consistency check requires exactly
$\sum_{1 \leq j \leq m} \sum_{C \in P_j} \length{C}-1$ comparisons if $t$ is consistent w.r.t. $P$.
Furthermore, it requires at least $m$ comparisons for any term $t$.

For the renaming procedure we must consider that the patterns $f(x, x)$ and $f(x, y)$ are both renamed to the linear
pattern $f(\dc_1, \dc_2)$.  However, then it is no longer possible to identify the
corresponding original pattern.
This can be solved by considering an indexed family of patterns, which is defined as follows.
We assume the existence of an index set $\cI$ and use $\cL \times \cI$ to denote
the indexed family of patterns with elements denoted by $\indexof{\ell}{i}$ for $\ell \in \cL$ and $i \in \cI$.
We adapt the rename function to preserve the index assigned to each pattern.
Now, when given an indexed linear pattern that resulted from renaming we can identify the corresponding original pattern by its index.
We now combine these results to obtain a non-linear matching algorithm.
The correctness of the following lemma follows directly from the third property of Lemma~\ref{lem:rename}.

\begin{lem}\label{lem:consistency_check}  
Let $\cL \subseteq \terms_\Sigma \times \cI$ be an indexed family of patterns and let $\cL_r \subseteq \pterms \times \pset{\pset{\positions}} \times \cI$ be the indexed family of renamed patterns and corresponding consistency partitions resulting from renaming; \ie, $\cL_r = \{\rename(\indexof{\ell}{i}) \mid \indexof{\ell}{i} \in \cL\}$.
  Let $\textsc{match-linear} : \terms_{\Sigma} \times \pset{\terms_{\Sigma}} \times \cI \rightarrow \pset{\terms_{\Sigma}} \times \cI$
  be a linear matching function that preserves indices.
  For any term $t \in \terms_\Sigma$ we define $\textsc{match} : (\terms_\Sigma \times (\pterms \times \pset{\pset{\positions}} \times \cI)) \rightarrow \terms_\Sigma$ as:
  \begin{equation*}
    \textsc{match}(t, \cL_r) = \{\ell \mid \indexof{\ell'}{i} \in \cL' \land \indexof{(\ell', P)}{i} \in \cL_r \land \isconsistent(t, P) \}
  \end{equation*}
  where $\cL'$ is equal to $\textsc{match-linear}(t, \{\indexof{\ell'}{i} \mid \indexof{(\ell', P)}{i} \in \cL_r\})$.
  The function $\textsc{match}$ is a matching function.
\end{lem}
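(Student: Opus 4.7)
The plan is to unfold the definition of $\textsc{match}$ and show set equality against the specification in Definition~\ref{def:primitivematchingalgorithm}, namely that for all terms $t$ we have $\textsc{match}(t,\cL_r) = \{\indexof{\ell}{i} \in \cL \mid \exists \sigma: \ell^\sigma = t\}$. The two ingredients we combine are: (i) the third property of Lemma~\ref{lem:rename}, which equates ``$u$ matches $\ell$'' with ``$u$ matches $\ell'$ and $u$ is consistent w.r.t.\ $P$'' whenever $(\ell',P)=\rename(\ell)$, and (ii) the assumption that \textsc{match-linear} is an index-preserving linear matching function, so that $\cL' = \{\indexof{\ell'}{i} \mid \indexof{(\ell',P)}{i} \in \cL_r \text{ and } \ell' \below t\}$.

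First I would do the $(\subseteq)$ inclusion. Take $\ell$ in $\textsc{match}(t,\cL_r)$. By the defining comprehension there exist an index $i$ and a partition $P$ with $\indexof{\ell'}{i} \in \cL'$, $\indexof{(\ell',P)}{i} \in \cL_r$, and $\isconsistent(t,P)$ true. From (ii), $\ell' \below t$. Since $\indexof{(\ell',P)}{i} \in \cL_r$, the index preservation of $\rename$ gives $\indexof{\ell}{i} \in \cL$ with $(\ell',P)=\rename(\ell)$. Applying Lemma~\ref{lem:rename}'s third clause to $t$ yields $\ell \below t$, so $\indexof{\ell}{i}$ is in the spec set.

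Next the $(\supseteq)$ direction. Assume $\indexof{\ell}{i} \in \cL$ with $\ell \below t$. Let $(\ell',P) = \rename(\ell)$, so $\indexof{(\ell',P)}{i} \in \cL_r$. By Lemma~\ref{lem:rename} the match $\ell \below t$ forces both $\ell' \below t$ and $t$ consistent w.r.t.\ $P$, hence $\isconsistent(t,P)$. The first of these and (ii) place $\indexof{\ell'}{i}$ in $\cL'$, so all three conjuncts of the comprehension defining $\textsc{match}(t,\cL_r)$ are satisfied, and $\indexof{\ell}{i}$ lies in that set.

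The only real subtlety, and the step I expect to be the mildly awkward one, is the bookkeeping around the indexed families: I need the implicit fact that $\rename$ leaves indices untouched so that the index witnessing membership in $\cL$, $\cL_r$, and $\cL'$ is the same $i$ throughout, which in turn is needed to pair $\ell'$ with the correct partition $P$. Once this correspondence is stated explicitly at the start of the proof, both inclusions are immediate consequences of Lemma~\ref{lem:rename} and the assumed correctness of \textsc{match-linear}; no induction or further construction is required.
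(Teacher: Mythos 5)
Your proof is correct and matches the paper's intent exactly: the paper gives no detailed proof, stating only that the lemma ``follows directly from the third property of Lemma~\ref{lem:rename}'', and your two-inclusion argument combining that property with the assumed correctness of \textsc{match-linear} is precisely the expansion of that remark. The bookkeeping about index preservation that you flag is a fair observation about the paper's slightly informal set comprehension, but it does not change the substance of the argument.
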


\subsection{Consistency Automata}
In this section, we are going to focus on solving the consistency checking efficiently by exploiting overlapping partitions to obtain an efficient linear pattern matching algorithm.
This is inspired by the APMA where overlapping patterns are exploited to obtain an efficient matching algorithm.
Consider the consistency partitions $P_1 = \{ \{1, 2\} ,\{3\} \}$, $P_2 = \{ \{1, 3\}, \{2\}\}$ and $P_3
= \{\{1,2,3\}\}$ again.
In this case we can use the result of comparisons from overlapping partitions to determine the subset of all consistent partitions efficiently.
We would expect that at most three comparisons $t[1] = t[2]$, $t[2] = t[3]$ and $t[1] = t[3]$ would have to be performed to determine the consistent partitions. 

To exploit this, we define \emph{consistency automata} which are constructed from a set of consistency partitions.  
A consistency automaton, abbreviated by CA, is a state machine where every state is either a consistency state, which is labelled with a pair of positions, or a final state, which is labelled with set of partitions.
Each consistency state is labelled with a pair of positions that should be compared.  
Similar labelling is also present in other matching algorithms~\cite{Voronkov95:code_trees}.
The transitions of a CA are labelled with either $\cmark$ or $\xmark$ to indicate that the compared positions are equal or unequal respectively.  
The evaluation of a CA determines the consistency of a term w.r.t. a given set of partitions.

\begin{defi}
  A \emph{consistency automaton} is a tuple $(\states, \delta, L, s_0)$ where:
  \begin{itemize}[beginpenalty=99]
    \item $\states=\statescons\uplus\statesfin$ is a set of states consisting of a set of consistency states $\statescons$
        and a set of final states $\statesfin$;
    \item $\delta : (\statescons \times \{\cmark, \xmark\}) \rightarrow \states$ is a transition function;
    \item $\statelabel = \statelabelcons \uplus \statelabelfin$ is a state labelling function with
    $\statelabelcons: \statescons \to \positions^2$ and $\statelabelfin : \statesfin \rightarrow \pset{\cI}$;
    \item $s_0 \in \states$ is the initial state.
  \end{itemize}
\end{defi}

We show an example to illustrate the intuition behind the evaluation function of a CA.
Consider the consistency partitions $P_1 = \{ \{1, 2\}, \{3\} \}$, $P_2 = \{ \{1, 3\}, \{2\}\}$ and $P_3 = \{\{1,2,3\}\}$
again.
Figure~\ref{figure:example_ca} shows a CA that can be used to decide the consistency of a given term $t$ w.r.t. any of these partitions.
In the state labelled with $\{1, 2\}$, the subterms $t[1]$ and $t[2]$ are compared.
Whenever these are equal the evaluation continues with the $\cmark$-branch and it continues with the $\xmark$-branch otherwise.
If a final state (labelled with partitions) is reached then $t$ is consistent w.r.t. these partitions by construction.  
Afterwards, we show that redundant comparisons can be removed such that this example requires at most two comparisons.

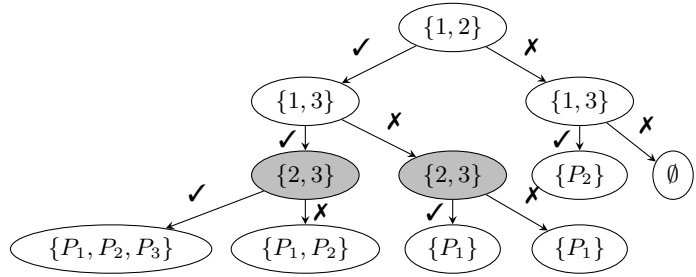
\begin{figure}
  \begin{center}
\begin{tikzpicture}[description/.style={fill=white,inner sep=2pt}]
\footnotesize
\matrix (m) [matrix of math nodes, row sep=1em,
column sep=0.75em, text height=1.5ex, text depth=0.25ex]
{
  &                            & |[draw,ellipse]| \{1, 2\} &                        & \\          
  & |[draw,ellipse]| \{1, 3\}    &                         & |[draw,ellipse]| \{1, 3\}  & \\
  & |[draw,ellipse,fill=lightgray]| \{2, 3\} & |[draw,ellipse,fill=lightgray]| \{2, 3\} & |[draw,ellipse]| \{P_2\} & |[draw,ellipse]| \emptyset & \\
  |[draw,ellipse]| \{P_1, P_2, P_3\} & |[draw,ellipse]| \{P_1, P_2\} & |[draw,ellipse]| \{P_1\} & |[draw,ellipse]| \{P_1\} & \\
};

\path[-stealth]
 (m-1-3) edge node [above left]  {$\cmark$} (m-2-2)
 (m-1-3) edge node [above right] {$\xmark$} (m-2-4)
 (m-2-2) edge node [left]        {$\cmark$} (m-3-2)
 (m-2-2) edge node [above right] {$\xmark$} (m-3-3)
 (m-2-4) edge node [left]        {$\cmark$} (m-3-4)
 (m-2-4) edge node [above right] {$\xmark$} (m-3-5) 
 (m-3-2) edge node [above left]  {$\cmark$} (m-4-1)
 (m-3-2) edge node [right]       {$\xmark$} (m-4-2)
 (m-3-3) edge node [left]        {$\cmark$} (m-4-3)
 (m-3-3) edge node [above right] {$\xmark$} (m-4-4)
;
\end{tikzpicture}
  \end{center}
  \caption{The CA for the partitions $P_1 = \{\{1, 2\},\{3\}\}$, $P_2 =
  \{\{1, 3\},\{2\}\}$ and $P_3=\{\{1,2,3\}\}$ where positions $1$ and $2$ are compared first, followed by $1$ and $3$ and finally $2$ and $3$.
  The grey states are redundant and can be removed as shown later on.}\label{figure:example_ca}
\end{figure} 

The evaluation function of a CA is defined in Algorithm~\ref{alg:ca:evaluation}.
If the current state is final then the label $\statelabel(s)$ indicates the set of indices such that $t$ is consistent w.r.t. the partitions $P_i$ for $i \in \statelabel(s)$.  
Otherwise, evaluation proceeds by considering the pair of positions given by $\statescons(s)$. 
The positions given by $\statescons(s)$ are unordered pairs of positions (or 2-sets), denoted by $\positions^2$, with elements $\{p, q\}$ such that $p \neq q$.  
These unordered pairs avoid unnecessary comparisons by the reflexivity and symmetry of term equality. 
If the comparison yields true,
the evaluation proceeds with the state of the outgoing $\cmark$-transition;
otherwise it proceeds with the state of the outgoing $\xmark$-transition.

\begin{algorithm}
  \caption{Given the CA $M = (\states, \delta, L, s_0)$ and a term $t \in \terms_\Sigma$ then $\evalca(M, t)$ returns the set of indices given by $\evalca(M, t, s_0)$ such that $t$ is consistent w.r.t. the corresponding partitions used for constructing $M$.}\label{alg:ca:evaluation}
  \begin{equation*}
    \evalca(M, t, s) = 
    \begin{cases}
      \statelabelfin(s)                &\text{ if } s \in \statesfin \\
      \evalca(M, t, \delta(s, \cmark)) &\text{ if } s \in \statescons \land t[p] = t[q] \text{ where $\{p, q\} = \statelabelcons(s)$}\\
      \evalca(M, t, \delta(s, \xmark)) &\text{ if } s \in \statescons \land t[p] \neq t[q] \text{ where $\{p, q\} = \statelabelcons(s)$}\\      
    \end{cases}
\end{equation*}
\end{algorithm}

The construction procedure of a CA is defined in Algorithm~\ref{alg:ca:construct}.  
Its parameters are the automaton $M$ that
has been constructed so far, the set of partitions $P$ and the current state
$s$.  Additionally, parameter $E$ contains the pairs of positions where the
subterms are known to be equal, and similarly $N$ is the set of pairs that are
known to be different. 
Lastly, a selection function $\selectca$ is used to define the strategy for choosing the next positions that are compared.

The partitions in $P$ for which a pair $\{p, q\}$ of positions is known to be
different are removed as these can not be consistent.
The remaining partitions form the set $P'$.
To denote the remaining work
concisely we introduce the notation $\subseteq\in$ for
the composition of $\subseteq$ and $\in$; formally $A \subseteq\in B$ iff
$\exists C \in B : A \subseteq C$.
Each pair of $E$ that has already been compared is removed from $\work$.
The condition on line~\ref{alg:construct_ca:base} checks whether there are no
choices left to be made.
If this is the case then all partitions in $P'$ are consistent by construction and the labelling function is set to yield the
partitions $P'$.

Otherwise, a pair $\{p, q\}$ of positions in $\work$ is chosen by the $\selectca$ function and two outgoing transitions are created.
A $\cmark$-transition is created that is taken during evaluation whenever the subterms at positions $p$ and $q$ are equal and this information is recorded in $E$.
Otherwise, the fact that these are not equal is recorded in $N$ and a corresponding $\xmark$-transition is created.

\begin{algorithm}
  \caption{Given a set of partitions $P = \{P_1, \ldots, P_n\}$ and a selection function $\selectca$ then $\constructca(P, \selectca)$ computes a CA using $\constructca(P, \selectca, (\emptyset, \emptyset, \emptyset, s_0), s_0, \emptyset, \emptyset)$ that can be used to evaluate the consistent partitions using $\evalca$.
    }\label{alg:ca:construct}
  \begin{algorithmic}[1]
    \Procedure{\constructca}{$P, \selectca, M, s, E, N$}
    \State $P' \gets \{P_i \in P \mid \neg\exists C \in P_i: \exists \{p, q\} \in N: p, q \in C\}$ 
    \State $\work \gets \{\{p, q\} \in \positions^2 \mid \{p, q\} \subseteq\in P_i \land P_i \in P'\} \setminus E$
    \If {$\work = \emptyset$}\label{alg:construct_ca:base}
      \State {$M \gets M[\statesfin \gets (\statesfin \cup \{s\}), \statelabelfin \gets \statelabelfin[s \mapsto P']]$}
    \Else   
      \State {$\{p, q\} \gets \selectca(\work)$}\label{alg:ca:construct:pick_variables}
      \State $M \gets M[\statescons \gets (\statescons \cup \{s\}), \statelabelcons \gets \statelabelcons[s \mapsto \{p, q\}]]$
      \State $M \gets \constructca(P, \selectca, M[\delta \gets \delta[(s, \cmark) \mapsto s']], s', E \cup \{\{p, q\}\}, N)$ where $s'$ is a fresh unbranded state w.r.t. $M$.
      \State $M \gets \constructca(P, \selectca, M[\delta \gets \delta[(s, \xmark) \mapsto s']], s', E, N \cup \{\{p, q\}\})$ where $s'$ is a fresh unbranded state w.r.t. $M$.
    \EndIf
    \Return $M$
    \EndProcedure
  \end{algorithmic}
\end{algorithm}

The consistency automata obtained from this construction are not optimal,
but later on we show that these redundancies can be removed.

\subsection{Proof of Correctness}
We show the correctness of the construction and evaluation of a CA as defined
in Theorem~\ref{theorem:ca:correct}. 
In the following statements let $P = \{P_1, \ldots, P_n\}$ be a set of partitions where each partition is a finite set of finite consistency classes and let $\select : \pset{\positions^2}
\rightarrow \positions^2$ be any selection
function such that $\select(\work) \in \work$ for all non-empty $\work \subseteq
\pset{\positions^2}$.
For the termination of the construction procedure we can show that the number of choices in $\work$ strictly decreases at each recursive call.

\begin{lem}\label{lem:ca:construct_terminates}
  The procedure $\constructca(P, \selectca)$ terminates.  
\end{lem}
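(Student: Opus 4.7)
The plan is to show that $|\work|$, a non-negative integer bounded by a fixed finite quantity, strictly decreases at each recursive call, so that termination follows by well-foundedness of $\nats$. First I would bound $\work$ uniformly: every pair in $\work$ lies in the fixed set $U = \{\{p,q\}\in\positions^2 \mid \exists P_i \in P,\,\exists C \in P_i : \{p,q\}\subseteq C\}$, which is finite because $P$ is a finite set of partitions, each consisting of finitely many finite consistency classes. In particular $|\work|\le |U|$ throughout the recursion.

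Next I would establish monotonicity along the recursion. Only one of $E$ or $N$ is enlarged per subcall: the $\cmark$-subcall adds $\{p,q\}$ to $E$, and the $\xmark$-subcall adds $\{p,q\}$ to $N$. Enlarging $E$ can only remove pairs from $\work$ via the $\setminus E$ clause; enlarging $N$ can only shrink $P'$, which can only eliminate pairs from $\work$ via the $\subseteq\in P_i$ clause. So $\work$ never grows along the recursion.

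The core step is to show that the pair $\{p,q\}$ selected by $\selectca$ is actually removed from $\work$ in both branches. For the $\cmark$-branch this is immediate, as $\{p,q\}$ is added to $E$ and $\work$ is defined modulo $E$. For the $\xmark$-branch the argument is slightly subtler: since $\{p,q\}$ enters $N$, every partition $P_i$ that contains $p$ and $q$ inside a single consistency class is filtered out of $P'$. But $\{p,q\}$ can only remain in $\work$ via $\{p,q\}\subseteq\in P_i$ with $P_i\in P'$, which requires some class of $P_i$ to contain both $p$ and $q$; any such $P_i$ has just been excluded. Hence $\{p,q\}\notin\work$ after the $\xmark$-subcall. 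Combining these observations, $|\work|$ strictly decreases on each recursive call, yielding a well-founded measure bounded by $|U|$, and termination follows.

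The only step that requires real care is the $\xmark$-branch argument, where the elimination of $\{p,q\}$ from $\work$ is indirect: it passes through the removal of entire partitions from $P'$ rather than through a direct subtraction. Once this subtlety is observed, the rest is a routine application of well-founded recursion on $\nats$.
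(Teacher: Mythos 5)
Your proof is correct and takes essentially the same approach as the paper: both arguments rest on the observation that the selected pair $\{p,q\}$ cannot reappear in $\work$ in either branch (directly via $\setminus E$ in the $\cmark$-branch, and indirectly via the filtering of $P'$ in the $\xmark$-branch), combined with finiteness of the set of candidate pairs. You package this as a strictly decreasing measure $|\work|$ while the paper phrases it as each pair being selectable at most once, but these are the same well-founded argument.
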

\begin{proof}
  Consider the pair of positions $\{p, q\}$ that is taken from $\work$ at
  line~\ref{alg:ca:construct:pick_variables}.  It is easy to see that $\{p, q\}
  \notin E$, and $\{p, q\} \notin N$ follows directly from the fact that $P'$
  only consists of partitions of which the consistency classes do not contain
  positions together in a pair of $N$.  Therefore, it follows that in
  subsequent recursive calls $\{p, q\}$ cannot be in $\work$ again as either
  $E$ or $N$ is extended with $\{p, q\}$ and no elements are ever removed from
  $E$ or $N$.  Furthermore, the execution of all other statements terminates as
  $\nofpositions(P)$ is finite, which also means that $\length{E}$ and
  $\length{N}$ are finite as inserted pairs satisfy $\{p, q\} \subseteq\in P'$.
  Finally, the selection function terminates by assumption.
\end{proof}

For the construction procedure we can show that for parameter $s$ it holds that $s \notin \states$ as a precondition.
Therefore, we can use $\work(s) : \states \rightarrow \pset{\positions}$, $E(s) : \states \rightarrow \pset{\positions^2}$ and $N(s) : \states \rightarrow \pset{\positions^2}$ to denote the
values of $\work$, $E$ and $N$ respectively during the recursive call of
$\constructca(P, \select, M, s, E, N)$. 
For the termination of the evaluation procedure we can show that $\work(s)$ strictly decreases for the visited states.

For the proof of partial correctness we show a relation between the pairs in
$E(s)$ and $N(s)$ and the comparisons performed in the evaluation function.
First, we define for a term $t \in \terms_{\Sigma}$ and parameters $E, N \subseteq \pset{\positions^2}$ the notion of \emph{consistency} where $t$ is consistent w.r.t. $E$ and $N$, denoted by $(E, N) \models t$, iff:
\begin{itemize}
  \item $\forall \{p, q\} \in E : t[p] = t[q]$, and
  
  \item $\forall \{p, q\} \in N : t[p] \neq t[q]$
\end{itemize}

A consistency automaton $M = (\states, \delta, L, s_0)$ is \emph{well-formed} iff for all terms $t \in \terms_\Sigma$ and all recursive calls $\evalca(M, t, s_0) = \evalca(M, t, s_n)$ it holds that $(E(s_n), N(s_n)) \models t$.

\begin{lem}\label{lem:ca:consistency}
  Let $M = (\states, \delta, L, s_0)$ be the result of $\constructca(P, \selectca)$.
  Then $M$ is well-formed.
\end{lem}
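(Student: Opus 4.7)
The plan is to prove well-formedness by induction on the length of the recursive evaluation path from $s_0$ to $s_n$, showing the stronger statement that for every $n \geq 0$, if $\evalca(M, t, s_0) = \evalca(M, t, s_n)$ via a sequence of recursive calls, then $(E(s_n), N(s_n)) \models t$. The crux is that the construction procedure and the evaluation procedure are defined in lockstep: the $\cmark$-branch built by $\constructca$ records the pair being compared into $E$, and the $\cmark$-branch taken by $\evalca$ is precisely the one for which $t[p] = t[q]$, and symmetrically for the $\xmark$-branch and $N$.

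For the base case $n = 0$, we have $s_n = s_0$ with $E(s_0) = N(s_0) = \emptyset$, so $(E(s_0), N(s_0)) \models t$ holds vacuously. For the inductive step, assume the evaluation passes through $s_n$ with $(E(s_n), N(s_n)) \models t$ and proceeds to $s_{n+1}$. Since $s_n$ is visited non-terminally, it must be a consistency state with $\statelabelcons(s_n) = \{p, q\}$ for some pair chosen by $\selectca$. By the definition of $\evalca$, we take the $\cmark$-transition exactly when $t[p] = t[q]$, in which case $\constructca$ recursively created $s_{n+1} = \delta(s_n, \cmark)$ with $E(s_{n+1}) = E(s_n) \cup \{\{p,q\}\}$ and $N(s_{n+1}) = N(s_n)$; otherwise we take the $\xmark$-transition, for which $s_{n+1} = \delta(s_n, \xmark)$ with $E(s_{n+1}) = E(s_n)$ and $N(s_{n+1}) = N(s_n) \cup \{\{p,q\}\}$. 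In either case the newly added pair is precisely witnessed by the evaluation step, so combined with the induction hypothesis we obtain $(E(s_{n+1}), N(s_{n+1})) \models t$.

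To make this argument rigorous I would first record a small auxiliary claim: for every state $s$ created by the construction, the values $E(s)$ and $N(s)$ are uniquely defined because $\delta$ is built as a tree rooted at $s_0$ (each non-initial state is created by exactly one recursive call). This justifies the overloaded notation $E(s), N(s)$ as functions of the state alone. With this in hand, the induction above is essentially a routine case analysis, and termination of evaluation is guaranteed by Lemma~\ref{lem:ca:construct_terminates} (the constructed automaton is finite).

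I do not expect a genuine obstacle here: the claim is essentially a structural invariant that mirrors the construction, and the only subtlety is ensuring that the update rules for $E$ and $N$ in the two branches of $\constructca$ match the branching condition $t[p] = t[q]$ vs.\ $t[p] \neq t[q]$ in $\evalca$. Once that correspondence is spelled out, the inductive step is immediate.
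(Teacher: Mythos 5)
Your proof is correct and follows essentially the same route as the paper: an induction on the length of the evaluation series with the same two-case analysis matching the $\cmark$/$\xmark$ branching of $\evalca$ against the updates to $E$ and $N$ in $\constructca$. The auxiliary point about $E(s)$ and $N(s)$ being well-defined per state is also made in the paper (just before the lemma), so nothing is missing or different in substance.
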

\begin{proof}
  The recursive calls form an \emph{evaluation series} $(s_0, a_0),\ldots,(s_n, a_n)$ for $s_i \in \states$ and $a_i \in \{\cmark, \xmark\}$ for $0 \leq i < n$ such that $\evalca(M, s_i, t) = \evalca(M, s_{i+1}, t)$ and $\delta(s_i, a_i) = s_{i+1}$.
  Let $t \in \terms_{\Sigma}$ be any term.
  We prove the statement by induction on the length of the evaluation series.
  
  Base case. We have $E(s_0) = N(s_0) = \emptyset$ and as such the statement holds vacuously.
  
  Inductive step.
  Suppose that for $\evalca(M, t, s_0) = \evalca(M, t, s)$ the statement holds.
  Suppose that $\evalca(M, t, s) = \evalca(M, t, s')$ where $s' = \delta(s, a)$ for $a \in \{\cmark, \xmark\}$ and let $\statelabelcons(s) = \{p, q\}$.
  There are two cases to consider:
  \begin{itemize}
    \item $t[p] = t[q]$ in which case $E(s')$, where $s'$ is equal to $\delta(s, \cmark)$, is $E(s)$ extended with $\{p,q\}$ and $N(s') = N(s)$.
    
    \item Otherwise, $t[p] \neq t[q]$ in which case $N(s')$ is equal to $N(s)$ extended with $\{p, q\}$ and $E(s') = E(s)$.
  \end{itemize}
  In both cases $(E(s'), N(s')) \models t$ holds by definition.
\end{proof}

Finally, we can show the correctness of using consistency automata to evaluate the consistency of a given term w.r.t. partitions in $P$.

\begin{thm}\label{theorem:ca:correct}
  Let $M = (\states, \delta, L, s_0)$ be the result of $\constructca(P, \selectca)$.
  Consider an arbitrary term $t$ and suppose that $\evalca(M,s_0,t)=P'$.
  Then for all $P_j \in P$ it holds that $P_j \in P'$ iff the term $t$ is consistent w.r.t. $P_j$.
\end{thm}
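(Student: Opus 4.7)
The plan is to trace the evaluation of $\evalca(M, t, s_0)$ to some final state $s_f$, invoke Lemma~\ref{lem:ca:consistency} to learn that $(E(s_f), N(s_f)) \models t$, and then read off both directions of the equivalence directly from the construction-time labelling of $s_f$.

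First I would argue that the evaluation always reaches a final state. Every non-final state has both a $\cmark$- and an $\xmark$-transition defined, so $\evalca$ never gets stuck, and by the same strict-decrease argument used in Lemma~\ref{lem:ca:construct_terminates}, the set $\work$ shrinks at each step (the freshly added pair lands in $E$ or $N$ and cannot reappear in any future $\work$), so we terminate at some $s_f$ with $\work(s_f) = \emptyset$. At this point the construction sets $\statelabelfin(s_f) = \{P_i \in P \mid \neg\exists C \in P_i : \exists \{p,q\} \in N(s_f) : p, q \in C\}$, which is exactly $P'$. Moreover, because $\work(s_f) = \emptyset$, every pair $\{p,q\}$ contained in a class of any $P_i \in P'$ must already lie in $E(s_f)$.

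For the $(\Rightarrow)$ direction, assume $P_j \in P'$ and let $C \in P_j$. Every pair $\{p,q\} \subseteq C$ belongs to $E(s_f)$, so by Lemma~\ref{lem:ca:consistency} we have $t[p] = t[q]$, giving consistency of $t$ with $C$ and hence with $P_j$. For the $(\Leftarrow)$ direction, suppose $t$ is consistent with $P_j$ but $P_j \notin P'$. Then the filter excludes $P_j$, so some class $C \in P_j$ contains a pair $\{p,q\} \in N(s_f)$. By Lemma~\ref{lem:ca:consistency} we then have $t[p] \neq t[q]$, contradicting consistency of $t$ with $C$.

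The only real subtlety is in the $(\Rightarrow)$ direction, where one needs equality at \emph{every} pair of a class rather than along some spanning subset that is then closed under transitivity. This is handled cleanly by the definition of $\work$, which enumerates all pairs $\{p,q\}$ that sit inside a class of some partition in $P'$; the termination condition $\work = \emptyset$ at $s_f$ therefore forces all such pairs into $E(s_f)$, without having to argue separately that $E(s_f)$ is closed under the class structure. Beyond that observation, the proof is just bookkeeping on the construction-time quantities $E(\cdot)$, $N(\cdot)$, and the filter defining $P'$, which Lemma~\ref{lem:ca:consistency} faithfully mirrors as actual subterm comparisons along the evaluation path.
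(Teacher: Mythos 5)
Your proposal is correct and follows essentially the same route as the paper's proof: reach a final state, invoke Lemma~\ref{lem:ca:consistency} to interpret $E(s_f)$ and $N(s_f)$ as actual (in)equalities of subterms of $t$, use $\work(s_f)=\emptyset$ to force every pair inside a class of a surviving partition into $E(s_f)$ for the forward direction, and use the $N$-based filter defining the final label for the converse. The only difference is that you spell out termination of the evaluation explicitly, which the paper only sketches in the surrounding text.
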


\begin{proof}
  We have already shown termination of the construction procedure in Lemma~\ref{lem:ca:construct_terminates}.
  Let $P'$ be the set of partitions returned by $\evalca(M, t, s_0)$, let $P_i \in P$ be any partition and $\evalca(M, t, s_0) = \evalca(M, t, s_n)$ for some final state $s_n \in \statesfin$.
  By Lemma~\ref{lem:ca:consistency} it holds for all $\{p, q\} \in E(s_n)$ that $t[p] = t[q]$ and for all $\{p, q\} \in N(s_n)$ that $t[p] \neq t[q]$.
  \begin{itemize}[align=left]
    \item[$\implies$)] Assume that $P_j \in P'$.    
      For all $p, q$ such that $\{p, q\} \subseteq\in P_j$ it holds that $\{p, q\} \in E(s_n)$ as $\work(s_n)$ is equal to $\emptyset$ for $s_n$ to become a final state in the construction.
      Therefore, for all $p, q \in C$ for consistency class $C \in P_j$ it holds that $t[p] = t[q]$ and as such $t$ is consistent w.r.t. $P_j$.
    
    \item[$\impliedby$)] Assume that term $t$ is consistent w.r.t. $P_j$.
      Proof by contradiction, assume that $P_j \notin P'$.
      As such, there is a position pair $\{p, q\} \subseteq\in P_j$ such that $\{p, q\} \in N(s_n)$.
      However, then it follows that $t[p] \neq t[q]$, from which we conclude that $t$ can not be consistent w.r.t.~$P_j$.
      \qedhere
  \end{itemize}
\end{proof}

\subsection{Efficiency}

Similarly to APMAs we can consider the space and time measurements for CAs.
Given a CA $M = (\states, \delta, L, s_0)$ and a term $t$ we define the \emph{evaluation depth}, denoted by $\evaldepth(M, t)$, as the number of recursive $\evalca$ calls performed to reach the final state.
The size, denoted by $\length{M}$, is given by the number of states $\length{\states}$.
Note that here we refer to the total number of states for its size, instead of only the number of final states or equivalently its breadth.
We also define the notion of relative efficiency for CAs.

\begin{defi}
  Given two consistency automata $M = (\states, \delta, \statelabel, s_0)$ and $M' = (\states', \delta', \statelabel', s_0')$ for a set of consistency partitions $P$.
  We say that $M \preceq M'$ iff for all terms $t \in \terms_\Sigma$ it holds that $\evaldepth(M, t) \leq \evaldepth(M', t)$.
\end{defi}

We present two ways to improve the time and space efficiency of consistency automata.
First of all, the selection function used for construction influences the size and the evaluation time
for the resulting CA as shown in Figure~\ref{ca:selection}.
Choosing the pair of positions $\{2, 3\}$ before $\{1, 2\}$ in this example yields a CA
that is both larger in size and less efficient.
Therefore, it makes sense to consider heuristics for the selection function in practice.
For instance, taking the selection function that always picks positions $\{p, q\}$ to minimise the size of $\work$ in both branches would immediately yield the right CA in Figure~\ref{ca:selection}.
In cases where there is no overlap between the partitions the choice would be arbitrary.

\begin{figure}
\begin{minipage}{.59\textwidth}
\usetikzlibrary{shapes}

\begin{tikzpicture}[description/.style={fill=white,inner sep=2pt}]
\scriptsize
\matrix (m) [matrix of math nodes, row sep=1em,
column sep=0.75em, text height=1.5ex, text depth=0.25ex]
{
 & |[draw,ellipse]| \{2, 3\} & & \\          
 & |[draw,ellipse]| \{1, 2\} & & & |[draw,ellipse]| \{1, 2\}\\
 & |[draw,ellipse]| \{3, 4\} & |[draw,ellipse]| \{3, 4\} & & |[draw,ellipse]| \{3, 4\} & |[draw,ellipse]| \emptyset \\
 |[draw,ellipse]| \{P_1, P_2\} & |[draw,ellipse]| \{P_2\} & |[draw,ellipse]| \{P_1\} & |[draw,ellipse]| \emptyset & |[draw,ellipse]| \{P_1\} & |[draw,ellipse]| \emptyset\\
};

\path[-stealth]
 (m-1-2) edge node [left]        {$\cmark$} (m-2-2)
 (m-1-2) edge node [above right] {$\xmark$} (m-2-5)
 (m-2-2) edge node [left] {$\cmark$} (m-3-2)
 (m-2-2) edge node [above right] {$\xmark$} (m-3-3)
 (m-3-2) edge node [above left] {$\cmark$} (m-4-1)
 (m-3-2) edge node [left] {$\xmark$} (m-4-2)
 (m-3-3) edge node [left] {$\cmark$} (m-4-3)
 (m-3-3) edge node [above right] {$\xmark$} (m-4-4)
 (m-2-5) edge node [left] {$\cmark$} (m-3-5)
 (m-2-5) edge node [above right] {$\xmark$} (m-3-6)
 (m-3-5) edge node [left] {$\cmark$} (m-4-5)
 (m-3-5) edge node [above right] {$\xmark$} (m-4-6)
;
\end{tikzpicture}
\end{minipage}
\begin{minipage}{.4\textwidth}
\usetikzlibrary{shapes}

\begin{tikzpicture}[description/.style={fill=white,inner sep=2pt}]
\scriptsize
\matrix (m) [matrix of math nodes, row sep=1em,
column sep=0.75em, text height=1.5ex, text depth=0.25ex]
{
 & |[draw,ellipse]| \{1, 2\} & & \\          
 & |[draw,ellipse]| \{2, 3\} & |[draw,ellipse]| \emptyset  & \\
 & |[draw,ellipse]| \{3, 4\} & |[draw,ellipse]| \{3, 4\} & \\
 |[draw,ellipse]| \{P_1, P_2\} & |[draw,ellipse]| \{P_2\} & |[draw,ellipse]| \{P_1\} & |[draw,ellipse]| \emptyset \\
};

\path[-stealth]
 (m-1-2) edge node [left]        {$\cmark$} (m-2-2)
 (m-1-2) edge node [above right] {$\xmark$} (m-2-3)
 (m-2-2) edge node [left] {$\cmark$} (m-3-2)
 (m-2-2) edge node [above right] {$\xmark$} (m-3-3)
 (m-3-2) edge node [above left] {$\cmark$} (m-4-1)
 (m-3-2) edge node [left] {$\xmark$} (m-4-2)
 (m-3-3) edge node [left] {$\cmark$} (m-4-3)
 (m-3-3) edge node [above right] {$\xmark$} (m-4-4)
;
\end{tikzpicture}      
\end{minipage}
\caption{Two CAs for the partitions $P_1 = \{\{1,2\}, \{3, 4\}\}$ and $P_2 = \{\{1, 2, 3\}\}$. The CA on the left chooses $\{2, 3\}$ first. However, as shown on the right selecting $\{1,2\}$ first removes both partitions, and leads to a smaller CA.}\label{ca:selection}
\end{figure}
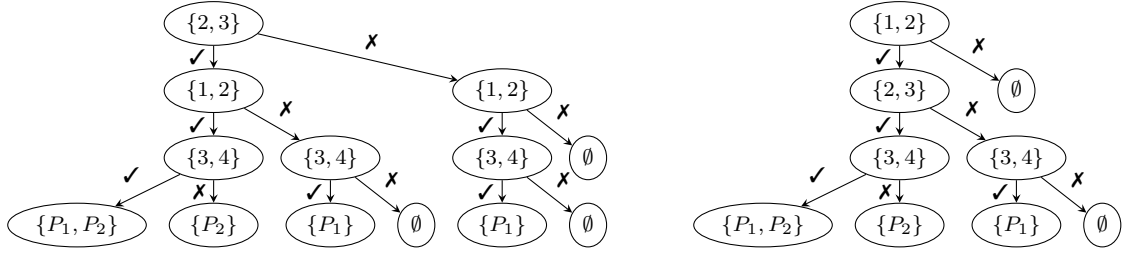 

Changing the selection function by itself does not necessarily result in the most relatively efficient CA.
If we consider Figure~\ref{figure:example_ca} again, we can observe that the resulting automaton is not optimal, despite being the smallest w.r.t. the selection function, because some of the final states are not reached during evaluation of any term.
For example, the final state labelled with $\{P_1, P_2\}$ is not reachable, because any term $t \in \terms_\Sigma$ that satisfies $t[1] = t[2]$ and $t[1] = t[3]$ can not have that $t[2] \neq t[3]$ by the transitivity of term equality.
Therefore, we can consider removing these states such that the evaluation depth of terms
where evaluation reaches these states is lower.
This is essentially the notion of redundancies that was previously defined for APMAs.

Given a CA $M = (\states, \delta, \statelabel, s_0)$ and a non-final state $s \in \statescons$ we give the following conditions for its redundancy.
\begin{defi}[Redundancy for CAs]
  Let $M = (\states, \delta, \statelabel, s_0)$ be a CA.
  \begin{itemize}
  \item A consistency state $s$ is $\cmark$-redundant iff for all terms $t$, whenever $\evalca(M, t, s_0) = \evalca(M, t, s)$, then $\evalca(M, t, s) = \evalca(M, t, \delta(s, \cmark))$.
    
   \item A consistency state  $s$ is $\xmark$-redundant iff for all terms $t$, whenever $\evalca(M, t, s_0) = \evalca(M, t, s)$, then $\evalca(M, t, s) = \evalca(M, t, \delta(s, \xmark))$.
  \end{itemize}  
\end{defi}

The notion of \emph{dead} states does not apply for CAs, because we explicitly keep final states labelled with the empty set for easier definitions.

Redundant states can be removed from the automata without affecting the correctness of its evaluation in the following way.
A state $s$ that is $\cmark$-redundant can be \emph{removed} by updating $\delta$ such that the incoming transition $\delta(r, a) = s$, for some $r \in \states$ and $a \in \{\cmark, \xmark\}$, is updated to $\delta(s, \cmark)$.
A similar transformation of $\delta$ can be applied for states that are $\xmark$-redundant using $\delta(s, \xmark)$.
We can observe that such a removal reduces the evaluation depth for terms where evaluation reached this state by one and that the size of the CA is reduced by the number of states in the $\cmark$-branch (or $\xmark$-branch) respectively if states unreachable by the transition relation are removed.
Next, we prove that removal does not influence the correctness of evaluation.

\begin{lem}\label{lem:wellformed-redundant}
  Let $M = (\states, \delta, L, s_0)$ be any CA that is well-formed.
  Then the resulting CA $M'$ where a $\cmark$-redundant or $\xmark$-redundant state $v \in \states$ is removed remains well-formed.
\end{lem}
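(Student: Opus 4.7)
The plan is to reduce well-formedness of $M'$ to well-formedness of $M$ by showing that the set of terms whose evaluation reaches a given state $s$ is the same in both automata. I would handle the $\cmark$-redundant case first and remark that the $\xmark$-redundant case is entirely symmetric. Let $v$ be the removed state with $\statelabelcons(v) = \{p, q\}$, let $r$ be the unique predecessor of $v$ via some edge label $a \in \{\cmark, \xmark\}$, and let $s' = \delta(v, \cmark)$. In $M'$ the only change is that $\delta(r, a)$ now returns $s'$ instead of $v$, and $v$ together with the subtree rooted at $\delta(v, \xmark)$ are deleted. Importantly, for every state $s$ that still belongs to $M'$, the sets $E(s)$ and $N(s)$ are the same as those determined during the construction of $M$; removing $v$ does not alter them.

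Next, by induction on the length of the evaluation series in $M'$, I would show that for every term $t$ and every state $s$ with $\evalca(M', t, s_0) = \evalca(M', t, s)$, we also have $\evalca(M, t, s_0) = \evalca(M, t, s)$. The base case $s = s_0$ is immediate. For the inductive step, if the last transition taken in $M'$ is not the rewired edge, then the same transition is present in $M$ and the induction hypothesis applies directly. If it is the rewired edge from $r$ to $s'$, then by the induction hypothesis $t$'s evaluation in $M$ reaches $r$, then steps to $v$, and then, because $v$ is $\cmark$-redundant, continues to $\delta(v, \cmark) = s'$, which is precisely the state reached in $M'$.

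The lemma then follows almost immediately: if evaluation of $t$ in $M'$ reaches a state $s_n$, evaluation in $M$ also reaches $s_n$, so by well-formedness of $M$ we obtain $(E(s_n), N(s_n)) \models t$. Since the labels $E(s_n)$ and $N(s_n)$ are unchanged by the removal, this same inclusion holds in $M'$, establishing well-formedness.

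The main obstacle is keeping the bookkeeping around $E(s)$ and $N(s)$ precise, since these sets were originally defined in terms of the execution of $\constructca$ rather than as intrinsic features of the automaton. I would be explicit that after removal these values are inherited from the original construction of $M$, so they remain well-defined for every surviving state. Once that conceptual point is settled, the remainder is a short induction that relies only on the definition of $\cmark$-redundancy (respectively $\xmark$-redundancy).
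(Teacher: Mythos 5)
Your proof is correct and follows essentially the same route as the paper's: relate the evaluation sequences of $M'$ to those of $M$ through the removed state, note that the $E$ and $N$ values of surviving states are inherited unchanged from the construction of $M$, and invoke the redundancy of $v$ to justify that the rewired edge still yields $(E(s),N(s))\models t$. Your explicit induction on the evaluation series is a slightly more careful packaging of the paper's direct case split on sequences containing $v$, but the substance is identical, including the shared reliance on reading $\cmark$-redundancy of $v$ as guaranteeing that every term reaching $v$ actually satisfies $t[p]=t[q]$ rather than merely that the two branches return the same final label.
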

\begin{proof}
  The recursive calls form an \emph{evaluation series} $(s_0, a_0),\ldots,(s_n, a_n)$ for $s_i \in \states$ and $a_i \in \{\cmark, \xmark\}$ for $0 \leq i < n$ such that $\evalca(M, s_i, t) = \evalca(M, s_{i+1}, t)$ and $\delta(s_i, a_i) = s_{i+1}$.
  By well-formedness of $M$ we know, for all terms $t \in \terms_\Sigma$ and all evaluation series $(s_0, a_0),\ldots,(s_k, a_k) \in (\states \times \{\cmark, \xmark\})$ of $\evalca(M, s_0, t)$, that for all states $s_i$, with $0 \leq i \leq k$, it holds that $(E(s_i), N(s_i)) \models t$.
  Now, we only have to consider sequences that contain the state $v$ as the other evaluation sequences remain the same.
  Consider any such sequence and let $u$ be the state in that sequence such that $\delta(u, a) = v$, for some $a \in \{\cmark, \xmark\}$, and let $t$ be an arbitrary term.
  Let $\{p, q\}$ be the value of $\statelabelcons(v)$ then there are two cases to consider:
  
  \begin{itemize}
    \item $v$ is $\cmark$-redundant.
      It follows that $t[p] = t[q]$ for $\{p, q\} \gets \statelabelcons(v)$.
      All sequences such that $v$ occurs in it must contain exactly the pair $(v, \cmark)$ by definition of $\cmark$-redundancy.
      We conclude that $(E(u) \cup \{\{p,q\}\}, N(u)) \models t$ holds and the term remains consistent with all extensions to $E$ and $N$ for the remaining states in the sequence.
      
    \item $v$ is $\xmark$-redundant.
      Similarly, with the observation that $(E(u), N(u) \cup \{\{p,q\}\}) \models t$.
      \qedhere
  \end{itemize}  
\end{proof}

Using Lemma~\ref{lem:wellformed-redundant} and the fact that
removing redundant states does not change the labelling of
any state we have shown that $\evalca(M, s_0, t) = \evalca(M', s_0, t)$ for all $t$.
Instead of removing redundant states after constructing the CA we could also remove them on-the-fly during the construction.
Whenever a pair of positions $\{p, q\}$ \emph{would} result in a $\cmark$-redundant (or $\xmark$-redundant) state we could instead change the parameters $N$ and $E$ to avoid such a choice.
Namely, whenever choice $\{p, q\}$ would result in a $\cmark$-redundant state then we could update $E$ to become $E \cup \{\{p, q\}\}$, and similarly for parameter $N$ in case of a $\xmark$-redundant state.
This means that the construction essentially continues as if this choice had already been made, thus avoiding the creation of redundant states.

If we consider Figure~\ref{figure:example_ca} again it follows from transitivity that the left indicated state is $\cmark$-redundant and the right indicated state $\xmark$-redundant.
If the indicated states are removed then all states of the resulting CA are reachable, which could be argued for as a form of optimum.
For transitivity it is relatively straightforward to construct a procedure to identify and remove these states. 
However, it would be more interesting to devise a method that determines all redundant states.
For example, there can also be redundancies due to the fact that a term can never be equal to any of its subterms.
Later on, we see that even more redundancies can be observed from the interleaving of matching and consistency states.

\subsection{Time and Space Complexity}
For the time complexity we consider the number of comparisons performed to determine the consistency.
Given a set of partitions $P = \{P_1, \ldots, P_n\}$ we have already established
a worst-case time complexity for naive consistency checking as being
$\functionorder{\sum_{1 \leq j \leq n} \sum_{C \in P_j} \length{C} - 1}$
with a worst-case space complexity of $\functionorder{1}$.

We establish several upper and lower bounds on the space and time complexity for CAs.
The number of comparisons performed for a given term is determined by the evaluation depth, which is the measurement for time complexity.
The upper bound is then given by $\max_{t \in \terms_\Sigma} (\evaldepth(M, t))$.
Similarly, we could define the minimal evaluation depth as the lower bound.
However, since we include final states labelled with the empty set in the CA this would simply be $1$.
Therefore, we consider the lower bound to be the minimum evaluation depth reaching any final state labelled with at least one consistency partition.
The selection function is a parameter and thus we need to consider what CA will be used for the complexity analysis.
For the upper bound on time we consider the automaton where the bound is minimised w.r.t. all possible selection function, and similarly for the upper bound on space.
On the other hand, the lower bound for both the time and space is for \emph{any} selection function.

For the upper bound of time complexity of consistency automata we can show that each pair of positions is compared at most once.
Let $m$ be the number of unique position pairs in the given partitions, where each pair of positions is counted at most once.
First, we show that the worst-case time complexity of the consistency automata evaluation is tightly bounded by $\functionorder{m}$.
This is essentially the size of $\work$ for the first call to the construction procedure, which reduces in each recursive call.

\begin{lem}
  Let $P = \{P_1, \ldots, P_n\}$ be a set of partitions and let $M$ be the optimal CA resulting from $\constructca(P, \selectca)$ for some selection function $\selectca$.
  For any term $t \in \terms_\Sigma$ the evaluation depth $\evaldepth(M, t)$ is of complexity $\functionorder{m}$, where $m$ is the number of unique position pairs in $P$.
\end{lem}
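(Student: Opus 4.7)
The plan is to bound the evaluation depth by the size of the working set at the initial state, and then to observe that this size equals $m$. Concretely, let $(s_0, a_0), \ldots, (s_{k-1}, a_{k-1})$ be the evaluation series produced by $\evalca(M, t, s_0)$ with $s_k \in \statesfin$; then $\evaldepth(M, t) = k + 1$, so it suffices to show $k \leq m$.

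The first step is to establish a strict decrease of the working set along the evaluation path, that is, $\work(s_0) \supsetneq \work(s_1) \supsetneq \cdots \supsetneq \work(s_k)$. This essentially reuses the bookkeeping from the termination argument of Lemma~\ref{lem:ca:construct_terminates}. At state $s_i$ the construction procedure selects the pair $\{p,q\} = \statelabelcons(s_i)$ and, depending on which branch the evaluation takes, the successor state $s_{i+1}$ is defined with $E(s_{i+1}) = E(s_i) \cup \{\{p,q\}\}$ (on the $\cmark$-branch) or $N(s_{i+1}) = N(s_i) \cup \{\{p,q\}\}$ (on the $\xmark$-branch). Since $E$ and $N$ are only ever extended along a root-to-leaf path of the construction, and $P'$ can only shrink as $N$ grows, we obtain $\{p,q\} \in \work(s_i) \setminus \work(s_{i+1})$ and $\work(s_{i+1}) \subseteq \work(s_i)$.

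The second step is to compute $|\work(s_0)|$. The construction is initialised with $E = N = \emptyset$, so $P' = P$ and
\[
\work(s_0) = \{\{p,q\} \in \positions^2 \mid \{p,q\} \subseteq\in P_i \text{ for some } P_i \in P\},
\]
which is exactly the set of unique unordered position pairs that appear together inside some consistency class of some partition in $P$. Its cardinality is $m$ by definition. Combining the two steps, $k \leq |\work(s_0)| = m$, hence $\evaldepth(M, t) \in \functionorder{m}$.

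The main obstacle, admittedly a minor one, is clarifying that the strict-decrease argument transfers cleanly from construction time to evaluation time: the evaluation on input $t$ traverses a single root-to-leaf path of the construction tree, and the values $E(s_i)$ and $N(s_i)$ attached to the visited states during construction are exactly the pairs that the evaluation has respectively confirmed equal or distinguished by Lemma~\ref{lem:ca:consistency}. Once this correspondence is stated, the monotonicity bookkeeping on $\work$ applies verbatim, and the $\functionorder{m}$ bound follows without any case analysis on the selection function, so it holds a fortiori for the optimal CA.
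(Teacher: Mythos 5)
Your proof is correct and follows essentially the same route as the paper's: both bound the evaluation depth by the size of the initial work set (which is exactly the $m$ unique position pairs) and invoke the bookkeeping from the termination argument of Lemma~\ref{lem:ca:construct_terminates} to show each pair is selected at most once along any root-to-leaf path. Your version merely spells out the strict decrease of $\work$ along the evaluation series and the construction-to-evaluation correspondence in more detail than the paper does.
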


\begin{proof}
  Initially $\work = \{\{p, q\} \subseteq\in P'\}$ contains at most $m$ pairs by assumption.
  As shown in the proof of Lemma~\ref{lem:ca:construct_terminates} each choice of $\work$ can be made at most once by $\selectca$.
  Therefore, the length of the root to a final state of the resulting CA is at most $\functionorder{m}$, which is also an upper bound for number of comparisons.
\end{proof}

Furthermore, each recursive call leads to exactly two branches, which means that the size of any CA is bounded by $\functionorder{2^m}$.
The given bounds are also tight as we can construct the following example where the evaluation depth requires exactly $m$ comparisons, which in this example coincides with the number of comparisons in the naive approach.

\begin{exa}
  Let $P$ be a set of partitions $\{\{p, q\} \mid p,q \in \{1, \ldots, k\}\}$, which contains exactly $\frac{k(k-1)}{2}$ consistency classes.
  In each recursive call $\length{\work}$ decreases by exactly one, which means that it takes $\frac{k(k-1)}{2}$ comparisons to yield the set of partitions in the worst case.
  Then the automaton without removing redundant states contains $2^{\frac{k(k-1)}{2}}$ states.
\end{exa}

Therefore, it follows that the upper bound on time complexity of the consistency automata is tightly bounded up to $k^2$ position pairs, where $k$ is the number of positions.
Any additional partition, which necessarily only contains pairs of positions that have already been compared, can be inferred from the comparisons that have already been made.
Note that the upper bound on evaluation depth in this example cannot be reduced by removing redundant states, because every state on the path consisting of only $\xmark$-transitions does not contain redundant states.
Namely, it is fairly straightforward to show that there is a term $f(t_0, \ldots, t_n)$ where the subterms at two positions are equal and the subterms at any \emph{other} pair of positions are unequal.
Therefore, at any chain of $\xmark$-transitions it is still necessary to check whether the $\cmark$-transition should be taken and as such the maximum evaluation depth also remains strict for CAs where redundant states have been removed.

The lower bound on evaluation depth for CAs where redundant states have been removed is given by number of positions in the smallest consistency partition.
This bound is also strict, because we can simply give an example with a single consistency partition.
Removing redundant states by taking transivity into account is sufficient to ensure that the evaluation depth of the consistency automata never exceeds the worst-case time required for the naive consistency check.
However, the space required for using CAs is always higher than the naive consistency check.

\section{Adaptive Non-linear Pattern Matching Automata}\label{sec:anpma}

We have shown in Lemma~\ref{lem:consistency_check} that a naive matching algorithm for non-linear patterns can be obtained by using a linear matching function followed by a consistency check.  
In that case we have to check the consistency of all partitions returned by the linear matching function. 
However, as shown in the following example overlapping patterns can unify with the same prefix, but no term can match both patterns at the same time.

Consider the patterns: $\indexof{f(x, x)}{\ell_1}$ and $\indexof{f(a, b)}{\ell_2}$.
After renaming we obtain the following pairs $\ell_1 : (f(\dc_1, \dc_2), \{\{1, 2\}\})$ and $\ell_2 : (f(a, b)), \{\emptyset\})$.
Now, the resulting APMA has a final state labelled with both patterns as shown in Figure~\ref{fig:redundantapma}.
We can observe that the consistency check of positions one and two always yields false whenever the
evaluation of a term ends up in the final state labelled with $\{\ell_1, \ell_2\}$, because terms $a$ and $b$ are not equal.
Therefore, this comparison would be unnecessary.
  
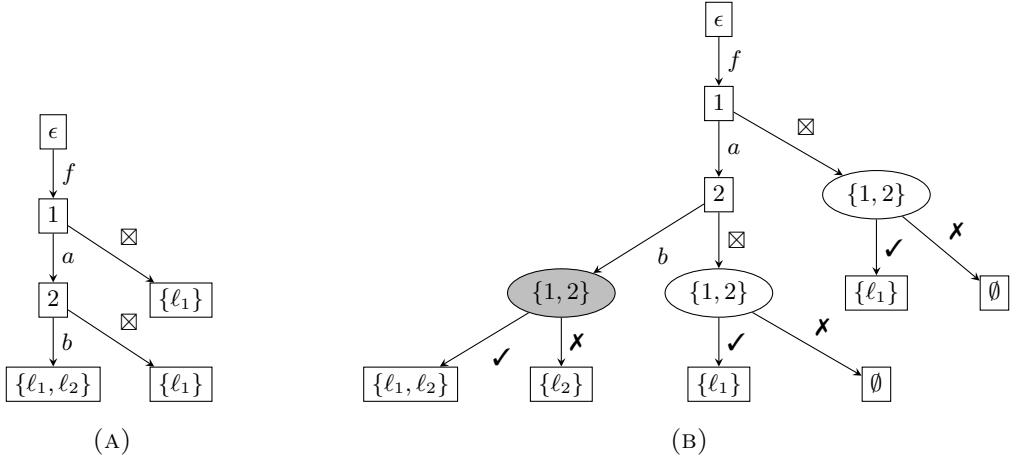
\begin{figure}
  \begin{subfigure}[t]{0.29\textwidth}
    \centering
    \begin{tikzpicture}[description/.style={fill=white,inner sep=2pt}]
\footnotesize
\matrix (m) [matrix of math nodes, row sep=2em,
column sep=2em, text height=1.5ex, text depth=0.25ex]
{
 |[draw,rectangle]| \epsilon & \\
 |[draw,rectangle]| 1 & \\
 |[draw,rectangle]| 2 & |[draw,rectangle]| \{\ell_1\} \\
 |[draw,rectangle]| \{\ell_1, \ell_2\} & |[draw,rectangle]| \{\ell_1\} \\
};

\path[-stealth]
  (m-1-1) edge node[right] {$f$} (m-2-1)
  (m-2-1) edge node[right] {$a$} (m-3-1)
  (m-2-1) edge node[above right] {$\dcneq$} (m-3-2)
  (m-3-1) edge node[right] {$b$} (m-4-1)
  (m-3-1) edge node[above right] {$\dcneq$} (m-4-2)
;
\end{tikzpicture}
    \captionsetup{justification=centering}
    \caption{}\label{fig:redundantapma}
  \end{subfigure}
  \begin{subfigure}[t]{0.7\textwidth}
    \centering
    \begin{tikzpicture}[description/.style={fill=white,inner sep=2pt}]
\footnotesize
\matrix (m) [matrix of math nodes, row sep=2em,
column sep=2em, text height=1.5ex, text depth=0.25ex]
{
 & & |[draw,rectangle]| \epsilon & \\
 & & |[draw,rectangle]| 1 & \\
 & & |[draw,rectangle]| 2 & |[draw,ellipse]| \{1,2\} \\
 & |[draw,ellipse,fill=lightgray]| \{1,2\} & |[draw,ellipse]| \{1,2\} & |[draw,rectangle]| \{\ell_1\} & |[draw,rectangle]| \emptyset  \\
 |[draw,rectangle]| \{\ell_1, \ell_2\} & |[draw,rectangle]| \{\ell_2\} & |[draw,rectangle]| \{\ell_1\} & |[draw,rectangle]|  \emptyset \\
};

\path[-stealth]
  (m-1-3) edge node[right] {$f$} (m-2-3)
  (m-2-3) edge node[right] {$a$} (m-3-3)
  (m-2-3) edge node[above right] {$\dcneq$} (m-3-4)
  (m-3-3) edge node[below right] {$b$} (m-4-2)
  (m-3-3) edge node[right] {$\dcneq$} (m-4-3)
  (m-3-4) edge node[right] {$\cmark$} (m-4-4)
  (m-3-4) edge node[above right] {$\xmark$} (m-4-5)
  (m-4-3) edge node[right] {$\cmark$} (m-5-3)
  (m-4-3) edge node[above right] {$\xmark$} (m-5-4)
  (m-4-2) edge node[below right] {$\cmark$} (m-5-1)
  (m-4-2) edge node[right] {$\xmark$} (m-5-2)
;
\end{tikzpicture}
    \captionsetup{justification=centering}
    \caption{}\label{fig:exampleanpma}
  \end{subfigure}
  \caption{The resulting APMA shown on the left and the corresponding ANPMA with a grey $\xmark$-redundant state on the right.}
\end{figure}
  
We could also consider an alternative where the consistency phase is performed first, but then we have the problem that whenever the given term is consistent w.r.t. partition $\{1, 2\}$ that matching on $f(a, b)$ should be avoided.
To avoid such redundancies, we propose a combination of APMAs and CAs to obtain a matching automaton for non-linear patterns called \emph{adaptive non-linear pattern matching automata}, abbreviated as ANPMAs.
The result is an automaton that has three kinds of states; match states of APMAs, consistency states of
CAs and final states, and two transition functions; one for match states and one for consistency states.

  \begin{defi}
  An adaptive non-linear pattern matching automaton (ANPMA) is a tuple
  $(\states, \delta, \statelabel, s_0)$ with
  \begin{itemize}
    \item $\states=\statesfsym\uplus\statescons\uplus\statesfin$ is a set of states where
    $\statesfsym$ is a set of match states, $\statescons$ is a set of consistency states
    and $\statesfin$ is a set of final states;

    \item $\delta=\deltafsym\uplus\deltacons$
        is a partial transition function with $\deltafsym:\statesfsym\times\bF\rightharpoonup\states$
        and $\deltacons:\statescons\times\{\cmark,\xmark\}\rightarrow\states$;

    \item $\statelabel=\statelabelfsym\uplus\statelabelcons\uplus\statelabelfin$ is a state labelling function with
        $\statelabelfsym:\statesfsym\rightarrow\positions$, $\statelabelcons:\statescons\rightarrow\positions^2$ and
        $\statelabelfin:\statesfin\rightarrow\pset{\terms}$;

    \item $s_0 \in \statesfsym$ is the initial state.
  \end{itemize}
\end{defi}

We only consider ANPMAs that have a tree structure rooted in $s_0$.
Given an ANPMA $M = (\states,\delta,\statelabel,s_0)$ and a term $t$
the procedure $\match(M, s_0, t)$ defined in Algorithm~\ref{alg:anpma_eval} defines the evaluation of the ANPMA.
It is essentially the combination of the evaluation functions for the APMAs and CAs depending on the current state.
  
\begin{algorithm}
\caption{Given a state $s$ of the ANPMA $M = (\states, \delta, \statelabel, s_0)$ and a term $t$, the following algorithm computes the pattern matches of $t$ by evaluating $M$ on $t$.}\label{alg:anpma_eval}
\begin{equation*}
  \footnotesize
  \match(M, t, s) = 
  \begin{cases}
    \statelabelfsym(s) &\text{ if } s \in \statesfin \\
                     
    \match(M, t, \deltafsym(s,f)) &\text { if } s \in \statesfsym \land \delta(s,f) \neq \bot  \\
               
    \match(M, t, \deltafsym(s,\dcneq)) &\text { if } s \in \statesfsym \land \delta(s,\dcneq) \neq \bot \land \delta(s,f) = \bot\\
             
    \emptyset &\text{ if } s \in \statesfsym \land \delta(s, \dcneq) = \delta(s,f) = \bot \\
    
    \match(M, t, \deltacons(s, \cmark)) &\text{ if } s \in \statescons \land t[p] = t[q] \\
    
    \match(M, t, \deltacons(s, \xmark)) &\text{ if } s \in \statescons \land t[p] \neq t[q] \\      
  \end{cases}
 \end{equation*}
 {\footnotesize \hfill where $f = \rootsym(t[\statelabelfsym(s)])$ and $\{p, q\} = \statelabelcons(s)$}
\end{algorithm}

The construction algorithm of the ANPMA is defined in Algorithm~\ref{alg:constructanpma}.
It combines the construction algorithm of APMAs (Algorithm~\ref{alg:constructapma}) and the construction algorithm for
CAs (Algorithm~\ref{alg:ca:construct}).
The parameters that remain the same value during the recursion are the original set
$\cL$, the result of renaming $\cL_r$ and the selection function $\select$.  Next, we have the ANPMA $M$, a state $s$
and finally the current prefix $\pref$ similar to the APMA construction and the sets of position pairs $E$ and $N$ as in
the consistency automata construction.

\begin{algorithm}
  \caption{
  Given an indexed family of patterns $\cL \subseteq \terms_\Sigma \times \cI$ and an indexed family of renamed patterns $\cL_r \subseteq  \pterms \times \pset{\pset{\positions}} \times \cI$ this algorithm computes an ANPMA for $\cL$. Initially it is called with $M=(\emptyset, \emptyset, \emptyset, s_0)$,
  the initial state $s=s_0$, the prefix $\pref=\dc_\emptypos$,
  and $E=N=\emptyset$.
  }\label{alg:constructanpma}
  \begin{algorithmic}[1]
    \footnotesize
    \Procedure{\constructanpma}{$\cL, \cL_r, \select, M, s, \pref, E, N$}
      \State $\cL_r' \gets \{\indexof{(\ell, P}{i}) \in \cL_r \mid
        \text{$\ell$ unifies with $\pref$}\, \land \neg\exists C \in P: \exists \{p, q\} \in N: p, q \in C\}$
            
      \State $\workF \gets \cF(\pref)$
      \State $\workC \gets \{\{p, q\} \in \positions^2 \mid
        \{p, q\} \subseteq\in P_i \land (\indexof{\ell}{i}, \indexof{P_i}{i}) \in \cL_r' \land
            \text{ $\pref[p]$ and $\pref[q]$ are defined} \} \setminus E$
            
      \If{($\workF=\emptyset$ and $\workC=\emptyset$) or ($\cL_r' = \emptyset$)}
        \State $M\gets M[\statesfin:=\statesfin\cup\{s\},\statelabel:=\statelabel[s \mapsto
        \{\indexof{\ell}{i} \in \cL \mid \exists \ell',P:\indexof{(\ell',P)}{i} \in \cL_r'\}]]$
      \Else
        \State $\nextstate \gets \select(\workF,\workC)$
        \If{$\nextstate=\pos$ for some position $\pos$}     
          \State $M \gets M[\statesfsym \gets (\statesfsym \cup \{s\}),\statelabelfsym \gets \statelabelfsym[s\mapsto\pos]]$
          \State $F \gets \{ f \in \bF \mid \exists(\indexof{(\ell,P)}{i}) \in \cL_r': \rootsym(\ell[\pos]) = f \}$
          \For{$f \in F$}
            \State $M \gets M[\delta:=\delta[(s,f) \mapsto s']]$
                    where $s'$ is a fresh unbranded state w.r.t. $M$
            \State $M \gets \constructanpma(\cL, \cL_r, \select, M,s',\pref[\pos/f(\dc_{\pos.1},\dots,\dc_{\pos.\ar(f)})], E, N )$
          \EndFor
          \If{$\exists(\indexof{(\ell,P)}{i})\in \cL_r': \exists\pos'\leq\pos:\rootsym(\ell[\pos'])\in\vars$}
            \State $M \gets M[\delta:=\delta[(s,\dcneq) \mapsto s']]$
                  where $s'$ is a fresh unbranded state w.r.t. $M$
            \State $M \gets \constructanpma(\cL, \cL_r, \select, M,s',\pref[\pos/\dcneq], E, N)$
          \EndIf
        \ElsIf{$\nextstate = \{p, q\}$ for some pair $\{p, q\} \in \positions^2$}
          \State $M \gets M[\statescons \gets (\statescons \cup \{s\}), \statelabelcons \gets \statelabelcons[s \mapsto \{p, q\}]]$
          \State $M \gets
            \constructanpma(\cL, \cL_r, \select, M[\deltacons:=\deltacons[(s,\cmark)\mapsto s']], s', \pref, E \cup \{\{p,q\}\}, N)$ 
            
            \hspace{\algorithmicindent} where $s'$ is an unbranded state w.r.t. $M$.
          \State $M \gets
            \constructanpma(\cL, \cL_r, \select, M[\deltacons:= \deltacons[(s, \xmark)\mapsto s']], s', \pref, E, N \cup \{\{p,q\}\})$ 
            
            \hspace{\algorithmicindent} where $s'$ is an unbranded state w.r.t. $M$.
        \EndIf
      \EndIf
      \Return $M$
      \EndProcedure
  \end{algorithmic}
\end{algorithm}

First we remove the terms that do not have to be considered anymore.
These are the elements $\indexof{(\ell,P)}{i}$ from $\cL_r$ such that $P$ is inconsistent due to the pairs in $N$
and $\pref$ does not unify with $\ell$.
Obtaining work for both types of choices is almost the same as before.  However, for $\workC$ we have added
the condition that the positions must be defined in the prefix to ensure that these positions are indeed defined when
evaluating a term.  The termination condition is that both $\workF$ and $\workC$ are empty, or that the set of
patterns $\cL_r'$ has become empty.  The latter can happen when the inconsistency of two positions removes a pattern,
which could still have other positions to be matched. 

The function $\select$ is a
function that chooses a position from $\workF$ or a pair of positions from $\workC$.
Its result determines the kind of state that $s$ becomes and as such also the outgoing transitions.
If a position is selected then $s$ will become a match state and
the construction continues as in Algorithm~\ref{alg:constructapma}.
Otherwise, similar to Algorithm~\ref{alg:ca:construct} two fresh states and two outgoing transition labelled with
$\cmark$ and $\xmark$ are created, after which the parameters $E$ and $N$ are updated.

At this point we can also see that ANPMAs are a natural extension of APMAs and CAs.
In the worst case we can first select only choices in $\workF$ and whenever $\workF$ is empty we construct only consistency states.
This would be exactly the same as performing a linear matching algorithm using APMAs followed by a consistency check using CAs, which is exactly the two-phase approach.
    
\subsection{Correctness}
The ANPMA construction algorithm yields an ANPMA that is suitable to solve the matching problem for non-empty finite sets
of (non-linear) patterns.
This can be shown by combining the efforts of Theorem~\ref{thm:correctnessapma} and
Theorem~\ref{theorem:ca:correct}.

Let $\cL$ be a finite non-empty indexed family of (non-linear) patterns and let $(\cL_r, P) = \rename(\cL)$.
Suppose that $\select:
\pset{\positions} \times \pset{\positions^2} \rightarrow \positions \uplus \positions^2$ is any function such that for all sets of positions $\workF$ and position pairs $\workC$ we have that $\select(\workF, \workC) \in \workF \uplus \workC$.

We extend the auxiliary definitions for APMAs as follows. A \emph{path} to $s_n$ is a sequence with both types of labels
$(s_0, a_0),\ldots,(s_{n-1}, a_{n-1}) \in \states \times (\bF_{\dcneq} \uplus \{\cmark, \xmark\})$ such that
$\delta(s_i, a_i) = s_{i+1}$ for all $i < n$.  A position $p$ is called \emph{visible} for state $s$ iff there is a pair
$(s_i, a_i)$ in $\pathof(s)$ such that $\statelabel(s_i).i = p$ for some $1 \leq i \leq \ar(f_i)$ or $\statelabel(s) =
\emptypos$.  A state $s$ is \emph{top-down} iff $s \in \statesfsym$ and $\statelabelfsym(s)$ is visible or $s \in
\statescons$ and both positions in $\statelabelcons(s)$ are visible. State $s$ is \emph{canonical} iff
there are no two match states in $\pathof(s)$ that are labelled with the same position.  Finally we say that an ANPMA is
\emph{well-formed} iff $\statelabel(s_0)=\emptypos$, and all states are top-down and canonical.

\begin{lem}\label{lem:anpma:properconstruction}
  The procedure $\constructanpma(\cL_r, P, \select, (\emptyset,\emptyset,\emptyset,s_0), s_0, \dc_{\epsilon}, \emptyset, \emptyset)$ terminates and yields a well-formed ANPMA.
\end{lem}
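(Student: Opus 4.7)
The plan is to follow the template of Lemmas~\ref{lem:properconstruction} and~\ref{lem:ca:construct_terminates} and combine the two arguments, since an ANPMA construction call is either a match-style call or a consistency-style call depending on what $\select$ returns. As in the APMA proof, I first observe that each recursive invocation creates a single transition to a fresh, unbranded state and that $F$ is finite (bounded by the finitely many head symbols appearing in $\cL_r'$ at position $\pos$), so the only nontrivial obligation is that the recursion is well-founded.

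For termination I would use a lexicographic measure $(\pref,\, E\cup N)$ on recursive calls, where the first component is ordered by the strict matching order $\strbelow$ restricted to prefixes built from positions occurring in $\cL$ (well-founded by the APMA argument), and the second component by strict superset inclusion inside the finite set of all position pairs $\{p,q\}$ with $p,q\in\bigcup_{\ell\in\cL}\{r\mid \ell[r]\text{ defined}\}$. In a match-style call the prefix is updated from $\pref$ to $\pref[\pos/f(\dc_{\pos.1},\dots,\dc_{\pos.\ar(f)})]$ or $\pref[\pos/\dcneq]$, which is strictly below the old $\pref$ under $\strbelow$, while $E$ and $N$ are unchanged; in a consistency-style call $\pref$ is unchanged but $\{p,q\}$ is freshly added to either $E$ or $N$ (the selection guarantees $\{p,q\}\notin E\cup N$, exactly as in Lemma~\ref{lem:ca:construct_terminates}), so the second component strictly grows within a finite universe. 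Both cases therefore strictly decrease the lexicographic measure, which is well-founded.

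For well-formedness I need to check that the initial state is a match state with label $\emptypos$ and that every state is top-down and canonical. Initially $\pref=\dc_\emptypos$, so $\workF=\{\emptypos\}$ and $\workC=\emptyset$ (since a pair $\{p,q\}\in\positions^2$ has $p\neq q$, so at least one of $p,q$ is not defined in $\dc_\emptypos$), forcing $s_0$ to be a match state labelled with $\emptypos$. The top-down property for match states is inherited from the APMA argument: a position $\pos\in\workF=\cF(\pref)$ can only be in the fringe because some earlier match state inserted the variable $\dc_\pos$ into the prefix when a function symbol was written at its parent. For consistency states, the key observation is that $\workC$ explicitly restricts to pairs $\{p,q\}$ with both $\pref[p]$ and $\pref[q]$ defined, and a position is defined in $\pref$ only if it has been exposed by a prior match state—so both positions in the label are visible. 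Canonicality of match states is preserved because once $\dc_\pos$ has been replaced in $\pref$, the position $\pos$ no longer belongs to $\cF(\pref)$ in any descendant call and can never be re-selected.

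The only genuinely new difficulty compared to the two separate correctness proofs is the termination measure, because match-style calls do not decrease $|E\cup N|$ and consistency-style calls do not decrease $\pref$; this is exactly what the lexicographic combination is designed to handle. Once termination is in place, the remaining bookkeeping (injectivity of $\deltafsym$ on fresh states, totality of $\deltacons$ on its two-letter alphabet, absence of incoming transitions to $s_0$) is inherited directly from the corresponding arguments in Lemmas~\ref{lem:properconstruction} and~\ref{lem:ca:construct_terminates}.
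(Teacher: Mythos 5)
Your proposal is correct and follows essentially the same route as the paper: the paper also establishes termination via a well-founded ordering on the triples $(\pref,E,N)$ (stated as a three-case disjunction --- prefix strictly decreases under $\strbelow$ with $E,N$ fixed, or $E$ strictly grows with $\pref,N$ fixed, or $N$ strictly grows with $\pref,E$ fixed --- which is just your lexicographic measure unfolded), and likewise defers well-formedness to the APMA argument plus the single new observation that consistency-state labels are only drawn from pairs whose positions are defined in the prefix. Your write-up merely spells out the well-formedness bookkeeping in more detail than the paper does.
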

\begin{proof}
  We only show that the recursion terminates. The rest is similar to the proof for Lemma~\ref{lem:properconstruction}, with the additional observation that positions in $P$ are only chosen when they are defined in the prefix.
  Given the parameters $\pref_1, E_1,N_1$
  and $\pref_2,E_2,N_2$ we can fix the ordering:
  \begin{align*}
    &(\pref_1\strbelow\pref_2 \wedge E_1=E_2 \wedge N_1=N_2) \ \vee \\
    &(\pref_1=\pref_2 \wedge E_1\subset E_2 \wedge N_1=N_2) \ \vee \\
    &(\pref_1=\pref_2 \wedge E_1=E_2 \wedge N_1\subset N_2)\,.
  \end{align*}
  The prefixes are again only defined on positions that are defined in patterns of $\cL$
  and the sets $E$ and $N$ are bounded by a finite product of positions, hence the ordering is well-founded.
  The recursive calls conform to to this ordering; therefore the recursion terminates.
\end{proof}

\noindent Let $M = (\states, \delta, \statelabel, s_0)$ be the ANPMA resulting from $\constructanpma(\cL, \select)$. Let $t \in \terms_\Sigma$ be a term and $\cL_t$ be equal to $\{\indexof{\ell}{i} \in \cL \mid \ell \leq t\}$. For every state $s \in \states$ we define $\cL(s)$ to be equal to $\{\indexof{\ell}{i} \in \cL \mid \indexof{(\ell', P)}{i} \in \cL'_r(s_i)\}$.  We show that the evaluation algorithm on $M$ satisfies a number of invariants. 

\begin{lem}\label{lem:anpma:subset}
  For all $s \in \states$ such that $\match(M, t, s_0) = \match(M,t,s)$ it holds that: (a) $(E(s_i), N(s_i)) \models t$, (b) $\cL_t \subseteq \cL(s)$ and (c) if $s \in \statesfin$ then $\statelabel(s_f)=\cL_t$.
\end{lem}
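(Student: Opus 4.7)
The plan is to prove claims (a), (b), and (c) simultaneously by induction on the length of the evaluation path from $s_0$ to $s$. The base case $s = s_0$ is immediate: $E(s_0) = N(s_0) = \emptyset$ makes (a) vacuous; every renamed pattern unifies with the trivial prefix $\dc_\emptypos$, giving $\cL(s_0) = \cL \supseteq \cL_t$ for (b); and $s_0 \in \statesfsym$, so (c) does not apply.

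For the inductive step, consider the last transition from the predecessor $s'$ to $s$ and case-split on its type. In the two match-state cases, $E$ and $N$ are unchanged, so (a) is preserved. For (b) in the $f$-transition case, let $p = \statelabelfsym(s')$; the transition was taken because $\rootsym(t[p]) = f$. If $\indexof{\ell}{i} \in \cL_t \cap \cL(s')$ has renamed pair $(\ell', P_\ell)$, then $\ell \below t$ together with $\ell =_\dc \ell'$ force $\rootsym(\ell'[p]) = \rootsym(t[p]) = f$ whenever $\rootsym(\ell'[p]) \in \bF$; otherwise $\ell'[p]$ is a position variable or undefined. Either way $\ell'$ still unifies with the expanded prefix. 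The $\dcneq$-case is analogous: the transition is taken because no function symbol transition matches $g = \rootsym(t[p])$, so no pattern in $\cL(s')$ has root $g$ at $p$. For $\ell \in \cL_t \cap \cL(s')$, any function symbol at $\ell'[p]$ would have to equal $g$, a contradiction; hence $\ell'[p]$ is a variable or undefined, and $\ell'$ unifies with $\dcneq$ at $p$.

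For the consistency transitions, the argument closely follows Lemma~\ref{lem:ca:consistency}. A $\cmark$-transition is taken only when $t[p] = t[q]$, so extending $E$ by $\{p, q\}$ preserves $(E, N) \models t$; and since $\cL$ is not filtered by $E$, (b) carries over unchanged. An $\xmark$-transition is taken only when $t[p] \neq t[q]$, again preserving (a). The patterns possibly dropped from $\cL$ are exactly those whose partition has a class containing both $p$ and $q$; but for any such $\indexof{\ell}{i} \in \cL_t$, Lemma~\ref{lem:rename} implies $t[p] = t[q]$, contradicting the $\xmark$-branch. So no element of $\cL_t$ is dropped, proving (b).

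The main obstacle is part (c). Combined with (b), it suffices to prove the reverse inclusion $\statelabel(s) \subseteq \cL_t$. Fix $\indexof{\ell}{i} \in \cL(s)$ with renamed pair $(\ell', P_\ell)$; by Lemma~\ref{lem:rename} it is enough to establish $\ell' \below t$ and that $t$ is consistent with $P_\ell$. The termination condition for final states forces $\workF(s) = \emptyset$, so $\pref(s)$ is variable-free. Since $\ell'$ unifies with $\pref(s)$, a $\dcneq$ at or above any position $p$ where $\rootsym(\ell'[p]) \in \bF$ would force $\ell'$ to have a variable strictly above $p$, incompatible with $\ell'[p]$ being defined; hence $\pref(s)[p] = \rootsym(\ell'[p])$. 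Each such function symbol in $\pref(s)$ was placed along the path by an $f$-transition, which was taken only when $\rootsym(t[p]) = f$, and Proposition~\ref{prop:altmatching} then yields $\ell' \below t$. For consistency, the same unification argument shows that every fringe position of $\ell'$ is defined in $\pref(s)$, so every pair $\{p, q\}$ in a class of $P_\ell$ would be in $\workC(s)$ unless it lies in $E(s) \cup N(s)$. The condition $\workC(s) = \emptyset$ therefore puts $\{p, q\}$ in $E(s) \cup N(s)$, and the fact that $\indexof{\ell}{i} \in \cL(s)$ rules out $N(s)$, so $\{p, q\} \in E(s)$ and part (a) gives $t[p] = t[q]$. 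Lemma~\ref{lem:rename} concludes $\ell \below t$.
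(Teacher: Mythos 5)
Your proposal is correct and follows essentially the same route as the paper: induction on the length of the path for (a) and (b) with a case split on the transition type, and for (c) the reverse inclusion $\statelabel(s)\subseteq\cL_t$ extracted from the final-state conditions $\workF(s)=\workC(s)=\emptyset$ together with Lemma~\ref{lem:rename}, Proposition~\ref{prop:altmatching} and part (a). The only differences are presentational — you argue the reverse inclusion directly rather than by contradiction and spell out details the paper delegates to Lemmas~\ref{lem:matchapmainvariant} and~\ref{lem:validmatches}, while leaving the (vacuous) $\cL(s)=\emptyset$ termination case implicit.
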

\begin{proof}
  Take an arbitrary term $t$.
  We prove the first two invariants by induction on the length of $\pathof(s)$.
  
  Base case, the empty path and as such $s = s_0$. 
  $E(s_0) = N(s_0) = \emptyset$ and $\cL_t \subseteq \cL$, and $\cL = \cL(s_0) = \cL(s)$, as such the statements hold vacuously.
  
  Inductive step.
  Let $s$ be an arbitrary state and suppose that the statements hold for $\match(A, t, s_0) = \match(A, t, s)$.
  Suppose $\match(A, t, s) = \match(A, t, s')$ for some $s' = \delta(s, x)$ such that $x \in (\bF_{\dcneq} \uplus \{\cmark, \xmark\}))$.
  Now, there are two cases to consider:
  \begin{itemize}
    \item $s \in \statescons$.    
      Let $\{p, q\}$ be the value of $\statelabelcons(s_k)$.
      Again, there are two cases to consider:
      \begin{itemize}
        \item $t[p] = t[q]$ in which case $E(s')$ is $E(s) \cup \{p,q\}$ and $N(s') = N(s)$.
          Therefore, $(E(s'), N(s')) \models t$ holds.  
          Furthermore, $\cL(s') = \cL(s)$ because also $\pref(s') = \pref(s)$.
        
        \item Otherwise, $t[p] \neq t[q]$ in which case $N(s')$ is equal to $N(s) \cup \{p, q\}$ and $E(s') = E(s)$.
        Therefore, $(E(s'), N(s')) \models t$ holds.
        Consider any $\indexof{\ell}{i} \in \cL(s)$ such that $\indexof{\ell}{i}  \notin \cL(s')$.
        From $\pref(s') = \pref(s)$ it follows that for $\indexof{(\ell', P)}{i} \in \cL_r$ it holds that $P$ is not consistent w.r.t. $t$ by observation that positions $\{p, q\} \subseteq\in P$ are included in $N$ and $t[p] \neq t[q]$.
        Therefore, by Lemma~\ref{lem:rename} it holds that $\indexof{\ell}{i} \notin \cL_t$.        
      \end{itemize}
  
    \item $s \in \statesfsym$.
       It holds that $E(s') = E(s)$ and $N(s') = N(s)$.
       Therefore, $(E(s'), N(s')) \models t$ remains true.
       Now, we can use the same argument as before to argue that any pattern removed must not unify with $\pref(s')$.
       Then the same arguments as given in Lemma~\ref{lem:matchapmainvariant} can be used to show that $\cL_t \subseteq \cL(s')$ holds.
  \end{itemize}
   
  Finally, if $s \in \statesfin$ from the fact that $\statelabel(s) = \cL(s)$ we know that $\cL_t \subseteq \statelabel(s)$.
  It only remains show that $\statelabel(s_f) \subseteq \cL_t$.
  There are two cases for this state to become a final state during construction:
  \begin{itemize}
    \item Both $\workC = \emptyset$ and $\workF = \emptyset$.
      Suppose for a contradiction that there is some $\indexof{\ell}{i} \in \statelabel(s_f)$ such that $\indexof{\ell}{i} \notin \cL_t$.
      It follows that $\ell \nleq t$, which means that for $\indexof{(\ell', P)}{i} \in \cL_r$ that $\ell' \not\below t$ or $t$ is not consistent w.r.t. $P$ by Lemma~\ref{lem:rename}.
      We show that both cases lead to a contradiction:
      \begin{itemize}
        \item Case $\ell' \not\below t$.
          This follows essentially from the same observations as Lemma~\ref{lem:validmatches}.
          
        \item Case $t$ is not consistent w.r.t. $P$.
          From the fact that $\pref(s_f)$ unifies with $t$ and that it is a ground term due to $\workF = \emptyset$ it follows that for all $p, q$ such that $\{p, q\} \subseteq\in P_i$ they are defined in $\pref(s)$ and therefore it holds that $\{p, q\} \in E(s)$.
          Therefore, for all $p, q \in C$ for consistency class $C \in P_i$ it holds that $t[p] = t[q]$ and as such $t$ is consistent w.r.t. $P_i$.  As such $\index{\ell}{i}$ is not an element of $\statelabel(s_f)$, contradicting our
          assumption.
      \end{itemize} 
    
    \item The set $\cL(s)$ is empty.
     In this case $\statelabel(s_f)$ is empty and $\statelabel(s_f) \subseteq \cL_t$ by definition.
     \qedhere
  \end{itemize}
\end{proof}

\begin{lem}\label{lem:anpma:uniquestate}
  If $\cL_t = \emptyset$ then $\match(M, t, s_0) = \emptyset$.
\end{lem}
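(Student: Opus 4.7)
The plan is to reduce this to the invariants already established in Lemma~\ref{lem:anpma:subset}, together with a case analysis of how the evaluation can terminate. First, I would argue that the evaluation $\match(M, t, s_0)$ halts: by Lemma~\ref{lem:anpma:properconstruction} the ANPMA $M$ is finite and tree-shaped, so each recursive call strictly descends in this tree, forcing termination. Inspecting Algorithm~\ref{alg:anpma_eval}, the only possible terminal configurations are (i) a final state $s_f \in \statesfin$, or (ii) a match state $s \in \statesfsym$ with $\deltafsym(s, f) = \bot$ for $f = \rootsym(t[\statelabelfsym(s)])$ and also $\deltafsym(s, \dcneq) = \bot$. A consistency state can never be terminal, because $\deltacons$ is total on $\statescons \times \{\cmark, \xmark\}$ by construction.

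In case (i), we have $\match(M, t, s_0) = \match(M, t, s_f) = \statelabelfin(s_f)$. Since the evaluation visits $s_f$, Lemma~\ref{lem:anpma:subset}(c) applies and yields $\statelabel(s_f) = \cL_t$. By our hypothesis $\cL_t = \emptyset$, hence $\match(M, t, s_0) = \emptyset$. In case (ii), the fourth clause of the definition of $\match$ directly gives $\match(M, t, s) = \emptyset$, and so $\match(M, t, s_0) = \emptyset$ as well. Both cases yield the desired conclusion.

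I do not anticipate a genuine obstacle here; the work was done in Lemma~\ref{lem:anpma:subset}, which handles the substantive bookkeeping about $\cL(s)$, $E(s)$, and $N(s)$ along the evaluation path. The only subtlety is making sure that the case analysis of terminal configurations is exhaustive, in particular noting that consistency states cannot be terminal because $\deltacons$ is defined on both labels $\cmark$ and $\xmark$ by the construction in Algorithm~\ref{alg:constructanpma}.
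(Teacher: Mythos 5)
Your proof is correct, but it is organised differently from the paper's. The paper does not go through part~(c) of Lemma~\ref{lem:anpma:subset}; instead it picks an arbitrary final state $s_f$ with $\statelabel(s_f)\neq\emptyset$, chooses some $\indexof{\ell}{i}\in\statelabel(s_f)$, and derives a contradiction with the hypothesis $\ell\not\below t$ by splitting, via Lemma~\ref{lem:rename}, into the case $\ell'\not\below t$ (handled with Proposition~\ref{prop:altmatching} and the path structure, as in Lemma~\ref{lem:validmatches}) and the case that $t$ is inconsistent with $P$ (handled with part~(a) of Lemma~\ref{lem:anpma:subset}, i.e.\ $(E(s_f),N(s_f))\models t$, together with $\workC(s_f)=\emptyset$). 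In effect the paper re-traces the argument that already appears inside the proof of Lemma~\ref{lem:anpma:subset}(c), whereas you simply cite that part and observe that any reached final state must then carry the label $\cL_t=\emptyset$. Your version is shorter and avoids the duplication; it also makes explicit something the paper leaves implicit, namely the exhaustive classification of terminal configurations of Algorithm~\ref{alg:anpma_eval} (a final state, or a stuck match state returning $\emptyset$ via the fourth clause; consistency states cannot be terminal since $\deltacons$ is total). The paper's statement ``$\match(M,t,s_0)\neq\statelabel(s)$ for all final states $s$ with $\statelabel(s)\neq\emptyset$'' only yields the lemma once that classification is in place, so your case analysis is a genuine completion rather than padding. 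The trade-off is that your argument is entirely parasitic on Lemma~\ref{lem:anpma:subset}(c), while the paper's is closer to self-contained, relying only on part~(a) of that lemma; both dependencies are legitimate since Lemma~\ref{lem:anpma:subset} precedes this one.
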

\begin{proof}
  We show that $\match(M, t, s_0) \neq \statelabel(s)$ for all final states $s$ for which $\statelabel(s) \neq \emptyset$.
  Let $s_f$ be an arbitrary final state such that $\statelabel(s_f) \neq \emptyset$ and pick some pattern $\indexof{\ell}{i} \in \statelabel(s_f)$.
  By assumption $\ell\not\below t$ and by Lemma~\ref{lem:rename} it holds for the pair $\indexof{(\ell', P)}{i} \in \cL_r$ that $\ell' \not\below t$ or $t$ is not consistent w.r.t. $P$.
  \begin{itemize}
    \item If $\ell' \not\below t$  then by Proposition~\ref{prop:altmatching} it follows that there is a position $p$ and a function symbol $f\in\bF$ such that
      $\hd(\ell[p])=f$ and $\hd(t[p])\neq f$.
      By Lemma~\ref{lem:finalstates} it must be that $\hd(\pref(s)[p])=f$,
      by which there must be a pair $(s_i,f)\in\pathof(s)$.
      Since $\match$ is a function we again have that $\match(M,t,s_0)=\match(M,t,s_i)=\match(M,t,s_f)$.
      However, by its definition we know that $\hd(t[p])=f$, which contradicts the assumption that $\indexof{\ell}{i} \in \statelabel(s_f)$.  
      
   \item If $t$ is not consistent w.r.t. $P$.   
    By Lemma~\ref{lem:anpma:subset} we know that $(E(s_f), N(s_f)) \models t$ and for all pairs $\{p, q\} \subseteq\in P$ it holds that $\{p, q\} \in E$ for $\workC$ to become empty, because all positions of pattern $\ell$ are defined in the prefix $\pref(s_f)$.
    As such $t$ must be consistent w.r.t. $P$, which contradicts the assumption that $\indexof{\ell}{i} \in \statelabel(s_f)$.  
    \qedhere
  \end{itemize}
 
\end{proof}

\begin{thm}
  Then $\lambda t.\match(M, t, s_0)$ is a matching function for $\cL$.
\end{thm}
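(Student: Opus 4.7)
The plan is to mirror the structure of the correctness argument for APMAs in Theorem~\ref{thm:correctnessapma}, now combining Lemma~\ref{lem:anpma:subset} and Lemma~\ref{lem:anpma:uniquestate} in place of the earlier lemmas for APMAs. Fix an arbitrary term $t$ and set $\cL_t = \{\indexof{\ell}{i} \in \cL \mid \ell \below t\}$. We need to show $\match(M,t,s_0) = \cL_t$.

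I would split on whether $\cL_t$ is empty. If $\cL_t = \emptyset$, then Lemma~\ref{lem:anpma:uniquestate} immediately gives $\match(M,t,s_0) = \emptyset = \cL_t$. If $\cL_t \neq \emptyset$, I would argue that the evaluation of $\match(M,t,s_0)$ terminates at some final state $s_f$, and then apply Lemma~\ref{lem:anpma:subset}(c) to conclude $\statelabel(s_f) = \cL_t$. Since the defining equation of $\match$ forces $\match(M,t,s_f) = \statelabel(s_f)$ whenever $s_f \in \statesfin$, this yields the desired equality.

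The main obstacle is the intermediate step of showing that the evaluation really does reach a final state when $\cL_t$ is non-empty, rather than stalling in the $\emptyset$-returning clause of Algorithm~\ref{alg:anpma_eval}. The evaluation is a deterministic walk along $\delta$; consistency-state transitions are total, and final states terminate evaluation, so the only way to return $\emptyset$ is at a match state $s \in \statesfsym$ with $\delta(s,f) = \delta(s,\dcneq) = \bot$, where $f = \hd(t[\statelabelfsym(s)])$. At every visited state the invariant $\cL_t \subseteq \cL(s)$ from Lemma~\ref{lem:anpma:subset}(b) holds, so one can pick some $\indexof{\ell}{i} \in \cL_t \cap \cL(s)$ and inspect $\hd(\ell[\statelabelfsym(s)])$: if it is a function symbol, it must coincide with $f$ (since $\ell \below t$), and the construction creates a transition for it; if it is a variable (or $\statelabelfsym(s)$ lies below a variable position of $\ell$), then the guard for the $\dcneq$-branch in Algorithm~\ref{alg:constructanpma} is satisfied, so the $\dcneq$-transition exists. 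Either way, $\delta$ provides an outgoing transition, and the $\emptyset$-case never fires. Combined with termination of the walk (each step strictly reduces the well-founded measure used in Lemma~\ref{lem:anpma:properconstruction}), this shows the evaluation reaches a final state.

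Having secured the two cases, the theorem follows in one line by pattern-matching on whether $\cL_t$ is empty, exactly as Theorem~\ref{thm:correctnessapma} was concluded from its supporting lemmas. The bulk of the work is therefore absorbed into the auxiliary lemmas; the theorem itself is a clean case analysis.
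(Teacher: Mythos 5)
Your proposal is correct and follows essentially the same route as the paper: a case split on whether $\cL_t$ is empty, using Lemma~\ref{lem:anpma:uniquestate} for the empty case and Lemma~\ref{lem:anpma:subset} for the non-empty case. The only difference is that you explicitly justify why the evaluation must reach a final state when $\cL_t\neq\emptyset$ (via the invariant $\cL_t\subseteq\cL(s)$, totality of the consistency transitions, and the existence of an $f$- or $\dcneq$-transition at each visited match state), a step the paper simply asserts; your added argument is sound and mirrors Lemma~\ref{lem:uniquestate}(b).
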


\begin{proof}
  If $\cL_t$ is empty then by Lemma~\ref{lem:anpma:uniquestate} we get that $\match(M, t, s_0)=\emptyset=\cL_t$ as required.
  Otherwise, we have that $\match(M, t, s_0) = \match(M, t, s_f)$ for some final state $s_f$.
  Then by the definition of $\match$ and Lemma~\ref{lem:anpma:subset} we conclude
  $\match(M, t, s_0)=\match(M, t, s_f) = L(s_f) = \cL_t$.
\end{proof}

\subsection{The redundancy problem for ANPMAs}
We have seen the notion of $f$-redundant states and dead states for APMAs in Section~\ref{sec:apma}.
In particular Lemma~\ref{lem:apmanonredundancy} shows that the construction by Sekar et al. eliminates all such
redundancies.
We have also discussed the notion of $\cmark$-redundancy and $\xmark$-redundancy
in Section~\ref{sec:ca}.
We repeat the notions here.

\begin{defi}
We observe the following redundancies for ANPMAs.
\begin{itemize}
\item
Given $f\in\bF_\dcneq$, we say that a consistency state $s$ is \emph{$f$-redundant} iff
for all terms $t$, whenever $\match(M,t,s_0)=\match(M,t,s)$, then
$\match(M,t,s)=\match(M,t,\delta(s,f))$.

\item
A consistency state $s$ is $\cmark$-redundant iff, for all terms $t$,
whenever $\match(M,t,s_0)=\match(M,t,s)$, then $\match(M,t,s)=\match(M,t,\delta(s,\cmark))$.

\item
A consistency state $s$ is $\xmark$-redundant iff, for all terms $t$,
whenever $\match(M,t,s_0)=\match(M,t,s)$, then $\match(M,t,s)=\match(M,t,\delta(s,\xmark))$.

\item We say that a state $s$ is \emph{dead} iff
for all terms $t$, whenever $\match(M,t,s_0)=\match(M,t,s)$, then
$\match(M,t,s)=\emptyset$.
\end{itemize}
\end{defi}

There is a connection between the pairs in $E$ and $N$,
and the observed function symbols that are represented by $\pref$.
For example, if one has (partial) knowledge of the function symbols in $t_1$
and one also knows that $t_1=t_2$, then the same knowledge of the function symbols of $t_2$
follows by definition of term equality.
The other way around we have that full knowledge of the function symbols of both $t_1$ and $t_2$
makes the equality check between them redundant.
And even though checking the functions symbols of $t_1$ first and then comparing $t_1$ with $t_2$ to
determine a match,
we also need to be able to do this the other way around for completeness of the optimisation.

The removal of redundant states in ANPMAs is a difficult problem in its full generality.
We describe some observations and some examples in the remainder of this section,
but we leave the full problem open for future work.

Suppose that we have a recursive call in the construction with the parameters
$\cL_r, \pref, E$ and $N$.
Consider a position $p\in\cF(\pref)$.
Suppose that we can derive that a match state $s$ with $L(s)=p$,
would be $f$-redundant during the construction.
Then $p$ can be removed from $\workF$.
From Lemma~\ref{lem:apmanonredundancy} it follows that removing redundant positions
from $\workF$ is easy in the absence of consistency partitions.
But when we have partial information about term (un)equality recorded in $E$ and $N$,
then it could be that the function symbols of many positions in $\workF$ are already known.

\begin{exa}\label{ex:fsxsy}
Consider the patterns $\ell_1:f(x,x)$ and $\ell_2:f(s^{999}(x),s^{999}(y))$ where $s^{999}(x)$
denotes the application of 999 unary successor symbols to the variable $x$.
In the lucky case, we can detect that $f(t_1,t_2)$ satisfies $t_1=t_2$.
The best strategy to also detect a match for $\ell_2$ would be to check whether
$t_1$ matches $s^{999}(x)$.
This means that $t_2$ matches $s^{999}(y)$ as well,
so almost half the work of checking the function symbols can be skipped.
\end{exa}

Suppose we can derive that
a consistency state $s$ with $L(s)=\{p,q\}$ will be $\cmark$-redundant.
Then $(p,q)$ can be removed from $\workC$ as well.

\begin{exa}\label{ex:fgaga}
Consider the patterns $\ell_1:f(x,x)$, $\ell_2:f(g(a),y)$ and $\ell_3:f(x,g(a))$
and suppose that the selection function prioritises checking all function symbols.
Then a consistency check for the term $f(g(a),g(a))$ would be redundant,
since we already have full knowledge of all function symbols.
\end{exa}

Suppose we can derive that
a consistency state $s$ with $L(s)=\{p,q\}$ will be $\xmark$-redundant.
Then the partition that gave rise to the comparison of positions $p$ and $q$
can be removed since it is inconsistent.
Since this partition could have given rise to pairs in $\workC$,
and also to positions in $\workF$,
these sets should be computed again with a reduced set of partitions and patterns.

\begin{exa}
Consider again the patterns of Example~\ref{ex:fgaga}.
The term $f(g(a),b)$ only matches $\ell_2$.
A consistency check to rule out $\ell_1$ is redundant after checking all function symbols,
because a mismatching function symbols has already been detected.
\end{exa}


These ideas could be captured in a procedure that replaces the first three lines in the ANPMA construction.
We expect that it is possible to define a procedure that computes minimal sets $\workF$ and $\workC$,
along with a smaller pattern set $\cL_r'$ from which all derivable inconsistent patterns have been removed.
Note that to ensure correctness this procedure also has to adapt the parameters $\pref$, $E$ and $N$ internally in a similar way as the algorithm.

\begin{conj}\label{conj:optimal}
There is an algorithm \textsc{RemoveRedundancies} that takes the parameters $\cL_r, \pref, E$ and $N$,
and yields a (reduced) pattern set $\cL_r'\subseteq\cL_r$ and two sets $\workF$ and $\workC$ such that:
replacing lines 2-4 of $\constructanpma$ by
\[(\cL_r',\workF,\workC)\gets \textsc{RemoveRedundancies$(\cL_r,\pref,E,N)$}
\,,\]
makes
$\constructanpma$ yield a correct ANPMA without redundant states,
for every pattern set $\cL_r$ and every selection function $\select$.
\end{conj}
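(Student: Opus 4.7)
The plan is to construct \textsc{RemoveRedundancies} as a least fixed-point computation on the derivable consequences of $\pref$ together with the positive and negative equality constraints in $E$ and $N$. The central observation is that, upon reaching any state, these parameters carry an implicit theory about the input term $t$: a partial description of the function symbols of $t$ via $\pref$, equalities between subterms forced by $E$, and disequalities forced by $N$, all closed under reflexivity, symmetry, transitivity, and the congruence rule $u = v \Rightarrow \hd(u) = \hd(v)$. A position or position-pair is redundant precisely when the outcome of inspecting it is already entailed by this theory.

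First I would make this precise by defining a closure operator $\mathrm{Cl}(\pref, E, N)$ that extends $E$ to an equivalence relation over $\cF(\pref)$, propagates known head symbols across the equivalence classes, and flags inconsistency whenever a closed pair overlaps with $N$ or the propagated symbols disagree. From $\mathrm{Cl}$ one then defines $\workF$ as $\cF(\pref)$ restricted to positions whose head symbol is not already fixed by the closure; $\workC$ as those partition pairs $\{p,q\}$ whose equality is neither forced nor refuted by the closure; and $\cL_r'$ by dropping every pattern whose partition has become inconsistent with the closure. Because a freshly added equality may complete another partition, and a dropped pattern may render further positions irrelevant, the three outputs have to be computed as a joint fixed point; termination is immediate, since all three sets shrink monotonically inside a finite universe.

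For correctness I would adapt the invariants from Lemma~\ref{lem:anpma:subset} to the modified algorithm, verifying that $\cL_t \subseteq \cL(s)$ still holds at every state and that every final state yields exactly $\cL_t$. The key additional step is to show that any match or consistency check that \textsc{RemoveRedundancies} suppresses corresponds to an outcome already entailed by $\mathrm{Cl}$, so that the value of $\match(M,t,s)$ is preserved. Non-redundancy of the resulting automaton would then follow by contradiction: if any reachable state were $f$-, $\cmark$- or $\xmark$-redundant, the outcome it tests would be entailed by $\mathrm{Cl}(\pref(s), E(s), N(s))$, hence the corresponding position or pair would have been excluded from $\workF$ or $\workC$, contradicting its selection.

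The main obstacle, and the reason the authors leave this open, is the \emph{completeness} of the closure: pinning down all forms of entailment between prefix knowledge and partition constraints in a decidable and tractable way. Beyond the reflexive-symmetric-transitive closure and congruence, there are subtler interactions --- an implicit disequality between a term and a proper subterm thereof; disequalities inferred by head-symbol clashes propagated across equivalence classes; and the fact that declaring a partition inconsistent may reintroduce positions into $\workF$ that had previously been eliminated only on account of patterns now discarded. Establishing that some effective closure captures \emph{all} such entailments, rather than merely a useful sufficient fragment, is the delicate part, and it is conceivable that proving the conjecture requires isolating precisely the equational-logic fragment in which entailment is both complete for this setting and computable.
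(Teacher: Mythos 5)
First, note that the statement you are proving is stated in the paper as a \emph{conjecture}: the authors explicitly leave it open for future work and even remark that the underlying optimisation problem ``might just as well be \ldots undecidable altogether.'' There is therefore no proof in the paper to compare against, and the question is whether your proposal closes the gap the authors left. It does not. Your outline correctly identifies the right shape of a solution --- a closure operator $\mathrm{Cl}(\pref,E,N)$ computed as a fixed point, with $\workF$, $\workC$ and $\cL_r'$ read off from it, and a non-redundancy argument by contradiction --- but the contradiction step presupposes exactly the property that constitutes the conjecture. Redundancy of a state $s$ is a \emph{semantic} condition quantified over all terms $t$ with $\match(M,t,s_0)=\match(M,t,s)$, whereas your closure is a \emph{syntactic} entailment procedure on $(\pref,E,N)$. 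The claim that every semantically forced outcome is syntactically entailed by $\mathrm{Cl}$ is the completeness statement you yourself flag as ``the delicate part,'' and without it the argument that no reachable state is $f$-, $\cmark$- or $\xmark$-redundant does not go through. An honest sketch of the difficulty is not a proof of the conjecture.

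Two concrete instances of the incompleteness are worth naming. First, the set of terms reaching $s$ is constrained not only by the positive information in $\pref$, $E$ and $N$ but also by \emph{negative} head-symbol information: when a $\dcneq$-transition is taken at a match state, the prefix records only the opaque symbol $\dcneq$ at that position, losing the set $F$ of function symbols that were excluded there. A complete entailment check must propagate these exclusions across the equivalence classes induced by $E$ (if $t[p]=t[q]$ and $\hd(t[p])\notin F$, then $\hd(t[q])\notin F$), which your $\mathrm{Cl}$, as described, does not do. Second, the paper's own examples show that entailments involving the occurs-check ($t[p]\neq t[p.r]$ for $r\neq\epsilon$) and its interaction with congruence are needed; you list this among the ``subtler interactions'' but give no argument that your fixed point, or any effective procedure, exhausts them. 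Until one exhibits a decidable entailment relation and proves it coincides with the semantic redundancy condition for all pattern sets and all selection functions, the conjecture remains open.
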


Note that the comparison in Example~\ref{ex:fsxsy} could also be useful in the case of \emph{only} the pattern $\ell_2:f(s^{999}(x),s^{999}(y))$. However, the current technique does not allow us to choose arbitrary positions to compare.
Therefore, it could also be interesting to either extend the given patterns with new patterns to allow these choices, or to extend the technique to choose arbitrary positions for consistency checks.
Similarly, it might be useful to add a comparison state where one of the pairs is not a position, but rather a fixed term without variables.
This could be used to exploit constant time comparisons to further improve the efficiency of the matching procedure, which would be especially useful for switch statements over natural numbers. 

We now focus on a final example to show that identifying these redundant states can be non-trivial.
However, this example shows that interleaving the matching and consistency states can be advantageous in practice.
Consider the following patterns: $\ell_1: f(x, x)$, $\ell_2: f(x, f(x, y))$, $\ell_3: f(x, f(y, x))$, $\ell_4: f(f(x, y), x)$ and $\ell_5: f(f(y, x), x)$.
These patterns can occur as part of an optimisation where applications of an expensive operation $f$ can be avoided.
For example, if the $f$ operator implements set union then $f(x, x)$ represents the case where
the set union is computed of two equivalent sets.
Similarly $f(x, f(x, y))$ represents the case where set union if applied to a set that already contains the set $x$.

These patterns are non-linear and therefore we first apply the $\rename$ function to obtain the corresponding linear pattern and consistency partitions.
This result in the following renamed pattern set.

\begin{tabular}{l l}
  $\ell_1 : (f(\dc_1, \dc_2), \{\{1,2\}\})$ & \\ 
  $\ell_2 : (f(\dc_1, f(\dc_{2.1}, \dc_{2.2})), \{\{1, 2.1\}\})$ & $\ell_3 : (f(\dc_1, f(\dc_{2.1}, \dc_{2.2})), \{\{1, 2.2\}\})$ \\ 
  $\ell_4 : (f(f(\dc_{1.1}, \dc_{1.2}), \dc_{2}), \{\{1.1, 2\}\})$ & $\ell_5 : (f(f(\dc_{1.1}, \dc_{1.2}), \dc_{2}), \{\{1.2, 2\}\})$
\end{tabular}

If we consider the APMA for the set of linear patterns: $f(\dc_1, \dc_2)$, $f(\dc_1, f(\dc_{2.1}, \dc_{2.2}))$ and $f(f(\dc_{1.1}, \dc_{1.2}), \dc_2)$ then we can easily see that it takes three steps to reach any of the final states.
Namely, we check the positions $\epsilon$, one and two for the occurrence of the head symbol $f$ and then we obtain the subset of patterns that match, which can be either $\{\ell_1\}, \{\ell_1, \ell_2, \ell_3\}$, $\{\ell_1, \ell_4, \ell_5\}$ or $\{\ell_1, \ell_2, \ell_3, \ell_4, \ell_5\}$.
After that a consistency automaton or naive consistency check requires exactly one comparison for each element in the set to determine the set of consistent partitions.

However, the consistency automaton contains some redundancies.
For example, whenever the partition of $\ell_1$ is consistent w.r.t. a term $t$ it cannot be the case that any of the other partitions are consistent w.r.t. term $t$, because a term cannot be equal to any of its subterms.
Formally, this means that $t \neq t[i]$ for any natural number $i$, which can be proven by structural induction on terms.
We can generalise this statement such that for any positions $p$ and $r \neq \epsilon$ it holds that $t[p.r] \neq t[p]$.

This observation can be used to avoid the previously mentioned redundancy.
If for any term $t$ it holds that $t[1] = t[2]$ then only partition $\{1,2\}$ can be consistent with respect to $t$.
Furthermore, if we know that for a term $t[1.1] = t[2]$ then $t[1.1.1] = t[2.1]$ by definition of equality and since $t[1] \neq t[1.1.1]$ by the observation above it follows that $t[1] \neq t[2.1]$.
Therefore, we can conclude that if $t[1.1] = t[2]$ then only patterns $\ell_4$ and $\ell_5$ can match.
Similarly, if  $t[1] = t[2.1]$ then only patterns $\ell_2$ and $\ell_3$ can match.
Using these observations we can construct the ANPMA without redundancies that is shown in Figure~\ref{fig:practical_anpma}.

\begin{figure}[ht]
\begin{tikzpicture}[description/.style={fill=white,inner sep=2pt}]
\scriptsize
\matrix (m) [matrix of math nodes, row sep=1.5em,
column sep=1.5em, text height=1.5ex, text depth=0.25ex]
{
 & & |[draw,rectangle]| \epsilon & \\
 & & |[draw,ellipse]| \{1,2\} & \\
 & |[draw,rectangle]| \{\ell_1\} & |[draw,rectangle]| 1 & \\
 & & |[draw,ellipse]| \{1.1,2\} & |[draw,rectangle]| 2 \\
 & |[draw,ellipse]| \{1.2,2\} & |[draw,ellipse]| \{1.2,2\} & & |[draw,ellipse]| \{1,2.1\} \\
 |[draw,rectangle]| \{\ell_4, \ell_5\} & |[draw,rectangle]| \{\ell_4\} & |[draw,rectangle]| \{\ell_5\} & |[draw,rectangle]| 2 &  |[draw,ellipse]| \{1, 2.2\} & |[draw,ellipse]| \{1, 2.2\} \\
 & & |[draw,ellipse]| \{1, 2.1\} & |[draw,rectangle]| \{\ell_2, \ell_3\} & |[draw,rectangle]| \{\ell_2\} & |[draw,rectangle]| \{\ell_3\} & |[draw,rectangle]| \emptyset \\ 
  & |[draw,ellipse]| \{1, 2.2\} & |[draw,ellipse]| \{1, 2.2\} \\
 |[draw,rectangle]| \{\ell_2, \ell_3\} & |[draw,rectangle]| \{\ell_2\} & |[draw,rectangle]| \{\ell_3\} & |[draw,rectangle]| \emptyset \\
};

\path[-stealth]
  (m-1-3) edge node[right] {$f$} (m-2-3)
  (m-2-3) edge node[above left] {$\cmark$} (m-3-2)
  (m-2-3) edge node[right] {$\xmark$} (m-3-3)
  
  (m-3-3) edge node[right] {$f$} (m-4-3)
  (m-3-3) edge node[above right] {$\dcneq$} (m-4-4)
  
  (m-4-3) edge node[above left] {$\cmark$} (m-5-2)
  (m-4-3) edge node[right] {$\xmark$} (m-5-3)
  (m-4-4) edge node[above right] {$f$} (m-5-5)
  
  (m-5-2) edge node[above left] {$\cmark$} (m-6-1)
  (m-5-2) edge node[right] {$\xmark$} (m-6-2)
  (m-5-3) edge node[right] {$\cmark$} (m-6-3)
  (m-5-3) edge node[above right] {$\xmark$} (m-6-4)
  (m-5-5) edge node[right] {$\cmark$} (m-6-5)
  (m-5-5) edge node[above right] {$\xmark$} (m-6-6)
  
  (m-6-4) edge node[above left] {$f$} (m-7-3)
  (m-6-5) edge node[above left] {$\cmark$} (m-7-4)
  (m-6-5) edge node[right] {$\xmark$} (m-7-5)
  (m-6-6) edge node[right] {$\cmark$} (m-7-6)
  (m-6-6) edge node[above right] {$\xmark$} (m-7-7)
  
  (m-7-3) edge node[above left] {$\cmark$} (m-8-2)
  (m-7-3) edge node[right] {$\xmark$} (m-8-3)
  
  (m-8-2) edge node[above left] {$\cmark$} (m-9-1)
  (m-8-2) edge node[right] {$\xmark$} (m-9-2)
  (m-8-3) edge node[right] {$\cmark$} (m-9-3)
  (m-8-3) edge node[above right] {$\xmark$} (m-9-4)
  
;
\end{tikzpicture}
\caption{The pruned ANPMA for the patterns $\ell_1: f(x, x)$, $\ell_2: f(x, f(x, y))$, $\ell_3: f(x, f(y, x))$, $\ell_4: f(f(x, y), x)$ and $\ell_5: f(f(y, x), x)$.}\label{fig:practical_anpma}
\end{figure}

This ANPMA is both smaller in the number of states when compared to an APMA followed by individual CAs.
Furthermore, its evaluation depth for all terms that match the patterns $\ell_1, \ell_2$ and $\ell_3$ is strictly smaller than without interleaving.
For the patterns $\ell_4$ and $\ell_5$ the evaluation depth remains the same.
Therefore, this example shows that interleaving the two phases and removing redundancies can yield a more efficient non-linear matching procedure in practice.

\section{Conclusion and Future Work}\label{sec:conclusion}
In this paper, we presented a formal proof for the correctness of APMAs.
Furthermore, we introduced  as a 
deterministic automaton to perform the consistency checking, from which some
redundant states could be removed by taking the previous choices into account.
These two automata are then combined to obtain an ANPMA which could be evaluated by only performing comparisons and taking the
corresponding outgoing edge.

ANPMAs offer a formal platform to study the relations between linear pattern matching and consistency checking.
There are still some questions that have arisen from this work.
As mentioned in the previous section, the current ANPMA construction algorithm can contain redundant states.
We expect that there is an optimisation function as described in Conjecture~\ref{conj:optimal}
that takes care of this problem.
But it might just as well be the case that this problem is undecidable altogether.

Secondly we did not study selection functions in this work.
All three automaton construction algorithms in this paper are parametrised in a selection function
that decides for each node what will happen next.
We have shown that all constructions yield correct automata for every selection function,
with the side note that the selection indeed yields an element from its input set.
The size of all three kinds of automata depends heavily on the selection function that is used.
For APMAs some selection functions have already been studied in \cite{SekarRR95:adaptive}.

Finally, it would be interesting to implement this approach.
This work is a theoretical approach to ultimately reduce the number of steps that matching requires in for example a term rewrite engine.
However, many of the existing pattern matching problems do not support non-linear pattern matching as discussed in the introduction.
It would be interesting to find out whether exploiting $\cO(1)$ term equality checking is worth the construction time and in particular the identification of redundant states in practice.

\section*{Acknowledgements}

We would like to thank Jan Friso Groote, Bas Luttik and Tim Willemse for their feedback and discussion.
This work was supported by the TOP Grants research programme with project number 612.001.751 (AVVA),
which is (partly) financed by the Dutch Research Council (NWO).

\bibliographystyle{alpha}
\bibliography{bibliography}

\appendix

\section{Proof details of Section~\ref{sec:apma}}
\subsection{Proof details of Lemma~\ref{lem:matchapmainvariant}}
\begin{proof}
  By induction on the length of $\pathof(s)$.
  If there are no pairs in $\pathof(s)$ then it must be that $s=s_0$.
  From $\pref(s_0)=\dc_\epsilon$ it follows that $\cL(s_0)=\cL$.
  Then the base case follows from $\cL_t\subseteq\cL=\cL(s_0)=\cL(s)$.
  
  Let $s$ be an arbitrary state and suppose that $\match(M, t, s_0) = \match(M, t, s)$ and
  assume the induction hypothesis $\cL_t\subseteq\cL(s)$.
  Now suppose $\match(M, t, s)=\match(M, t, s')$
  where $s'=\delta(s,f)$ for some $f\in\bF_{\dcneq}$
  and let $\statelabel(s)=p$.
  \begin{itemize}
  \item If $f\in\bF$ then $\pref(s')=\pref(s)[p/f(\dc_{p.1},\dots,\dc_{p.\ar(f)})]$.
  By definition of $\match$ we know that $\hd(t[p])=f$.
  
  Let $\ell\in\cL_t$. We show that $\ell$ unifies with $\pref(s')$.
  We know that $\ell\below t$ by assumption.
  From the induction hypothesis it follows that $\ell$ unifies with $\pref(s)$.
  So there is a term $u$ such that $\ell\below u$ and $\pref(s) \below u$.
  Then we distinguish two cases.
  \begin{itemize}
  \item If $\ell[p']$ is a variable for some $p'\posleq p$ then
  $\ell$ unifies with $\pref(s')$.
  \item If $\hd(\ell[p])$ is a function symbol then by $\ell\below t$ it must be that
  $\hd(\ell[p])=f$, so $\ell$ unifies with $\pref(s')$.
  \end{itemize}
  
  \item If $f=\dcneq$ then $\pref(s')=\pref(s)[p/\dcneq]$.
  By definition of $\match$ we know that $\delta(s,\hd(t[p]))$ is undefined.
  
  From the construction algorithm we then know that there is no pattern $\ell\in\cL(s)$ such that
  $\hd(\ell[p])\in\bF$ and there is at least one pattern $\ell\in\cL(s)$ such that $\ell[p']$
  is a variable for some position $p'\posleq p$.
  
  Let $\ell\in\cL_t$.
  By induction hypothesis we know that $\ell$ unifies with $\pref(s)$.
  We show that $\ell$ unifies with $\pref(s')$ by showing that $\ell[p']=\dc_{p'}$
  for some position $p'\posleq p$.
  \begin{itemize}
  \item Suppose that $\ell[p]$ exists.
  Since $\ell\below t$ and $\hd(t[p])\neq\hd(\ell[p])$ it must be that $\ell[p]=\dc_p$.
  
  \item Suppose that $\ell[p]$ does not exist.
  Pick the lowest position $p'$ such that $p'\posle p$ and $\ell[p']$ exists
  and assume for a contradiction that $\hd(\ell[p'])=f$ for some function symbol $f$.
  Then it must be that
  $\hd(\pref(s)[p'])=f$ by the induction hypothesis.
  However, $\pref(s)[p]$ exists and from $p'\posle p$
  it follows that $\ell[p]$ has subterms of the function symbol $f$,
  which contradicts the assumption that $p'$ is the lowest position strictly higher than $p$.
  So $\ell[p']=\dc_{p'}$.
  \qedhere
  \end{itemize}
  \end{itemize}
\end{proof}

\subsection{Proof details of Lemma~\ref{lem:uniquestate}}
\begin{proof}
  \begin{enumerate}[a)]
  \item We show that $\match(M, t, s_0)\neq\statelabel(s)$ for all final states $s$.
  Let $s_f$ be an arbitrary final state and pick some pattern $\ell\in\statelabel(s_f)$.
  By assumption $\ell\not\below t$ and by Proposition~\ref{prop:altmatching} it follows that
  there is a position $p$ and a function symbol $f\in\bF$ such that
  $\hd(\ell[p])=f$ and $\hd(t[p])\neq f$.
  By Lemma~\ref{lem:finalstates} it must be that $\hd(\pref(s)[p])=f$,
  by which there must be a pair $(s_i,f)\in\pathof(s)$.
  Since $\match$ is a function we have $\match(M,t,s_0)=\match(M,t,s_i)=\match(M,t,s_f)$.
  Then, by definition of $\match$ we obtain $\hd(t[p])=f$, a contradiction.
    
  \item
    Let $\ell\in\cL_t$.
    We prove that for all $s$ such that $\match(M,s_0,t)=\match(M,s,t)$,
    we have that $\delta(s,\hd(t[\statelabel(s)]))$ or $\delta(s,\dcneq)$ is defined.
  
    Suppose that $\match(M,s_0,t)=\match(M,s,t)$.
    From Lemma~\ref{lem:matchapmainvariant} it follows that $\ell\in\cL(s)$.
    If $\hd(\ell[\statelabel(s)])=f$ for some function symbol $f$
    then the construction algorithm created an $f$-transition to a new state,
    by which $\delta(s,f)$ exists.
    Otherwise if $\hd(\ell[\statelabel(s)])$ does not exist
    then by $\ell\below t$ there must be a position $p\posle\statelabel(s)$ such that $\ell[p]=\dc_p$.
    In that case a $\dcneq$-transition is created and hence $\delta(s,\dcneq)$ exists.
  
    By definition of $\match$ we then have that $\match(M,s_0,t)$ cannot yield the empty set,
    so it must terminate in a final state.\qedhere
  \end{enumerate}
\end{proof}

\subsection{Proof details of Lemma~\ref{lem:validmatches}}
\begin{proof}
  Since $\statelabel(s_f)=\cL(s_f)$ we know
  that $\cL_t\subseteq\statelabel(s_f)$ by Lemma~\ref{lem:matchapmainvariant}.
  It only remains show that $\statelabel(s_f)\subseteq\cL_t$.
  Since $s_f$ is a final state we have that $\statelabel(s_f)=\{\ell\in\cL\mid\ell\below\pref(s_f)\}$.
  Suppose for a contradiction that there is some $\ell\below\pref(s_f)$ such that
  $\ell\not\below t$.
  Then there is a position $p$ such that $\hd(\ell[p])\in\bF$ and
  $\hd(t[p])\neq\hd(\ell[p])$.
  We have $\hd(\ell[p])=\hd(\pref(s_f)[p])$ by assumption.
  So, there is a pair $(s_i,f_i)$ in $\pathof(s_f)$
  such that $\statelabel(s_i)=p$.
  By definition of $\match$ we then have $\hd(t[p])=f_i=\hd(\ell[p])$,
  a contradiction.\qedhere
\end{proof}

\end{document}